\documentclass[11pt]{article} 

\usepackage{latexsym,amsmath,url,caption,epsfig}
\usepackage{textcomp}
\usepackage{amsfonts,euscript}
\usepackage{graphicx}
\usepackage{amsmath}
\usepackage{amssymb}
\usepackage{amsfonts}
\usepackage{amsthm}
\usepackage{mathrsfs}
\usepackage[all]{xy}
\usepackage{epstopdf}
\usepackage{graphics}
\usepackage{subfigure}
\usepackage{rotating}
\usepackage{appendix}
\usepackage{multirow,rotating}
\usepackage{url}
\usepackage{setspace}
\usepackage{listings}
\usepackage{color}
\usepackage{tikz}
\usepackage[breaklinks=true]{hyperref}
\usepackage[english]{babel}
\usepackage{datetime}
\usepackage{booktabs}
\usepackage{rotfloat}
\usepackage[numbers]{natbib}
\usepackage{algorithm}
\usepackage{algorithmicx}
\usepackage{algpseudocode}
\usepackage{framed}
\usepackage{authblk}

\graphicspath{{pic/}}


\DeclareMathOperator{\E}{\mathbb{E}}












\usepackage{endnotes}


\usepackage{natbib}
 \bibpunct[, ]{(}{)}{,}{a}{}{,}%

\usepackage{xcolor}
\hypersetup{
    colorlinks,
    linkcolor={red!50!black},
    citecolor={blue!50!black},
    urlcolor={blue!80!black}
}

\makeatletter
\newenvironment{breakablealgorithm}
  {
  \begin{center}
     \refstepcounter{algorithm}
     \hrule height.8pt depth0pt \kern2pt
     \renewcommand{\caption}[2][\relax]{
      {\raggedright\textbf{\ALG@name~\thealgorithm} ##2\par}%
      \ifx\relax##1\relax 
         \addcontentsline{loa}{algorithm}{\protect\numberline{\thealgorithm}##2}%
      \else 
         \addcontentsline{loa}{algorithm}{\protect\numberline{\thealgorithm}##1}%
      \fi
      \kern2pt\hrule\kern2pt
     }
  }{
     \kern2pt\hrule\relax
  \end{center}
  }
\usepackage{makecell}





\newenvironment{assumption'}[1]
  {%
   \addtocounter{assumption}{-1}%
   \begin{assumption}}
  {\end{assumption}}

\textwidth=6.6in
\textheight=8.9in
\headheight=0.0in
\oddsidemargin=0.0in
\headsep=0.0in
\topmargin=0.0in
\newtheorem{theorem}{Theorem}

\newtheorem{lemma}{Lemma}
\newtheorem{proposition}{Proposition}
\newtheorem{assumption}{Assumption}
\newtheorem{definition}{Definition}
\newtheorem{example}{Example}

\allowdisplaybreaks

\begin{document}


\title{Selecting the Best Optimizing System\footnote{We gratefully acknowledge helpful discussions with Steve Chick, Weiwei Fan, Shane Henderson, Jeff Hong, Peter Glynn, Guanghui Lan, and Barry Nelson. All errors are ours.}}


\author[1]{Nian Si\thanks{niansi@ust.hk}}
\author[2]{Yifu Tang\thanks{yifutang@berkeley.edu}}
\author[2]{Zeyu Zheng\thanks{zyzheng@berkeley.edu}}

\affil[1]{Department of Industrial Engineering and Decision Analytics,
Hong Kong University of Science and Technology}

\affil[2]{Department of Industrial Engineering and Operations Research, University of California, Berkeley, CA, USA}  


\maketitle

\begin{abstract}
We formulate  \textit{selecting the best optimizing system} (SBOS) problems and provide solutions for those problems. In an SBOS problem, a finite number of systems are contenders. Inside each system, a continuous decision variable affects the system's expected performance where the exact optimal choice is not accessible within finite number of samples. An SBOS problem compares different systems based on their expected performances under their own optimally chosen decision to select the best, without advance knowledge of the expected performances of the systems or the optimizing decision inside each system. We design easy-to-implement algorithms that adaptively choose a system and a choice of decision to evaluate the noisy system performance, sequentially eliminate inferior systems, and eventually recommend a system as the best after spending a user-specified budget. The proposed algorithms integrate the stochastic gradient descent method and the sequential elimination method to simultaneously exploit the structure inside each system and make comparisons across systems. For the proposed algorithms, we prove exponential rates of convergence to zero for the probability of false selection as the budget grows to infinity. We conduct three numerical examples that represent three practical cases of SBOS problems. Our proposed algorithms demonstrate consistent and stronger performances in terms of the probability of false selection over benchmark algorithms under a range of problem settings and sampling budgets.
\end{abstract}




\textbf{Keywords:} Best optimizing system, probability of false selection, exponential rate, stochastic gradient descent, sequential elimination
%



\section{Introduction}

The need to select a system with the best mean system performance among a number of different systems naturally arises in various decision-making problems. The decision maker is typically able to generate or collect unbiased noisy random samples of the expected performance for each system in contention. The task of selecting the best system in a statistically principled way is a fundamental research problem in several growing research areas. In the area of stochastic simulation, this research problem is referred to as \textit{Ranking and Selection} or \textit{Selecting the Best System}; see \cite*{kim2006selecting}, \cite*{chick2006subjective}, \cite*{hunter2017parallel} and \cite*{hong2021review} for comprehensive reviews and see \cite*{chick2005selection,lan2010confidence,waeber2010performance}, \cite*{luo2015fully}, \cite*{fan2020distributionally,shen2021ranking} for various applications in healthcare decisions, production management, financial risk evaluation and decisions. For most \textit{Selecting the Best System} problems, noisy random samples are generated from running costly stochastic simulations, where the simulation model is often built to represent real complicated systems or new systems that have yet to be developed. The task of selecting the best system also appears in experimental designs for clinical trials or A/B testing (see \cite*{johari2021always} and \cite*{chick2021bayesian} for example), where the noisy random samples are collected from running costly treatment experiments on individuals. Both the \textit{Selecting the Best System} literature and the clinical trials literature point back to \cite*{bechhofer1954single} and \cite*{bechhofer1995design}, and both literatures aim at selecting the best or better system in a statistically principled way. The two literatures share the same notion that samples are noisy and costly to generate or collect, despite of the difference on how the samples are generated or collected.

Most existing literature on selecting the best system (i.e., selecting a system with the best mean system performance) assumes that one has access to independent unbiased noisy samples of the system performance for each system in comparison. However, this access can be unavailable for problems where each system in comparison involves optimizing a decision inside the system. For instance, such a phenomenon arises in the following application examples.

\smallskip

\begin{itemize}
    \item\textit{Medication and Healthcare.} A pharmaceutical factory develops two new drugs for treating insomnia in specific patients. To choose which drug to produce and promote, it must compare their effectiveness (see, for example \cite*{erman2006efficacy,verweij2020randomized}). Each drug is treated as a “system,” where performance depends on the expected effect at its optimal dose. Since dosage is a continuous variable and the optimal level is unknown in advance, the decision-maker lacks an unbiased estimator of this maximum effect. Thus, selecting the best drug requires not only comparing drugs but also solving an inner optimization problem to determine each drug’s optimal dosage. 
    \item\textit{Simulation Optimization for Queuing Service Systems.}
    In non-stationary queuing service systems, the manager uses simulation optimization to design staffing plans under resource constraints. With $N$ staff members, the goal is to choose the plan that maximizes expected performance (e.g., revenue, reduced abandonment). One example of such decision is to set a price for the service (\cite*{kim2018value}, \cite*{lee2019pricing}). Each staffing plan is treated as a “system”, in which additional decisions—such as setting service prices—must also be optimized. Since the optimal price varies by plan and is not known beforehand, comparing staffing plans requires first optimizing pricing within each system, then selecting the plan with the best resulting performance.

    \item\textit{Data-driven Revenue Management and Product Selection.} A platform wants to select the best product from $N$ candidates to post and sell, but it does not know the demand distribution for each product. Each day, it can display only one product, observe its random demand, and treat that product as a “system”. In each system, the platform must also make an inventory decision to maximize expected profit, though the optimal decision is unknown and improves with more data. Given a budget of $T$ days, the challenge is to allocate sampling across products to identify the one with the best optimized expected profit.
\end{itemize}

\smallskip



Formally, these aforementioned applications motivate us to define and analyze a class of problems that we call \textit{selecting the best optimizing system} (SBOS). The SBOS problems have two layers of optimization. The outer-layer optimization involves a selection from a finite number of systems. For each system, there is an inner-layer optimization inside the system, where there is a continuous decision variable that affects the expected system performance. The inner-layer optimization decides the optimizing performance of each system by choosing the best decision variable inside the system. The outer-layer optimization selects the best system according to the  optimizing performance.

In this work, we consider a fixed-budget formulation of the SBOS problems. That is, there is a given budget of samples and one has the ability to sequentially decide how the samples are allocated to different systems. Once the sampling budget is exhausted, based on all observations, a recommendation needs to be made on which system has the best optimizing performance. The goal is to design easy-to-implement algorithms that are allowed to sequentially allocate the samples and end up with a recommendation on the best optimizing system. The metric to evaluate how good an algorithm is by the probability of false selection (PFS) given a fixed budget. The designed algorithms are desired to demonstrate good empirical performances and, enjoy a theoretical guarantee on the upper bound of PFS under a given budget.

A major challenge in designing algorithms for SBOS problems is that system comparisons depend on optimizing performance, which is unknown a priori. To know it exactly, the inner optimization must be fully solved—but with finite samples, unbiased estimates are generally impossible. Unlike classical best-system selection, where unbiased independent samples are available, SBOS only allows biased estimates. However, with intelligent sampling, the bias can be reduced. Thus, algorithm design must consider both how optimized a system already is and the variability of its performance.
Another challenge—and opportunity—comes from the structure of the inner optimization. Properties like convexity can guide the outer optimization, influencing sampling decisions. Effectively leveraging these structures poses difficulties but also enables stronger performance if incorporated into algorithm design.

We summarize our main contributions in the following subsection.

\subsection{Main Contributions}
First, motivated by applications in simulation optimization, data-driven stochastic optimization and medication decisions, we formulate a new class of problems named \textit{Selecting the Best Optimizing System} (SBOS). In SBOS problems, systems are compared based on their optimizing performances, which involve an inner-layer of optimization in addition to the standard selection optimization. We consider a fixed-budget formulation where the decision maker sequentially decides how to spend the sampling budget, based on sequentially observed sampling outcomes. 
The SBOS problems naturally incorporate two streams of settings where the sampling cost is because of expensive simulation and is because of expensive real experiments and data collection.


Second, we propose simple algorithms with exponential-rate performance guarantees: as the sample budget grows to infinity, the probability of false selection decays exponentially with an explicit positive rate. Unlike classical best-selection problems, where performance is estimated from i.i.d. samples, our setting requires inner-layer optimization to evaluate each system. To address this, our design combines stochastic gradient descent and sequential elimination, exploiting the structure of the inner optimization while making outer-layer comparisons. The algorithms carefully control bias and variance in performance estimation, and our analysis accounts for both while leveraging structural information. We prove that the exponential-rate guarantees hold for convex simulation optimization and broader data-driven stochastic optimization, covering key SBOS problem classes. We expect these results to extend to a wider range of simulation problems with both continuous and categorical variables.


 Third, we conduct comprehensive numerical studies for the SBOS problems, including three practical applications. The first application is an optimal staffing and pricing problem in a non-stationary queueing system. The second application is an optimal dosage finding problem in the selection of the best drug. The third application is a data-driven newsvendor problem in the selection of the best product. We compare our proposed algorithms to the uniform sampling method and the Optimal Computing Budget Allocation (OCBA) method. We demonstrate that our method achieve the lowest probability of false selection in all problem settings.


\subsection{Connections to Related Literature}

Our work is closely connected to the literature on fixed-budget ranking and selection (R\&S) problems. Instead of using the term R\&S, we adopt in this work the terminology of {selecting the best system}, which has been an equivalent or even slightly more precise notion when one does not rank the systems but only selects the best (see \cite*{kim2006selecting,hong2007selecting}). The optimal computing budget allocation (OCBA) procedure proposed by  \cite*{chen2000simulation} and its sequential version is among the most famous algorithm for fixed-budget R\&S problems. \cite*{glynn2004large} establishes a rigorous guarantee for the OCBA procedure using a large deviation principle. \cite*{wu2018analyzing} takes a closer look at sequential OCBA algorithms and demonstrate exponential decaying rate for the Probability of False Selection (PFS) as the budget goes to infinity. We refer to the references within \cite*{hunter2017parallel}, \cite*{wu2018analyzing} and \cite*{hong2021review} for comprehensive reviews of fixed-budget and fixed-confidence R\&S work. Besides frequentist approaches,  \cite*{frazier2009knowledge}, \cite*{chick2010sequential,chick2012sequential,ryzhov2016convergence}, \cite*{chen2019complete}, \cite*{russo2020simple}, \cite*{li2020context}, \cite*{eckman2022posterior}, \cite*{song2025optimizing} and references within for the use or discussion of Bayesian methods. We also refer to \cite{zhang2023asymptotically} for selecting top-$m$ alternatives.

We define a class of \textit{fixed-budget SBOS problems}.  SBOS is different from standard R\& S because unbiased samples of a system’s performance are unavailable---each system involves an inner optimization that cannot be exactly solved with finite samples. Still, the OCBA method used in R\&S can be adapted. For example, in simulation optimization with $K$ systems, each requiring an optimal price $p_i^*\in[0,1]$, one can discretize each system into $M$ subsystems with different prices, turning the two-layer SBOS problem into a standard R\&S problem with $KM$ systems where OCBA applies. This ``break-down-and-then-OCBA'' approach faces two issues: (1) for continuous inner variables, a fixed $M$ may exclude the true optimizer, and dynamically scaling $M$ complicates design; (2) it ignores structural information of the inner optimization (e.g., convexity), reducing to grid search. We instead propose a sequential elimination method that avoids system break-down and overcomes these challenges, supported by extensive experiments comparing it to OCBA.

Our work is also related to \cite*{fan2020distributionally}. They discuss the robust selection of the best system (RSB), where the probability distributions associated with each system are not exactly known but may come from a set consisting of a finite number of options. 
In \cite*{fan2020distributionally}, the best system is the one possessing the best worst-case performance. 
The algorithm design is different: their work is based on the indifference zone-free sequential procedure in \cite*{fan2016indifference} focused on a fixed-precision setting, while ours is based on the ``Successive Rejects" algorithm introduced in \cite*{gabillon2011multi} focused on a fixed-budget setting. 
Both their work and our work share the spirit of integrating the inner-layer optimization and the outer-layer selection to enhance algorithm performance, but from a different perspective. \cite{wang2024selection} considered a related setting but with different goals compared to our work.




\textbf{Notations.} We denote $[K]$ to be set of $\{1,2,\ldots,K\}$.  We sometimes use $[K]$ as an abbreviation for $\{1,2,\cdots,K\}$ when there is no ambiguity. Let $\lfloor\cdot \rfloor$ be the floor function. And $|\mathcal{A}|$ stands for the cardinality of the set $\mathcal{A}$. $\mathcal{N}(\mu,\sigma^2)$ and  $\mathrm{Poi}(\lambda)$ denote the normal distribution with mean $\mu$ and variance $\sigma^2$ and the Poisson distribution with rate $\lambda$, respectively. Let $\Pi_{\mathcal X}(x)$ be the projection of $x\in \mathbb R^d$ to $\mathcal X$ in the sense of Euclidean norm.

\section{Setting}
Suppose that a decision maker needs to select one from  $K$ systems, labeled as $1,2,\ldots,K$. We denote the \textit{optimizing performance} of the $i$-th system as $v_i$, which is defined as the optimal objective value of an inner-layer optimization. Specifically,
\begin{equation}
v_i = \sup_{x \in \mathcal{X}_i} \E[G_i(x)],
\label{eqn:v}
\end{equation}
in which $\mathcal{X}_i$ is a space (could be a function space) that represents the inner-layer optimization for system $i$, and $G_i(x)$ is a finite-variance random variable that represents the stochastic system performance under decision $x$ for the $i$-th system. The selection of the best system is to select the system with the best optimizing performance, formally given by
\[
\max_{i\in\{1,2,\cdots,K\}} v_i \,=\,\max_{i\in\{1,2,\cdots,K\}} \sup_{x \in \mathcal{X}_i} \E[G_i(x)].
\]
The decision maker has access to choose any $i$ and $x$ and draw a sample of $G_i(x)$.
We discuss two concrete and different settings as follows, which will be the main problem settings for algorithm design and analysis in this work.

\subsection{Simulation Optimization}
The optimizing performance of the $i$-th system, denoted by $v_i$, involves a simulation optimization problem as the inner-layer optimization. Specifically, the inner-layer optimization is given by
\begin{equation}
    v_i = \max_{x \in \mathcal{X}_i} f_i(x) \triangleq \max_{x \in \mathcal{X}_i} \E[F_i(x,\xi)], \label{eq:obj_so}
\end{equation}
in which $x$ denotes the choice of decision variable in a compact and convex set $\mathcal{X}_i$, $\xi$ summarizes all the system randomness, and $F_i(\cdot,\cdot)$ is a deterministic function that captures all the (complicated) system logic and outputs a system performance. The expected performance function $f_i(\cdot)$ is presumed to be continuous so that the maximum can be attained over a compact set. The goal of \textit{Selecting the Best Optimizing System} (SBOS) problem in this simulation optimization setting is to optimize
\[
\max_{i\in\{1,2,\cdots,K\}} v_i \,=\,\max_{i\in\{1,2,\cdots,K\}} \max_{x \in \mathcal{X}_i} \E[F_i(x,\xi)].
\]
We consider settings in which $f(x_i)= \E[F_i(x,\xi)]$ is unknown but can be estimated through expensive simulation samples $F_i(x,\xi)$, when $i$ and $x$ are both specified. For complicated stochastic systems, the most time consuming part often comes from the evaluation of the function $F_i(\cdot,\cdot)$, which summarizes all the complicated system logic and operational rules. In this context, generating one simulation sample refers to one function evaluation of $F_i(x,\xi)$, associated with one gradient evaluation of $\partial_x F_i(x,\xi)$, at a given choice $i$ and $x$. We consider a fixed-budget setting where a budget $T$ is defined as the total number of samples that can be used to generate independent function and gradient calls of $F_i(x,\xi)$'s, adding up over all $i$ and $x$. The budget can be sequentially spent, in the sense that one can decide where to spend the next sample after observing outcomes from all previous samples. After the budget is used up, one needs to decide which system has the best optimizing performance $\max_{i\in\{1,2,\cdots,K\}} v_i$. The goal is to design easy-to-implement algorithms that sequentially allocate simulation samples and eventually achieve provably small probability of false selection (PFS) after the budget $T$ is spent.

\subsection{Data-driven Stochastic Optimization}
The optimizing performance of the $i$-th system, denoted as $v_i$, involves a stochastic optimization problem as the inner-layer optimization, given by
\begin{equation}
v_{i}=\sup_{g\in \mathcal{F}_{i}}\mathbb{E}_{P^{i}}\left[ g(X)\right], \label{eq:data_driven_SO}
\end{equation}
where $\mathcal{F}_i$ can be parametric or non-parametric function classes, and $X$ denotes a general-dimensional random variable having distribution $P^i$. Different from the simulation optimization setting in Section 2.1, here the evaluation of function $g$ is not the bottleneck for data-driven stochastic optimization problems. However, the distributions $P^i$ for $i=1,2,\cdots,K$ are not known and need to be estimated from collecting real-world data samples. This setting notes that each data sample is \textit{costly to collect}, rather than that the computation or function evaluation is costly. Specifically, we consider scenarios where independent and identically distributed (i.i.d.) samples that come from the true unknown distribution $P^i$ for system $i$ can be collected at a cost. A budget $T$ represents the total number of i.i.d. samples that can be collected aggregated for all $K$ systems. The collection of one sample refers to obtaining one i.i.d. observation from the distribution $P^i$ for some $i$. Given the nature of \eqref{eq:data_driven_SO}, no unbiased estimator for $v_i$ is available given finite samples. The goal is to design easy-to-implement algorithms that sequentially decide which sample to collect and eventually decide which system achieves the best optimizing performance $\max_{i\in\{1,2,\cdots,K\}} v_i$.

We have now introduced two classes of SBOS problems - one class on simulation optimization (Section 2.1) and the other class on data-driven stochastic optimization (Section 2.2). In the rest of this work, we will present algorithm design and analysis for the class of simulation optimization problems in Section 3 and present algorithm design and analysis for the class of data-driven stochastic optimization problems in Section 4. We summarize that the key technical difference between these two settings are how the budget is counted and how one sample is defined. Such technical difference captures different sets of applications and demands algorithm design and analysis respectively.

\section{Algorithm and Analysis for Simulation Optimization}
\subsection{Algorithm for SBOS Simulation Optimization Problems}
In this section, we focus on the class of SBOS simulation optimization problems as formulated in Section 2.1. We present our algorithm which is named Sequential Elimination for Optimizing systems (SEO). The SEO algorithm integrates the stochastic gradient descent method in the inner layer and the sequential elimination method in the outer layer. The sequential elimination method is motivated by \cite*{audibert2010best}, \cite{karnin2013almost}, and \cite*{frazier2014fully}. 
For the outer layer, given the number of systems $K$, the basic idea is to divide the budget into $L = \lfloor \log_2(K) \rfloor $ phases. In each phase, roughly speaking, the algorithm evenly allocates the budget to each system that still remains considered. Within each phase, the budget that is allocated to each system is used to solve the inner layer optimization. For the inner layer optimization of a system, the algorithm performs stochastic gradient descent (SGD) using all the allocated budget and then obtains a (biased) estimator of the optimizing performance of that system. At the end of each phase, the algorithm eliminates the bottom half of systems. The elimination is based on the estimated optimizing performance for all the systems under consideration up to that phase. The full procedure of our proposed SEO algorithm is summarized in Algorithm \ref{alg:SE}.

It is evident that, with finite number of samples, the inner layer optimization cannot be completely solved, and the decisions recommended for the inner layer optimizations
are non-optimal. A major challenge for designing and analyzing the SEO algorithm is that we need to balance the bias (compared to the optimal) arising from non-optimal decisions and the variance of each random sample. As a further challenge, unlike theory for standard stochastic optimization problems, we need to estimate the optimal objective value rather than the optimal solution. This is because the comparison between systems is based on their optimal objective function value rather than the optimal choice of decision variable. Therefore, we need to carefully design and analyze the SGD method used in the algorithm and the corresponding estimators.

\bigskip

\begin{breakablealgorithm}
\caption{Sequential Elimination for Optimizing Systems (SEO) in Simulation Optimization}
\label{alg:SE}
\begin{algorithmic}[1]
\State \textbf{Input: } $T$, $K$, the initial choice of decision $x_{1,i}$ for each $i\in[K]$, and a step-size coefficient $\gamma_0$.
\State Set  $L = \lfloor\log_2(K) \rfloor$ and $ \mathcal{A}_1 = \{1,2,\ldots,K\}$.
\For{$\ell = 1,\ldots,L$}
    \State Let $T_\ell \gets \lfloor \frac{T}{L |\mathcal{A}_\ell|}\rfloor$.
    \State set $\gamma =T_\ell^{-1/2} \gamma_0$.
    \For{$i$ in $\mathcal{A}_\ell$}
    \For{$t=1,\ldots, T_\ell$}
    \State Set $x_{1,i}$ to be ending value of the previous iteration of the system $i$: $x_{T_{\ell -1},i}$.
    \State Run a simulation path to collect the system value $F_i(x_{t,i},\xi_{t,i})$  for $i \in \mathcal{A}_\ell$ and the
    \State \,\, associated stochastic gradient at point $x_{t,i}$, $G_i(x_{t,i},\xi_{t,i})$.
    \State Perform $x_{t+1,i} \gets \Pi_{\mathcal X_i}( x_{t,i}+\gamma G_i(x_{t,i},\xi_{t,i})).$
    \EndFor
    \State Compute $\hat{v}_i^{T_\ell} \gets\frac{1}{T_\ell} \sum_{t=1} ^{T_\ell} F_i(x_{t,i},\xi_{t,i})$.
     \EndFor
     \State For all $i\in\mathcal{A}_\ell$, sort the systems by $\hat{v}_i^{T_\ell} $ in a descending order.
   \State Let $\mathcal{A}_{\ell+1}$ contain the top $\lfloor |\mathcal{A}_\ell|/2 \rfloor$ systems in $\mathcal{A}_\ell$.
\EndFor
\State \textbf{Output: } the remaining system in $\mathcal{A}_{L+1}$.
\end{algorithmic}
\end{breakablealgorithm}
\subsection{Algorithm Performance Guarantee: Theory and Analysis}
\label{sec:theory:sim}
In this subsection, we prove a performance guarantee for the SEO algorithm (Algorithm \ref{alg:SE}) that is designed for SBOS problems in the simulation optimization setting. A key obstacle in the analysis is to bound the bias in the estimator for the optimal objective value in the inner-layer optimization and to control how the bias from the inner-layer optimization affects the outer-layer selection. When analyzing the bias, a major challenge arises because the algorithm needs to average out all the samples including those which may be farther from the optimal value to reduce the variance. Before presenting the analysis and theory, we first state the assumptions.


\begin{assumption} Let $-G_{i}(x,\xi )\in \partial _{x}(-F_{i}(x,\xi ))$ be a subgradient. Let $\mathcal X_i$ be {bounded and closed convex domains}. Denote
	$D_{\mathcal{X}_i}=\frac{1}{2}\max_{x,x^{\prime }\in \mathcal{X}_i}\left\Vert
	x-x^{\prime }\right\Vert _{2}^{2}.$
\label{assump:gd} For each $i\in[K]$, we assume that $F_{i}(x,\xi )$ satisfies the following
assumptions:

\begin{enumerate}
\item $F_{i}(x,\xi)$ is concave in $x$ and finite-valued for any $\xi$.

\item The probability distribution of $F_{i}(x,\xi )$  has regularized tails, given by
\begin{equation*}
\mathbb{E}[\exp \left( \lambda \left( F_{i}(x,\xi )-f_{i}(x)\right) \right)
]\leq \exp \left( \frac{\lambda ^{2}\sigma _{F,i}^{2}}{2}\right) ,\text{ }%
\forall \lambda \in \mathbb{R}\text{,}
\end{equation*}
where $\sigma_{F,i}$ are positive real numbers that can depend on $i$.

\item Let $-f_i'(x)=\partial (-f_i(x))$ be a subgradient for the expected system performance function $f_i(x)$. The subgradient estimator $-G_{i}(x,\xi )$ is unbiased in the sense $\mathbb{E}[G_{i}(x,\xi )]=f_{i}^{\prime }(x)$ for all $x$. Also, the variances and tail conditions for the subgradient estimator are regularized as  $\mathbb{E}[\left\Vert
G_{i}(x,\xi )-f_{i}^{\prime }(x)\right\Vert _{2}^{2}]\leq \sigma _{G,i}^{2}$
and
$\mathbb{E}\left[ \exp \left( \left\Vert G_{i}(x,\xi )-f_{i}^{\prime
}(x)\right\Vert _{2}^{2}/\sigma _{G,i}^{2}\right) \right] \leq \exp (1),$
where $\sigma_{G,i}$ are positive real numbers that can depend on $i$.

\item There exists $M>0$ such that $\left\Vert f_{i}^{\prime }(x)\right\Vert _{2}\leq M$ for all $x$ and $i$.
\end{enumerate}
\label{Assump:sim}
\end{assumption}
The concavity in Assumption \ref{Assump:sim}.1 is needed to prove the convergence to the global maximum. Otherwise, the algorithm can (and need to) be modified to have multiple random initializing points. Assumptions \ref{Assump:sim}.2 and \ref{Assump:sim}.3  regularize the tail conditions of the random objective function $F_i(x,\xi)$ and the stochastic gradient $G_i(x,\xi)$, where for example Gaussian distribution assumption would be a special case. Assumption \ref{Assump:sim}.4 is equivalent to the Lipschitzness condition for $f_i(x)$. We note that these assumptions are typically needed in the continuous stochastic optimization literature that establishes convergence rates. Then, we have the following convergence result from $\hat{v}_i^T$ to $v_i$, where $v_i$ is the optimizing performance for system $i$ as defined in (\ref{eq:obj_so}) and $\hat{v}_i^T$ denotes the estimator for $v_i$ after $T$ steps of SGD (where $T$ is a dummy variable), as shown in line 12 of Algorithm \ref{alg:SE}.

\begin{proposition}
	\label{prop:conv}
Suppose Assumption \ref{assump:gd} is enforced. For the constant-step size
policy, where
$\gamma =\frac{D_{\mathcal{X}_i}}{\sqrt{T(M^{2}+\sigma _{G,i}^{2})}},$
we have for any $\epsilon >0$ the following holds%
$$
\mathbb{P} \left( \left\vert \hat{v}_{i}^{T}-v_{i}\right\vert \geq \frac{%
3D_{\mathcal{X}_i}\sqrt{M^{2}+\sigma _{G,i}^{2}}}{\sqrt{T}}+\epsilon \right)
\leq 3\exp \left( -\frac{T\epsilon ^{2}}{\left( 3\sqrt{3}\sigma _{G,i}D_{%
\mathcal{X}_i}+\frac{\sigma _{F,i}}{\sqrt{2}}\right) ^{2}}\right) +\exp \left( -\frac{%
\epsilon T\sqrt{T}}{3\sigma _{G,i}D_{\mathcal{X}_i}+\frac{\sigma _{F,i}}{\sqrt{6}}}%
\right) .
$$
\end{proposition}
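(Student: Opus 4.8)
The plan is to write $\hat v_i^{T}-v_i$ as a pure sampling-noise term plus an averaged optimization gap, bound the former by a sub-Gaussian martingale inequality and the latter by a high-probability version of the standard projected-SGD regret bound, and then substitute the prescribed step size.

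Fix $i$, let $x_i^\star$ be a maximizer of $f_i$ over $\mathcal X_i$, write $g_{t,i}:=G_i(x_{t,i},\xi_{t,i})$, and let $\mathcal F_{t-1}$ be the $\sigma$-field generated by $\xi_{1,i},\dots,\xi_{t-1,i}$, so that $x_{t,i}$ is $\mathcal F_{t-1}$-measurable. Decompose
\[
\hat v_i^{T}-v_i \;=\; \underbrace{\tfrac1T\sum_{t=1}^{T}\bigl(F_i(x_{t,i},\xi_{t,i})-f_i(x_{t,i})\bigr)}_{=:\,\Delta_1}\;+\;\underbrace{\tfrac1T\sum_{t=1}^{T}\bigl(f_i(x_{t,i})-v_i\bigr)}_{=:\,\Delta_2}.
\]
Here $\Delta_2\le 0$ deterministically since $f_i(x_{t,i})\le v_i$, and $\Delta_1$ equals $1/T$ times a martingale whose increments are, conditionally on $\mathcal F_{t-1}$, zero-mean and sub-Gaussian with variance proxy $\sigma_{F,i}^2$ (Assumption~\ref{Assump:sim}.2 applied at the $\mathcal F_{t-1}$-measurable point $x_{t,i}$). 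This already disposes of the upper tail: since $3D_{\mathcal X_i}\sqrt{M^2+\sigma_{G,i}^2}/\sqrt T\ge 0$, an Azuma--Hoeffding bound gives $\mathbb{P}\bigl(\hat v_i^{T}-v_i\ge 3D_{\mathcal X_i}\sqrt{M^2+\sigma_{G,i}^2}/\sqrt T+\epsilon\bigr)\le\mathbb{P}(\Delta_1\ge\epsilon)\le\exp\bigl(-T\epsilon^2/(2\sigma_{F,i}^2)\bigr)$.

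For the lower tail I would control $-\Delta_2$, the averaged optimality gap. Using the projected-SGD one-step inequality (nonexpansiveness of the projection onto $\mathcal X_i$, which line~10 of Algorithm~\ref{alg:SE} suppresses, together with expanding $\|x_{t+1,i}-x_i^\star\|_2^2$), $\langle g_{t,i},x_i^\star-x_{t,i}\rangle\le \tfrac{1}{2\gamma}\bigl(\|x_{t,i}-x_i^\star\|_2^2-\|x_{t+1,i}-x_i^\star\|_2^2\bigr)+\tfrac{\gamma}{2}\|g_{t,i}\|_2^2$; telescoping, bounding $\|x_{1,i}-x_i^\star\|_2^2\le 2D_{\mathcal X_i}$, invoking concavity ($v_i-f_i(x_{t,i})\le\langle f_i'(x_{t,i}),x_i^\star-x_{t,i}\rangle$), and using $\mathbb{E}[g_{t,i}\mid\mathcal F_{t-1}]=f_i'(x_{t,i})$ to split off the noise in the gradient yields
\[
-\Delta_2\;\le\;\frac{D_{\mathcal X_i}}{\gamma T}\;+\;\frac{\gamma}{2T}\sum_{t=1}^{T}\|g_{t,i}\|_2^2\;+\;\underbrace{\frac1T\sum_{t=1}^{T}\langle f_i'(x_{t,i})-g_{t,i},\,x_i^\star-x_{t,i}\rangle}_{=:\,\Delta_3},
\]
so that $\hat v_i^{T}-v_i\ge \Delta_1-\tfrac{D_{\mathcal X_i}}{\gamma T}-\tfrac{\gamma}{2T}\sum_t\|g_{t,i}\|_2^2-\Delta_3$. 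It then remains to concentrate the random pieces. $\Delta_3$ is $1/T$ times a martingale whose increments are bounded by $\sqrt{2D_{\mathcal X_i}}\,\|f_i'(x_{t,i})-g_{t,i}\|_2$, sub-Gaussian by Assumption~\ref{Assump:sim}.3. Writing $\|g_{t,i}\|_2^2=\|f_i'(x_{t,i})\|_2^2+2\langle f_i'(x_{t,i}),g_{t,i}-f_i'(x_{t,i})\rangle+\|g_{t,i}-f_i'(x_{t,i})\|_2^2$, the first summand is $\le M^2$ (Assumption~\ref{Assump:sim}.4), the second is again a sub-Gaussian martingale (now with the tiny weight $\gamma/T$), and the third has conditional mean $\le\sigma_{G,i}^2$ with sub-exponential ($\psi_1$-type) fluctuations, to which a Bernstein-type inequality applies and produces both a Gaussian-regime term and---after rescaling by the weight $\gamma/(2T)=\Theta(T^{-3/2})$---the heavier exponential-regime term $\exp(-\Theta(\epsilon T^{3/2}))$ appearing in the statement. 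Substituting $\gamma=D_{\mathcal X_i}/\sqrt{T(M^2+\sigma_{G,i}^2)}$ turns $D_{\mathcal X_i}/(\gamma T)$ into $\sqrt{(M^2+\sigma_{G,i}^2)/T}$ and the conditional-mean part of $\tfrac{\gamma}{2T}\sum_t\|g_{t,i}\|_2^2$ into $\tfrac12 D_{\mathcal X_i}\sqrt{M^2+\sigma_{G,i}^2}/\sqrt T$, whose sum a crude bound places below $3D_{\mathcal X_i}\sqrt{M^2+\sigma_{G,i}^2}/\sqrt T$. A union bound over the sub-Gaussian deviation events for $\Delta_1$, $\Delta_3$ and the cross term (this is where the three copies of $\exp(-T\epsilon^2/(\cdot)^2)$ and the accumulation of $\sigma_{F,i}$ and $D_{\mathcal X_i}\sigma_{G,i}$ in the effective variance proxy originate), together with the single sub-exponential deviation event for $\sum_t\|g_{t,i}-f_i'(x_{t,i})\|_2^2$, then gives the claimed two-term inequality.

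The step I expect to be the main obstacle is the high-probability (as opposed to in-expectation) control of the regret term $\tfrac{\gamma}{2T}\sum_t\|g_{t,i}\|_2^2$: one must handle sub-Gaussian martingale increments and sub-exponential squared-gradient increments simultaneously and track how the step size $\gamma=\Theta(T^{-1/2})$ rescales the tail, so that the two genuinely different exponents ($T\epsilon^2$ and $T^{3/2}\epsilon$) emerge with the right constants. A related subtlety is that $\hat v_i^{T}$ averages the noisy objective values along a trajectory that has \emph{not} converged, so the induced bias must be charged against the very same telescoping SGD potential used for the optimality gap---rather than estimated separately---which is exactly what forces the $D_{\mathcal X_i}/(\gamma T)$ and $\tfrac{\gamma}{T}\sum_t\|g_{t,i}\|_2^2$ terms, hence the $D_{\mathcal X_i}\sqrt{M^2+\sigma_{G,i}^2}/\sqrt T$ deterministic offset, into the bound.
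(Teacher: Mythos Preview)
Your proposal is correct and follows essentially the same route as the paper. Both decompose $\hat v_i^T-v_i$ into the sampling-noise martingale $\Delta_1$ and the averaged optimality gap $\Delta_2$, handle $\Delta_1$ by Azuma's inequality, and control $-\Delta_2$ via the telescoping projected-SGD potential, producing one sub-Gaussian martingale term (your $\Delta_3$) and one sub-exponential term from $\sum_t\|g_{t,i}-f_i'(x_{t,i})\|_2^2$; the paper simply packages the latter two steps by invoking \citet[Proposition~4.1 and (4.1.10)]{lan2020first}, whereas you spell them out. The only cosmetic difference is that the paper bounds $\|g_{t,i}\|_2^2\le 2(M^2+\|\delta_t\|_2^2)$ directly, which sidesteps the extra cross-term martingale you introduce (and then correctly dismiss as carrying weight $\gamma/T=\Theta(T^{-3/2})$); with that crude bound the paper's sub-exponential piece is handled by the single Markov-type estimate $\mathbb{P}\bigl(\sum_t\|\delta_t\|_2^2/\sigma_{G,i}^2>(1+\lambda)T\bigr)\le e^{-\lambda T}$, which is exactly your ``Bernstein-type'' step, and the specific split $\epsilon=\epsilon_1+\epsilon_2$ with $\epsilon_1\propto\sigma_{F,i}/\sqrt{2}$ and $\epsilon_2\propto 3\sqrt{3}\sigma_{G,i}D_{\mathcal X_i}$ then equalizes the two Gaussian exponents to yield the stated constants.
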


\textbf{Remark:} The algorithm depends on the choice of the step size $\gamma$, which involve a few parameters to select. The establishment of the high probability convergence theory results as shown above would then raise a discussion on how the choice of parameters affect the results. We note that this is an open question and discussion for the standard stochastic convex optimization. 


\textbf{Remark:} Although in Algorithm 1, we use stochastic gradient methods for the inner optimization (lines 7-12), our Theorem 1 applies to other generic maximization methods with the same convergence rate.

The detailed proof is in \ref{ec:proof_sec_2}. Proposition \ref{prop:conv} shows that if the SGD scheme is chosen appropriately, the estimated objective value converges to the true optimal objective value exponentially fast as the sampling size grows to infinity, even in the presence of bias. Proposition \ref{prop:conv} controls the bias rate in the estimated optimal objective value, which to our knowledge, is an independent contribution, given that the literature largely focuses on the optimizer property instead of the objective value. By utilizing Proposition \ref{prop:conv}, we have the following bound for the probability of false selection of Algorithm \ref{alg:SE}. Note that the probability of false selection (PFS) is given by $\mathbb{P}(1\notin \mathcal{A}_{L+1})$ where $\mathcal{A}_{L+1}$ is the set returned from Algorithm \ref{alg:SE} that contains only one system. 
\begin{theorem}
Suppose Assumption \ref{assump:gd} is enforced and $v_1>v_2\geq v_3 \geq \ldots \geq v_K$. Let $\Delta_i = v_1-v_i$ for $i = 2,3,\ldots,K$. When
$T\geq \lfloor\log _{2}\left( K\right)\rfloor K\left( \max_{i\in \{2,3,\ldots,K\}}\frac{12D_{\mathcal{X}_i}}{\Delta
_{i}}\left( \sqrt{M^{2}+\sigma _{G,i}^{2}}\right) \right)
^{2},$
we have the output from Algorithm \ref{alg:SE} satisfy
\begin{align}
&\mathrm{PFS} = \mathbb{P}(1\notin \mathcal{A}_{L+1})\notag \\
\leq &\lfloor \log_2(K) \rfloor \left\lbrace 24\exp \left( -\frac{ T }{192 \lfloor\log_2(K)\rfloor M_\sigma ^{2}H_{2}(v)}\right) +8\exp \left( -\frac{%
	T\sqrt{T}}{16  \sqrt{K \lfloor \log_2(K) \rfloor^2} M_\sigma H'_{2}(v)} \right)\right\rbrace  , \label{eq:bd:sim}
\end{align}%
where $H_{2}(v)=\max_{i>1}\frac{i}{\Delta _{i}^{2}}$, $H_{2}^{\prime
}(v)=\max_{i>1}\frac{i}{\Delta _{i}}$, and $M_\sigma=
\max_{i\in[K]}\{3\sigma _{G,i}D_{%
\mathcal{X}_i}+\sigma _{F,i}/\sqrt{6}\}$.
\label{thm:sim}
\end{theorem}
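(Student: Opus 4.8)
My plan is to run the classical sequential-halving / ``successive rejects'' analysis (in the spirit of \cite{gabillon2011multi,frazier2014fully}), but with Proposition \ref{prop:conv} playing the role that Hoeffding's inequality plays in the i.i.d.\ case, so that the optimization bias of the SGD estimator $\hat v_i^{T_\ell}$ is absorbed into the deviation term. Write $n_\ell=|\mathcal A_\ell|$; since halving with the floor loses at most a constant, $n_\ell$ is \emph{deterministic} with $n_\ell\asymp K/2^{\ell-1}$, hence $T_\ell=\lfloor T/(L n_\ell)\rfloor$ is deterministic with $T_\ell\asymp T\,2^{\ell-1}/(LK)$. The first step is the union bound over phases: since $1\in\mathcal A_1$,
\[
\mathrm{PFS}=\mathbb P(1\notin\mathcal A_{L+1})\ \le\ \sum_{\ell=1}^{L}\mathbb P\big(1\in\mathcal A_\ell,\ 1\notin\mathcal A_{\ell+1}\big).
\]
Because each phase draws fresh simulation runs, conditionally on $\mathcal A_\ell$ the phase-$\ell$ estimators $\{\hat v_i^{T_\ell}\}_{i\in\mathcal A_\ell}$ are independent of the history; this lets me condition on $\mathcal A_\ell$ and afterward maximize over all admissible active sets containing system $1$.

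The heart of the argument is a counting reduction. On $\{1\in\mathcal A_\ell,\ 1\notin\mathcal A_{\ell+1}\}$ at least $\lfloor n_\ell/2\rfloor$ systems $i\in\mathcal A_\ell\setminus\{1\}$ satisfy $\hat v_i^{T_\ell}\ge\hat v_1^{T_\ell}$. Ordering $\mathcal A_\ell$ by true optimizing value, the $r$-th best system of $\mathcal A_\ell$ has at least $r-1$ strictly better competitors globally, hence true value at most $v_r$ and gap at least $\Delta_r$. Consequently, with $m_\ell\asymp n_\ell/4$ (taking $m_\ell\ge2$ to cover the last phases), at least $\lfloor n_\ell/2\rfloor-O(m_\ell)\gtrsim m_\ell$ of the systems beating system $1$ have gap at least $\Delta_{m_\ell}$. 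Applying Markov's inequality to the number of such systems and bounding the expectation termwise,
\[
\mathbb P\big(1\notin\mathcal A_{\ell+1}\mid\mathcal A_\ell\big)\ \lesssim\ \frac{1}{m_\ell}\sum_{j\in\mathcal A_\ell:\,\Delta_j\ge\Delta_{m_\ell}}\mathbb P\big(\hat v_j^{T_\ell}\ge\hat v_1^{T_\ell}\big)\ \le\ \frac{n_\ell}{m_\ell}\max_{j:\,\Delta_j\ge\Delta_{m_\ell}}\mathbb P\big(\hat v_j^{T_\ell}\ge\hat v_1^{T_\ell}\big),
\]
and $n_\ell/m_\ell\le 8$, so the whole phase is controlled by one pairwise comparison probability with effective gap $\Delta_{m_\ell}$.

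Next I would bound $\mathbb P(\hat v_j^{T_\ell}\ge\hat v_1^{T_\ell})$ for $\Delta_j\ge\Delta_{m_\ell}$ by the split $\{\hat v_j^{T_\ell}-v_j\ge\Delta_j/2\}\cup\{v_1-\hat v_1^{T_\ell}\ge\Delta_j/2\}$ and invoking Proposition \ref{prop:conv} on each piece with its $T$ set to $T_\ell$ and $\epsilon=\Delta_j/4$. The budget hypothesis $T\ge\lfloor\log_2K\rfloor K\big(\max_{i\ge2}\tfrac{12D_{\mathcal X_i}}{\Delta_i}\sqrt{M^2+\sigma_{G,i}^2}\big)^2$ is used precisely here: since $\Delta_j\ge\Delta_{m_\ell}\ge\Delta_2$ and $n_\ell\le K$, it forces $T_\ell\ge(12D_{\mathcal X_i}\sqrt{M^2+\sigma_{G,i}^2}/\Delta_j)^2$ in every phase, so the deterministic bias term $3D_{\mathcal X_i}\sqrt{M^2+\sigma_{G,i}^2}/\sqrt{T_\ell}$ of Proposition \ref{prop:conv} is at most $\Delta_j/4$, i.e.\ the event forces a deviation of at least $\epsilon=\Delta_j/4$ beyond the bias. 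After bounding the per-system constants by $M_\sigma$ (using $3\sqrt3\,\sigma_{G,i}D_{\mathcal X_i}+\sigma_{F,i}/\sqrt2\le\sqrt3\,M_\sigma$ and $3\sigma_{G,i}D_{\mathcal X_i}+\sigma_{F,i}/\sqrt6\le M_\sigma$), Proposition \ref{prop:conv} gives a bound of the form $c_1\exp(-T_\ell\Delta_j^2/(48M_\sigma^2))+c_2\exp(-\Delta_j T_\ell^{3/2}/(4M_\sigma))$.

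Finally I would make the phase index disappear and sum. Using $\Delta_j\ge\Delta_{m_\ell}$, $n_\ell\asymp K/2^{\ell-1}$, $m_\ell\asymp n_\ell/4$ and $T_\ell\asymp T\,2^{\ell-1}/(LK)$, the factors $2^{\ell-1}$ cancel exactly: $T_\ell\Delta_{m_\ell}^2\asymp\frac{T}{4L}\cdot\frac{\Delta_{m_\ell}^2}{m_\ell}\ge\frac{T}{4L\,H_2(v)}$ because $H_2(v)=\max_{i>1}i/\Delta_i^2\ge m_\ell/\Delta_{m_\ell}^2$, which produces the first exponent $T/(192\lfloor\log_2K\rfloor M_\sigma^2H_2(v))$ of \eqref{eq:bd:sim} (with $192=48\cdot4$); similarly $\Delta_{m_\ell}T_\ell\ge T/(4L\,H_2'(v))$ with $H_2'(v)=\max_{i>1}i/\Delta_i$, and combining this with a lower bound on the leftover $\sqrt{T_\ell}$ factor yields the second, slower exponent of the form $c\,T\sqrt T/(\sqrt{K\lfloor\log_2K\rfloor^2}\,M_\sigma H_2'(v))$. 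Summing over the $L=\lfloor\log_2K\rfloor$ phases produces the prefactor $\lfloor\log_2K\rfloor$ and the two exponential terms, and collecting numerical constants (and checking that the floors in $n_\ell$ and $T_\ell$ cost only multiplicative constants, given the budget hypothesis) completes \eqref{eq:bd:sim}. I expect the main obstacle to be exactly the interplay between the combinatorial reduction of the second paragraph and this telescoping: one must choose the effective index $m_\ell\asymp n_\ell/4$ so that simultaneously (i) enough beaten systems are guaranteed to have gap at least $\Delta_{m_\ell}$, and (ii) both $T_\ell\Delta_{m_\ell}^2$ and $\Delta_{m_\ell}T_\ell^{3/2}$ telescope against $H_2(v)$ and $H_2'(v)$ uniformly over phases; and one must carry the two qualitatively different tail rates of Proposition \ref{prop:conv} — the Gaussian $T\epsilon^2$ rate and the slower $\epsilon T^{3/2}$ rate coming from the bias control, whose balancing across phases pins down the exact power of $\log_2 K$ — through this argument in parallel, keeping the constants from blowing up with $K$.
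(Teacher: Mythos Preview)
Your proposal is correct and follows essentially the same route as the paper: union bound over phases, the ``bottom three-quarters'' counting reduction with Markov's inequality (your $m_\ell\asymp n_\ell/4$ is exactly the paper's $i'_\ell=\min\mathcal A'_\ell\ge K2^{-1-\ell}$), splitting the pairwise comparison into two one-sided events, invoking Proposition~\ref{prop:conv} with $\epsilon=\Delta_j/4$ after the budget hypothesis absorbs the bias, and telescoping $T_\ell\Delta_{m_\ell}^2$ and $\Delta_{m_\ell}T_\ell^{3/2}$ against $H_2(v)$ and $H_2'(v)$. The only cosmetic difference is that the paper phrases the counting step via the explicit set $\mathcal A'_\ell=\{i\in\mathcal A_\ell:\text{rank}(i)>\lfloor|\mathcal A_\ell|/4\rfloor\}$ and the variable $N_\ell=\sum_{i\in\mathcal A'_\ell}\mathbb I\{\hat v_1^{T_\ell}\le\hat v_i^{T_\ell}\}$, whereas you describe the same object verbally; the arithmetic and constants line up.
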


The detailed proof is in \ref{ec:proof_sec_2}.  This result also includes \cite*{gabillon2011multi} and \cite*{carpentier2016tight} as special cases, which do not have inner-layer optimizations in each system. The bound (\ref{eq:bd:sim}) can be dominated by the first term in the right hand when $T$ is much larger than $K\log_2(K)$.  Further, we observe that the bound is exponentially linear on $T$. And the rate is exponentially inverse proportional to the log of the number of systems $\log_2(K)$, the complexity term $H_2(K)$, and the variance term $M$. Based on Theorem 1, we have demonstrated a reliable performance guarantee for the SEO algorithm that have desirable dependence on the budget $T$ (exponential decay) and on the number of systems $K$. Note that, within each system, there are technically a infinite continuum of ``sub-systems", a challenge that is overcome by the SEO algorithm by exploring the concavity structure.

Following the upper bound exponential rate result to control the PFS, we also provide a brief lower bound result in Proposition \ref{thm:ld:queue}, utilizing the results from \cite*{carpentier2016tight}. Here, we first define the oracle model, which is similar to the setting discussed in \cite*{agarwal2009information} and \cite*{nemirovskij1983problem}. $K,T$ and $\mathcal{X}_i$ are preknown to the decision maker. At time $t \in [T]$, the decision maker chooses a system $i \in [K]$ and also queries a point $x \in \mathcal{X}_i$. An oracle answers  the query by giving  $F_i(x,\xi)$ and $G_i(x,\xi)$ . We let $\mathcal{O}_{\bar{\sigma},a}$ to denote the class of all oracles satisfying Assumption \ref{assump:gd} with $\sigma_{F,i}\leq\bar{\sigma}$, $\sigma_{G,i}\leq\bar{\sigma}$ and the complexity term $H(v)\leq a$, where $H(v)=\sum_{i=2}^K (v_1-v_i)^{-2}.$ Then, we have the following lower bound:
\begin{proposition}
\label{thm:ld:queue}
Let $K>1$ and $a>0$. If $T\geq 16 \bar{\sigma}^4 a^2 (4\log(6TK)) / (60)^2$ and $a\geq \tfrac{11}{4}\bar{\sigma}^{-2}  K^2$, then for any algorithm it holds that the algorithm's recommended system by the end of $T$, labeled as $i$, satisfies that
\begin{equation}
\sup_{\mathcal{O}_{\bar{\sigma},a}}\left[  \mathbb{P}\left( i \neq \arg\max_j v_j \right) \times \exp\left(100\frac{T}{\bar{\sigma}^2 \log(K) H(v)} \right) \right] \geq 1/6. 
\end{equation}
\end{proposition}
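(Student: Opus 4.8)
The plan is to derive the lower bound by specializing to a sub-family of SBOS instances on which the inner-layer optimization is vacuous, thereby reducing the problem to fixed-budget best-arm identification, and then invoking the tight lower bound of \cite{carpentier2016tight}. Concretely, I would fix a mean vector $v=(v_1,\ldots,v_K)$ with $v_1>v_2\ge\cdots\ge v_K$ and consider the instance of \eqref{eq:obj_so} in which, for every $i$, the function $F_i(x,\xi)$ is independent of $x$ and $\bar\sigma$-sub-Gaussian with mean $v_i$ (for instance $F_i\sim\mathcal{N}(v_i,\bar\sigma^2)$), and the gradient oracle returns the deterministic value $G_i(x,\xi)\equiv 0$. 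Then $f_i\equiv v_i$, the optimizing performance of system $i$ is $v_i$, the gaps are $\Delta_i=v_1-v_i$, and $H(v)=\sum_{i=2}^K\Delta_i^{-2}$.

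First I would verify that every such instance with $H(v)\le a$ belongs to $\mathcal{O}_{\bar\sigma,a}$, i.e.\ that it satisfies Assumption~\ref{assump:gd}: concavity and finiteness of $F_i$ are immediate; part~2 holds with $\sigma_{F,i}=\bar\sigma$; in part~3 the zero vector is a subgradient of the constant concave function $-f_i$, and $\mathbb{E}\|G_i-f_i'\|_2^2=0\le\bar\sigma^2$ while $\mathbb{E}\exp(\|G_i-f_i'\|_2^2/\bar\sigma^2)=1\le\exp(1)$, so $\sigma_{G,i}=\bar\sigma$ works; and part~4 holds with any $M>0$ since $f_i'\equiv 0$. (One could equivalently take each $\mathcal{X}_i$ to be a singleton; the definition of $\mathcal{O}_{\bar\sigma,a}$ imposes no constraint on $D_{\mathcal{X}_i}$.)

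Next comes the reduction: on any instance of this sub-family, every query to system $i$ (regardless of the point chosen in $\mathcal{X}_i$) returns an i.i.d.\ $\bar\sigma$-sub-Gaussian sample of mean $v_i$ together with the deterministic gradient $0$. Hence any adaptive SBOS algorithm, run on this sub-family, is — after relabeling systems as arms — a fixed-budget best-arm identification strategy on a $K$-armed $\bar\sigma$-sub-Gaussian bandit with arm means $v_i$ and total budget $T$, and its probability of recommending a system other than $\arg\max_j v_j$ equals the corresponding bandit strategy's probability of misidentification. I would then apply \cite{carpentier2016tight}: their fixed-budget best-arm identification lower bound, transcribed to sub-Gaussian parameter $\bar\sigma$ and complexity budget $a$, asserts — precisely under the hypotheses $T\ge 16\bar\sigma^4a^2(4\log(6TK))/(60)^2$ and $a\ge\frac{11}{4}\bar\sigma^{-2}K^2$ — that for the given algorithm there exists a bandit instance $\nu$ with $H(v(\nu))\le a$ and $\mathbb{P}_\nu\!\left(\hat{i}_T\ne\arg\max_j v_j(\nu)\right)\ge\frac{1}{6}\exp\!\left(-100\,T/(\bar\sigma^2\log(K)\,H(v(\nu)))\right)$, where all numerical factors and the discrepancy between $\sum_{i\le K}i^{-1}$ and $\log K$ are absorbed into the constants $100$ and $1/6$. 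Since $\nu\in\mathcal{O}_{\bar\sigma,a}$, multiplying both sides by $\exp(100\,T/(\bar\sigma^2\log(K)\,H(v(\nu))))$ and taking the supremum over $\mathcal{O}_{\bar\sigma,a}$ yields the claimed inequality $\ge 1/6$.

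The hard part here is not conceptual — the argument is a reduction plus a citation — but two bookkeeping points deserve care: (i) confirming that the flat instances satisfy all four parts of Assumption~\ref{assump:gd}, in particular that a constant gradient oracle meets the subgradient variance and tail conditions, so that the hard instances are genuinely in $\mathcal{O}_{\bar\sigma,a}$; and (ii) arguing that this constant gradient channel is informationless, so that restricting an SBOS algorithm to the sub-family loses nothing and the reduction to best-arm identification is lossless. I would also note that the constants $100$ and $1/6$ are inherited from \cite{carpentier2016tight} and are not optimized, and that on this sub-family ($D_{\mathcal{X}_i}=0$, $H_2(v)\le H(v)\le a$) the bound matches the upper bound of Theorem~\ref{thm:sim} up to absolute constants and the gap between $H(v)$ and $H_2(v)$.
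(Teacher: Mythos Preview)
Your proposal is correct and follows essentially the same approach as the paper: both trivialize the inner-layer optimization by taking $F_i(x,\xi)$ independent of $x$, reducing the SBOS problem to fixed-budget best-arm identification, and then invoke \cite{carpentier2016tight}. The one substantive difference is the choice of noise distribution: the paper uses scaled Bernoulli samples $2\bar\sigma\,\mathrm{Ber}(p_k)$ (with the standard permutation construction of $K$ hard instances) so as to match exactly the setting of Theorem~1 in \cite{carpentier2016tight}, whereas you suggest Gaussian arms and a ``transcription'' of that result to sub-Gaussian noise. Since the specific constants $100$, $1/6$, $11/4$, and the condition on $T$ are inherited verbatim from the Bernoulli analysis of \cite{carpentier2016tight}, the Bernoulli choice makes the citation literal; your Gaussian variant is fine conceptually but would require either re-deriving the constants or, more simply, noting that scaled Bernoulli arms are themselves $\bar\sigma$-sub-Gaussian and fit your framework verbatim.
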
 
\textbf{Remark:} The relation $H_2 \leq H\leq \log(2K) H_2$ holds \citep{audibert2010best}.  Then, Proposition \ref{thm:ld:queue} together with Theorem \ref{thm:sim} shows that the hardest problems are those $H_2$ is of same order as $H$.

Proposition \ref{thm:ld:queue} shows that our upper bounds are tight for the complexity term $H_2$ and the variance term $\sigma_{F,i}$ up to constant.
\section{Algorithm and Analysis for Data-driven Stochastic Optimization}
\subsection{Algorithm for SBOS Simulation Optimization Problems}
In this section, we present the Sequential Elimination for Optimized Systems (SEO) algorithm designed for SBOS problems in the data-driven stochastic optimization setting, as introduced in Section 2.2. In this setting, the bottleneck in terms of cost is not the simulation evaluation cost, but is the number of real data samples we can collect. We presume that the function evaluation cost of $g(\cdot)$ is much cheaper compared to the cost of collecting real data. Therefore, the sampling budget only counts the number of collected real data samples. Specifically, we assume there is an oracle that effectively solves the following sample average approximation problem
\begin{equation}
	\hat{v}_{i}^{{n_{i}}}=\sup_{g\in \mathcal{F}_{i}}\mathbb{E}_{P_{n_{i}}^{i}}\left[ g(X)%
	\right],  \label{prob:saa}
\end{equation}%
where $P_{n_{i}}^{i}$ denotes the empirical distribution with $n_{i}$ data
samples from $P^i$. Algorithm \ref{alg:SE2} details our method. Intuitively, for the outer layer, the algorithm performs sequential elimination. In the inner layer, the algorithm draws the oracle to solve the sample average approximation problem (\ref{prob:saa}). We note that despite of the algorithm's simple form, which itself is an advantage, the performance guarantee analysis for the algorithm remains challenging.

\bigskip

\begin{breakablealgorithm}
\caption{Sequential Elimination for Optimizing Systems (SEO) in Data-driven Stochastic Optimization}
\label{alg:SE2}
\begin{algorithmic}[1]
\State \textbf{Input: } $T$ and $K$.
\State Set  $L = \lfloor \log_2(K) \rfloor$ and $ \mathcal{A}_1 = \{1,2,\ldots,K\}$.
\For{$\ell = 1,\ldots,L$}
    \State Let $T_\ell\gets \lfloor \frac{T}{L |\mathcal{A}_\ell|}\rfloor$.
   \State Collect $T_\ell$ samples for each system in $\mathcal{A}_\ell$.
   \State For each $i \in \mathcal{A}_\ell$, solve the SAA problem (\ref{prob:saa}) using the collected samples for that system and obtain an estimation $\hat{v}_{i}^{{n_{i}}}$ for $v_i$.
   \State Let $\mathcal{A}_{\ell+1}$ contain the top $\lfloor |\mathcal{A}_\ell|/2 \rfloor$ systems in $\mathcal{A}_\ell$ ordered by $\hat{v}_{i}^{{n_{i}}}$ for $i \in \mathcal{A}_\ell$.
\EndFor
\State \textbf{Output: } the remaining system in $\mathcal{A}_{L+1}$.
\end{algorithmic}
\end{breakablealgorithm}

\smallskip

\subsection{Performance Guarantee in Data-driven Stochastic Optimization}
In this subsection, we prove performance guarantee for the proposed SEO algorithm to solve SBOS problems in the data-driven stochastic optimization setting. In order to quantify the favorable biasing caused by overfitting, we need a complexity notion of the function classes $\mathcal{F}_i$. We first define the covering number \citep[Definition 5.1]{wainwright2019high}.
\begin{definition}[Covering number]
	A $\delta$-cover of a set $\mathcal{F}$ with respect to a metric $\rho$, $N(\delta,\mathcal{F},\rho)$ is a set $\{g_1,\dots,g_N\}\subset \mathcal{F}$ such that for each $g\in \mathcal{F}$, there exists $i\in[N]$ that $\rho(g,g_i)\leq \delta$.	
	The $\delta$-covering number $N(\delta,\mathcal{F},\rho)$ is the cardinality of the smallest $\delta$-cover.
	\end{definition}
	
Then, the complexity of the set $\mathcal{F}$ is measured by the entropy integral \citep[(5.45)]{wainwright2019high} defined below.
\begin{definition}[Entropy integral]
Define
	\begin{equation}
	\mathcal{J}(\mathcal{F},P) \triangleq  \mathbb{E}_{P^{\otimes n} }\left[ \int_{0}^{+\infty}\sqrt{\log N(t,\mathcal{F},\left\Vert \cdot
		\right\Vert _{P _{n}})}\mathrm{d}t\right] ,
	\end{equation}
where $P _{n}$ denotes the $n$-times product measure of $P$, $P_n$ is the empirical distribution with $n$ i.i.d. samples from $P$, and the metric $\left\Vert \cdot
\right\Vert _{P _{n}}$ is defined by
$\left\Vert \cdot \right\Vert _{P_{n}}\triangleq \sqrt{\frac{1}{n}%
		\sum_{i=1}^{n}\left( f(z_{i}\right) -g(z_{i}))^{2}}.$
	\end{definition}
	
 We provide several instances that have known and finite entropy integral below.

\begin{example} The following functional classes have finite entropy integrals.
	\begin{itemize}
		\item \textbf{Vector spaces \citep[Example 19.16]{van2000asymptotic}:} Let $\mathcal{F}$ be the set of all linear combinations $\sum \lambda_i f_i$ of a given, finite set of functions $\{f_1,\ldots,f_k\}$ on $\mathcal{X}$. Suppose $\mathcal{F}$ is uniformly bounded in $\mathcal{X}$. Then,  $\mathcal{F}$ has finite entropy integrals.
		\item \textbf{Lipschitz parametrized class:} Suppose that $\mathcal{F} = \{g(\theta,\cdot):\theta \in \Theta\}$ is a parametrized class, where $\Theta$ is a $d$-dimensional unit Euclidean ball $B_2^d \subset \mathbb{R}^d$. And we assume for all $x$, $|g(\theta,x)-g(\theta',x)|\leq L\|\theta-\theta'\|_2$. Then, $\mathcal{J}(\mathcal{F},P) = O(L\sqrt{d})$. The proof follows the covering number bound in \cite[Example 5.18]{wainwright2019high}.
		\item \textbf{VC classes \citep[Example 5.24]{wainwright2019high}:} Let $\mathcal{F}$ be a class of $\{0,1\}$-valued functions with VC-dimension $d$ \citep{vapnik1971uniform}, then $\mathcal{J}(\mathcal{F},P) = O(\sqrt{d})$.
	\end{itemize}
	\end{example}

\smallskip	

More examples can be found in \citet[Chapter 19]{van2000asymptotic} and \citet[Chapters 4\&5]{van2000asymptotic}.

To help our analysis of the probability of the false selection, we assume the  function classes are uniformly bounded and have finite entropy integrals.
\begin{assumption} The function classes $\{\mathcal{F}_i\}_{i=1}^{K}$  satisfy:
	\begin{itemize}
	\item There exits $B>0$ such that $f(z)\in \lbrack -B,B]$ for all $f\in \bigcup_{i=1}^{K}\mathcal{F}%
_{i}.$
\item $\max_{i \in [K]} \mathcal{J}(\mathcal{F}_i,P^i)<+\infty$.
\end{itemize}
\label{assump:bd+finite}
\end{assumption}
We emphasize here that $P^i$ does not depend on $g$.
Then, we are ready to show our results on the upper bound of the probability of false selection (PFS) for Algorithm \ref{alg:SE2}.

\begin{theorem}
	Suppose Assumption \ref{assump:bd+finite} is enforced and $v_1>v_2\geq v_3 \geq \ldots \geq v_K$. Let $\Delta_i = v_1-v_i$ for $i = 2,3,\ldots,K$. When
		$T\geq \lfloor\log _{2}\left( K\right)\rfloor K\left(\max_{i\in [ K]} \frac{192}{\Delta
			_{i}}\left( \mathcal{J}(\mathcal{F}_i,P^i)\right) \right)
		^{2}$, we have the output $\mathcal{A}_\ell$ from Algorithm \ref{alg:SE2} satisfies
	\begin{equation*}
		\mathrm{PFS} = \mathbb{P}(1\notin \mathcal{A}_{L+1})\leq 6\lfloor\log _{2}(K)\rfloor\exp \left( -\frac{T}{%
			128B^{2}\lfloor\log _{2}\left( K\right)\rfloor H_{2}(v)}\right) ,
	\end{equation*}%
	where $H_{2}(v)=\max_{i>1}\frac{i}{\Delta _{i}^{2}}.$
	\label{thm:SA}
\end{theorem}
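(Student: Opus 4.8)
The plan is to mirror the structure of the proof of Theorem \ref{thm:sim}: first prove a concentration bound for the SAA estimator $\hat v_i^{n}$ around $v_i$ (the data-driven analogue of Proposition \ref{prop:conv}), then plug it into the same sequential-halving (``Successive Rejects'') argument. The key observation for the concentration step is that the overfitting bias of the SAA estimator is controlled \emph{deterministically} by a single empirical-process supremum. Writing $Z_i^{(n)} := \sup_{g\in\mathcal{F}_i}\lvert \mathbb{E}_{P_n^i}[g]-\mathbb{E}_{P^i}[g]\rvert$, one has $\lvert \hat v_i^{n}-v_i\rvert \le Z_i^{(n)}$: for the upper direction, $\sup_g \mathbb{E}_{P_n^i}[g]-\sup_h\mathbb{E}_{P^i}[h]\le \sup_g(\mathbb{E}_{P_n^i}[g]-\mathbb{E}_{P^i}[g])\le Z_i^{(n)}$; for the lower direction, testing $\hat v_i^{n}$ against an $\varepsilon$-optimal $h$ gives $\hat v_i^{n}\ge \mathbb{E}_{P_n^i}[h]\ge \mathbb{E}_{P^i}[h]-Z_i^{(n)}> v_i-\varepsilon-Z_i^{(n)}$, and $\varepsilon\downarrow 0$. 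It then remains to bound $Z_i^{(n)}$, which I would do by the standard symmetrization inequality followed by Dudley's chaining bound: since the entropy integral of the earlier definition is precisely the chaining integral under the empirical $L_2$ metric $\lVert\cdot\rVert_{P_n}$, one gets $\mathbb{E}[Z_i^{(n)}]\le \tfrac{c_0}{\sqrt n}\,\mathcal{J}(\mathcal{F}_i,P^i)$ for an absolute constant $c_0$ (which will turn out to be $24$), and because Assumption \ref{assump:bd+finite} makes $Z_i^{(n)}$ change by at most $2B/n$ when a single sample is replaced, McDiarmid's bounded-differences inequality yields $\mathbb{P}(Z_i^{(n)}\ge \mathbb{E}[Z_i^{(n)}]+t)\le \exp(-nt^2/(2B^2))$. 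Combining, whenever $n\ge (8c_0\,\mathcal{J}(\mathcal{F}_i,P^i)/\Delta)^2$ so that $\mathbb{E}[Z_i^{(n)}]\le\Delta/4$, we obtain $\mathbb{P}(\lvert \hat v_i^{n}-v_i\rvert\ge \Delta/2)\le \exp(-n\Delta^2/(32B^2))$.

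Second, I would convert this into a pairwise-comparison bound, exactly as in \eqref{eq:single_wrong2}. If $\hat v_1^{T_\ell}\le \hat v_i^{T_\ell}$ then $(v_1-\hat v_1^{T_\ell})+(\hat v_i^{T_\ell}-v_i)\ge \Delta_i$, so at least one summand is $\ge\Delta_i/2$; a union bound together with the concentration estimate above applied to systems $1$ and $i$ (each having $T_\ell$ fresh samples in round $\ell$) gives $\mathbb{P}(\hat v_1^{T_\ell}\le\hat v_i^{T_\ell}\mid\{1,i\}\subset\mathcal{A}_\ell)\le 2\exp(-T_\ell\Delta_i^2/(32B^2))$, valid once $T_\ell$ meets the sample-size requirement. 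This is the analogue of \eqref{eq:single_wrong2} with $48M_\sigma^2$ replaced by $32B^2$ and with the second, sub-exponential term absent, since the SAA fluctuation $Z_i^{(n)}$ is purely sub-Gaussian (via bounded differences) rather than sub-Gamma as in the SGD case. Using fresh samples per phase keeps the round-$\ell$ estimates independent of $\mathcal{A}_\ell$, so all probabilities can be taken conditional on $1\in\mathcal{A}_\ell$.

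Third, the outer-layer argument is a near-verbatim repeat of the proof of Theorem \ref{thm:sim} (and of Theorem 33.10 in \cite*{lattimore2020bandit}). Assuming $K=2^L$ for simplicity, so $\lvert\mathcal{A}_\ell\rvert=K2^{1-\ell}$ and $T_\ell=T2^{\ell-1}/(L K)$, define $\mathcal{A}_\ell'$ to be the bottom three-quarters of $\mathcal{A}_\ell$ ordered by the true values $v_i$; if the best system is eliminated in round $\ell$ then $N_\ell:=\sum_{i\in\mathcal{A}_\ell'}\mathbb{I}\{\hat v_1^{T_\ell}\ge\hat v_i^{T_\ell}\}\ge \tfrac13\lvert\mathcal{A}_\ell'\rvert$, and Markov's inequality plus the pairwise bound give $\mathbb{P}(1\notin\mathcal{A}_{\ell+1}\mid 1\in\mathcal{A}_\ell)\le 6\max_{i\in\mathcal{A}_\ell'}\exp(-T_\ell\Delta_i^2/(32B^2))$. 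The central arithmetic step is that every $i\in\mathcal{A}_\ell'$ has global rank at least $\lvert\mathcal{A}_\ell\rvert/4=K2^{-1-\ell}$, hence $T_\ell\ge T/(4L\,\mathrm{rank}(i))$ and therefore $T_\ell\Delta_i^2\ge T/(4L\,H_2(v))$ since $\mathrm{rank}(i)/\Delta_i^2\le H_2(v)$; the same inequality, together with $\mathrm{rank}(i)\le K$ and $c_0=24$, certifies the sample-size requirement from the first step and produces the stated threshold $T\ge \lfloor\log_2 K\rfloor K(\max_{i\ge 2}192\,\mathcal{J}(\mathcal{F}_i,P^i)/\Delta_i)^2$. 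Thus each per-round failure probability is at most $6\exp(-T/(128B^2 L\,H_2(v)))$, and summing the telescoping series $\mathrm{PFS}\le\sum_{\ell=1}^{L}\mathbb{P}(1\notin\mathcal{A}_{\ell+1}\mid 1\in\mathcal{A}_\ell)$ gives the claimed bound; the non-power-of-two case follows from the usual floor-function bookkeeping on $\lvert\mathcal{A}_\ell\rvert$ and $T_\ell$.

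I expect the main obstacle to be the first step — making the empirical-process bound $\mathbb{E}[Z_i^{(n)}]\lesssim \mathcal{J}(\mathcal{F}_i,P^i)/\sqrt n$ fully rigorous with an explicit constant: controlling the symmetrization step, the Dudley chaining integral, and the fact that $\mathcal{J}$ is defined with the \emph{random} empirical metric $\lVert\cdot\rVert_{P_n}$ sitting inside the outer expectation, and then verifying that the resulting absolute constant is consistent with the numbers ($192$ in the budget threshold, $128B^2$ in the exponent) asserted in the statement. By contrast, once this lemma is in hand, the pairwise step is elementary and the outer-layer sequential-elimination argument is essentially identical to the proof of Theorem \ref{thm:sim}.
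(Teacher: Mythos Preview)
Your proposal is correct and follows essentially the same route as the paper: bound $|\hat v_i^{n}-v_i|$ by the empirical-process supremum $Z_i^{(n)}$, control $Z_i^{(n)}$ via symmetrization, Dudley's entropy integral, and bounded-differences concentration (the paper packages these as Theorems~\ref{thm:rademacher} and~\ref{Dudley} from \cite{wainwright2019high}), and then repeat the sequential-halving argument of Theorem~\ref{thm:sim} verbatim. One small slip: after symmetrization plus Dudley the constant is $c_0=2\times 24=48$, not $24$, and the threshold $192=4\times 48$ arises from requiring $48\,\mathcal{J}/\sqrt{n}\le \Delta/4$; your final constants are nonetheless consistent with the statement.
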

\label{sec:theory:SA}

The detailed proof is in \ref{ec:proof:SA}. The biases from overfitting the collected data are controlled by the complexity of the function class $\mathcal{J}(\mathcal{F}_i,P^i)$. Theorem \ref{thm:SA} provides an exponential rate of convergence of PFS as the budget increases to infinity, establishing a performance guarantee for the SEO algorithm applied to SBOS problems in the data-driven stochastic optimization setting. 

Similar to Proposition \ref{thm:ld:queue}, we provide a lower bound in Proposition \ref{thm:ld:ssa}.
We let $\mathcal{O}_{B,a}$ denote the class of all oracles satisfying Assumption \ref{assump:bd+finite} and the complexity term $H(v)\leq a$. Then, we have the following lower bound:
\begin{proposition}
\label{thm:ld:ssa}
Let $K>1$ and $a>0$. If $T\geq 16 B^4 a^2 (4\log(6TK)) / (60)^2$ and $a\geq \tfrac{11}{4}B^{-2}  K^2$, then for any algorithm that return systems $i^*$ at time $T$, it holds that
\begin{equation}
\sup_{\mathcal{O}_{B,a}}\left[  \mathbb{P}\left( i \neq \arg\max_i v_i \right) \times \exp\left(100\frac{T}{B^2 \log(K) H(v)} \right) \right] \geq 1/6.
\end{equation}
\end{proposition}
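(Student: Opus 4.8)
The plan is to imitate, essentially verbatim, the argument used for Proposition~\ref{thm:ld:queue}, reducing the data-driven SBOS lower bound to the classical best-arm identification lower bound of \cite*{carpentier2016tight} by collapsing the inner layer. The device is to let every function class $\mathcal{F}_i$ be a singleton $\{g_0\}$, where $g_0$ is the identity map on $\mathbb{R}$. Then the inner-layer supremum in \eqref{eq:data_driven_SO} degenerates to $v_i=\mathbb{E}_{P^i}[X]$, and the SAA oracle \eqref{prob:saa} simply returns the empirical mean $\hat v_i^{n_i}$ of the $n_i$ collected samples. Since $|\mathcal{F}_i|=1$, the $\delta$-covering number is $1$ at every scale, so the entropy integral $\mathcal{J}(\mathcal{F}_i,P^i)=0<\infty$; uniform boundedness of $\mathcal{F}_i$ in $[-B,B]$ will be arranged by taking the $P^i$ supported in $\{-B,B\}$. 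Hence every instance constructed below genuinely lies in the oracle class $\mathcal{O}_{B,a}$, and any SBOS algorithm run on such an instance is, because the query point is irrelevant when $\mathcal{F}_i$ is a singleton, nothing more than a best-arm identification algorithm on the corresponding Bernoulli bandit.

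Next I would build the $K$ hard instances exactly as in the proof of Proposition~\ref{thm:ld:queue}. Fix $p_2,\dots,p_K\in[1/4,1/2)$ and set $p_1=1/2$; write $d_k=1/2-p_k\ge 0$. In the $i$-th instance, let the data distribution of system $k$ be that of $2B\,\mathrm{Ber}(p_k)-B$ for $k\neq i$ and that of $2B\,\mathrm{Ber}(1-p_i)-B$ for $k=i$. Each such law is supported in $\{-B,B\}\subset[-B,B]$ and is $B$-sub-Gaussian (Hoeffding), so Assumption~\ref{assump:bd+finite} holds with the stated $B$. A direct computation gives $v_k=-2Bd_k$ for $k\neq i$ and $v_i=2Bd_i$, so system $i$ is the unique maximizer in the $i$-th instance and the gaps are $v_i-v_k=2B(d_i+d_k)$. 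Consequently the complexity of the $i$-th instance is $H^{(i)}(v)=\frac{1}{4B^2}\sum_{k\neq i}(d_i+d_k)^{-2}$, which is precisely the quantity appearing in \cite*{carpentier2016tight} with their noise scale identified with $B$. This is exactly the family of instances analyzed there, up to the renaming $\bar\sigma\mapsto B$.

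I would then invoke Theorem~1 of \cite*{carpentier2016tight}. Under the hypotheses $T\ge 16B^4a^2(4\log(6TK))/(60)^2$ and $a\ge\frac{11}{4}B^{-2}K^2$ — these are their conditions with $\bar\sigma$ replaced by $B$, and the second one is exactly what lets the $d_k$ be chosen so that $\max_i H^{(i)}(v)\le a$ — the theorem yields, for every algorithm, an instance among the $K$ constructed on which the recommended system is incorrect with probability at least $\tfrac16\exp\!\big(-100\,T/(B^2\log(K)H(v))\big)$. Taking the supremum over $\mathcal{O}_{B,a}$ and multiplying through by $\exp\!\big(100\,T/(B^2\log(K)H(v))\big)$ gives the claimed bound $\ge 1/6$.

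I do not expect a genuine obstacle here, since the proof is a transcription of the Carpentier–Locatelli construction; the one point requiring care is verifying that the degenerate inner layer is admissible and confers no extra power — i.e., that with a singleton $\mathcal{F}_i$ no choice of query point reveals more about $v_i$ than a plain i.i.d.\ draw from $P^i$ does, so the SBOS information structure is no richer than the bandit one — together with the routine bookkeeping, already carried out in \cite*{carpentier2016tight}, that reconciles $\max_i H^{(i)}(v)\le a$ with $a\ge\frac{11}{4}B^{-2}K^2$ and checks that the sub-Gaussian proxy for a $[-B,B]$-bounded variable is $B$.
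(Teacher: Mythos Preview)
Your proposal is correct and follows essentially the same route as the paper, which simply states that the proof ``follows the same routine as the proof of Proposition~\ref{thm:ld:queue}'': collapse the inner layer to a trivial singleton so that the SBOS problem becomes a standard bandit, build the same family of $K$ scaled-Bernoulli instances, and invoke Theorem~1 of \cite*{carpentier2016tight}. Your centering of the Bernoulli variables on $\{-B,B\}$ is the appropriate adaptation so that Assumption~\ref{assump:bd+finite} is met.
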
 
The proof of Proposition \ref{thm:ld:ssa} follows the same routine as the proof of Proposition \ref{thm:ld:queue}.
%


\section{Applications}
\label{sec:app}
In this section, we present three applications that need the selection of the best optimizing system. Two applications correspond to the setting in Section 3, and one application corresponds to the setting in Section 4. For each application, we describe the problem setting, implement our proposed algorithm, and compare it with the uniform sampling algorithm. For the simulation optimization (Section \ref{sec:sim:app}) and the selection of the best drug (Section \ref{sec:drug:app}) applications, we further compare our algorithm with the Optimal Computing Budget Allocation (OCBA) algorithm \citep{chen2000simulation} with discretization. The uniform sampling algorithm is that we treat each system in a uniform way by allocating a load $T/K$ samples to each system. Each system receives the same number of samples to solve the inner-layer optimization, using the same approach as in our proposed SEO algorithm. For the OCBA algorithm, we adopt the variant proposed in \cite*{chen2000simulation} and \cite*{wu2018analyzing}  with the size of samples for an initial estimation $N_0$ linear in $T$, since \cite*{wu2018analyzing} shows that a fixed $N_0$ will not result in an exponential convergence rate. The details are listed in Algorithm \ref{alg:OCBA} in \ref{EC.1:algo}. We show that our proposed algorithm consistently outperforms the two benchmarks regarding the probability of false selection, for different numbers of systems $K$ and different total budget $T$.

\bigskip

\subsection{Optimal staffing and pricing in queueing simulation optimization}
\label{sec:sim:app}
In this example, we apply our proposed method to a simulation optimization problem in the queueing context, with the goal of selecting the best staffing plan for a two-station service system under optimized pricing plans. Specifically, we consider a first-in-first-out service system with two connected stations, Station One and Station Two. The service system has in total $K$ homogeneous staff members (servers). The system manager needs to select $x\in\{1,2,\cdots,K-1\}$ staff members to serve at Station One and $K-x$ staff members to serve at Station Two. Each station has a first-in-first-out logic with infinite waiting room capacity. Station One offers a type-one service, and Station Two offers a type-two service. The type-one service is required to be completed before type-two service. That is, customers who enter the system always first join Station One to receive type-one service. Upon completion of service in Station One, customers will immediately join Station Two to receive type-two service. The specifics are given as follows.

\textbf{Arrival process.} The system is open to arriving customers on $[0,H]$. The arrival process of customers to the system is a non-stationary Poisson process with time varying rate $\{\lambda(t):t\in[0,H]\}$. Consider $\lambda(t) = {\lambda}_0\cdot  t(H-t) /H^2$. The system runs until the last customer completes services.
\textbf{Service times.} For the $i$-th customer, the type-one service time requirement $S_{i,1}$ and type-two service time requirement $S_{i,2}$ are jointly distributed log-normal distributions, with parameters $\mu_1,\mu_2,\sigma_1^2,\sigma_2^2,\rho$. Specifically, let $Y_{i,1}$ and $Y_{i,2}$ be jointly distributed Gaussian random variables with mean vector $(\mu_1,\mu_2)$ and covariance matrix $((\sigma_1^2,\rho \sigma_1\sigma_2),(\rho\sigma_1\sigma_2,\sigma_2^2))$. Then $(S_{i,1},S_{i,2})$ has the same distribution as $(\exp(Y_{i,1}),\exp(Y_{i,2}))$. The pairs $(S_{i,1},S_{i,2})$ for $i=1,2,3,\cdots$ are independent and identically distributed. \textbf{Abandonment and patience.} The $i$-th customer has a patience time $Pa_i$, independently and identically distributed according to an gamma distribution with rate parameter $\beta_a$ and shape parameter $\alpha_a$. The customer abandons the system when and only when she waits for more than $Pa_i$ time in the waiting room of Station One. \textbf{Pricing of service and customer reaction.} The system can set a price $p\in[0,1]$ and there is an elasticity function $q(p)=1-p$ for customers. That is, if the price is set as $p$, then each arriving customer from the aforementioned non-stationary Poisson process  has an independent probability $q(p)$ of accepting the price and entering the system, but otherwise rejecting the price and immediately leaving the system. \textbf{Queueing performance and objectives.} For each given staffing plan $\{x,K-x\}$ and service price $p$, denote $D(x,p,H)$ as total number of customers that end up accepting the price and receiving services in the system. The system receives an overall reward $pD(x,p,H)$. Denote $W(x,p,H)$ as the total amount of waiting times for all customers in either station. The system receives an overall penalty $c W(x,p,H)$ that is proportional to the total amount of waiting time.

\textbf{Optimization Goal.} The goal is to select the best staffing plan that maximizes the expected net reward. For each staffing plan, the price needs to be optimally set to maximize the expected net reward associated with that plan. Specifically, the optimization problem is give by
$$\max_{x\in\{1,2,\cdots,K-1\}} f(x) \triangleq \sup_{p\in[0,1]} \E [pD(x,p,H)-cW(x,p,H)]. $$
Note that the most costly computational part is for any given $x$ and $p$ to obtain a sample of  $pD(x,p,H)-cW(x,p,H)$, which requires running through the entire time horizon of system logic. This optimization problem can be classified as a staffing-pricing joint decision making problem, which have been widely considered in the literature and related applications. See \cite*{kim2018value}, \cite*{lee2019pricing}, \cite*{chen2020online} and references within. Most of work in this literature presumes the system to have a steady-state behavior and uses the steady-state vehicle to derive insightful decisions. We alternatively focus on providing a computational tool when some applications desire the selection of an optimized staffing plan but observe non-stationarities and potentially complicated system uncertainties. In presence of non-stationarities and potentially complicated system uncertainties, it is often difficult to derive closed-form solutions and demands the use of Monte Carlo simulation to solve the associated optimization problem. In this example, we do not consider the use of common random numbers, which can be potentially added as an additional tool to improve efficiency for all algorithms in comparison.

For the experiment specifics, we choose $\lambda_0 = 0,\sigma_1 = 1,\sigma_2 =1, \rho =0.5, \mu_1 = \log(10) + \log(K), \mu_2 = \log(2)  + \log(K) $.  Therefore, there is an average $\int_{t=0}^H \lambda(t) \mathrm{d}t =\lambda_0 H/6=1000/3 $ and $\mathbb{E}[S_{i,1}]=10.5 + K$ and $\mathbb{E}[S_{i,2}]=2.5 + K$.   For the patience time distribution of the customer, we set $\beta_a = 1$ and $\alpha_a = 2\mu_1$. In the SEO and uniform sampling algorithm, the step-size constant $\gamma_0$ is chosen as $2/H = 1/1000$ and the initial point $p_1 = 0.5$.  The stochastic gradient $G_i(p,\xi)$ is obtained by finite difference gradient estimator. Specifically, we approximate the gradient by  $(F_i(p,\xi) - F_i(p-\Delta_p,\xi))/\Delta_p$, where we choose $\Delta_p = 0.03$ here. Therefore, to obtain one sample and one gradient, we need to evaluate the function $F_i(\cdot,\cdot)$ twice. To make a fair comparison, we use $T/2$ as the input in Algorithm \ref{alg:SE} for the total budget $T$ shown in $x$-axis of the figures below. For the OCBA approach, we discretize the space $[0,1]$ to 10 possible systems $\{0.1,0.2,\ldots,1\}$.

 Figure \ref{fig:queueing} shows the comparison between SEO, uniform sampling and OCBA algorithms in the optimal staffing and pricing problem. Figures \ref{K=16:q} - \ref{K=128:q} plot the probability of correct selection averaged over $1000$ replications as a function of increasing budget, for $K=16,40,128$, respectively. The black solid line, the blue dashed line, and the orange dotted line represent the SEO, uniform, and OCBA algorithms, respectively. In this experiment, since there does not exist an analytical optimal solution, we do not report the optimality gaps in this setting.

 Here $\xi_j$ are the iid random variables we use to simulate the outcome $pD(x,p,H) - cW(x,p,H)$.
 It is evident to see that our proposed SEO algorithm performs better than the uniform and the OCBA algorithms for almost every $K$ and $T$. The only exception is that when $K=16$ and $T$ small, likely due to the initial estimation bias. Another thing worth noting is that in theory, each line in those figures should be monotonically increasing. The zig-zag phenomena in those figures are due to random errors. \ref{ec:numerical:queue} contain additional plots.

\begin{figure}[htbh]
\centering
\subfigure[$K=16$]{
\label{K=16:q} \includegraphics[width=1.91in]{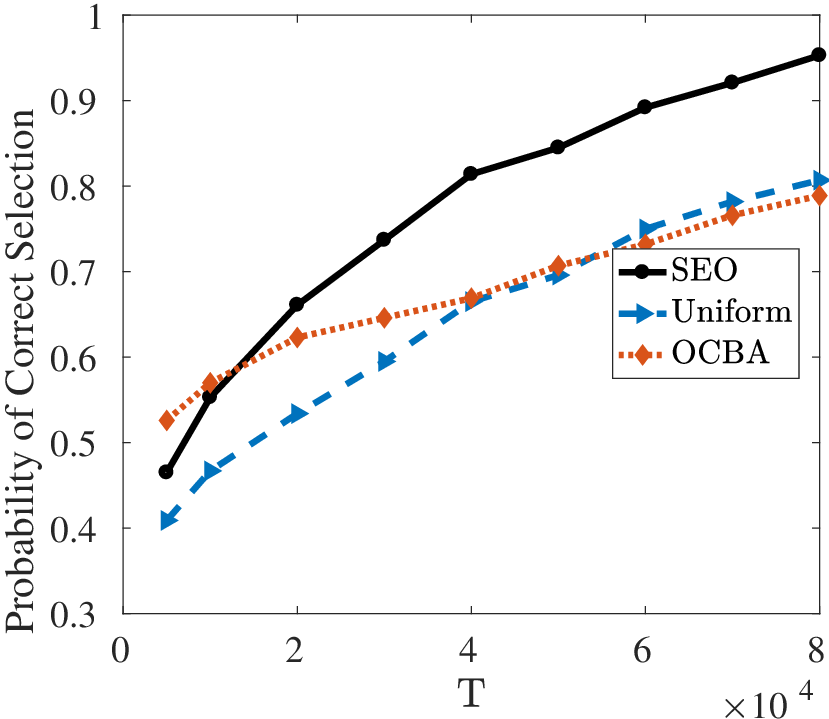}}
\subfigure[$K=40$]{
\label{K=40:q} \includegraphics[width=1.91in]{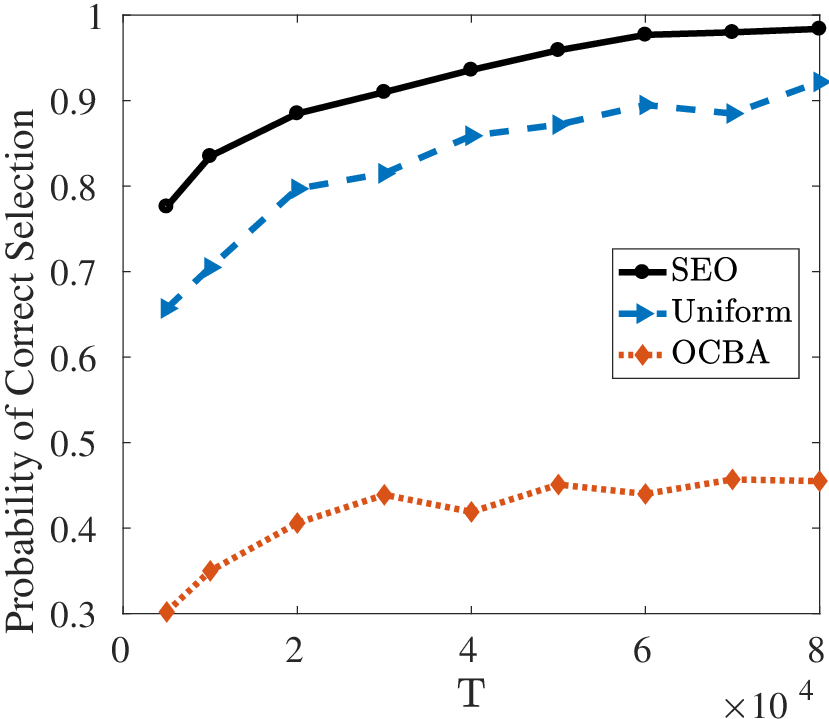}}
\subfigure[$K=128$]{
\label{K=128:q} \includegraphics[width=1.91in]{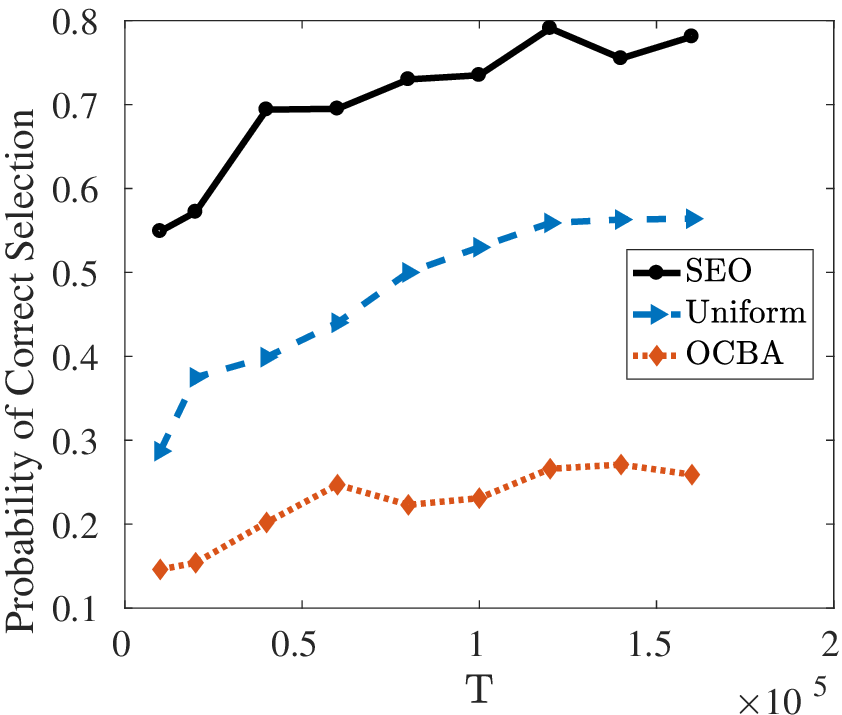}}
\caption{The comparison between SEO, uniform sampling and OCBA in the optimal staffing and pricing problem.}
\label{fig:queueing}
\end{figure}
\label{sec:numerical:sim}
\subsection{Optimal dosage in the selection of the best drug}
In this example, we consider $K$ different drugs (or treatment plans) that are being compared to treat a disease for a targeted population. Each drug can have different expected effect on the population with different dosage amount (\cite{erman2006efficacy,verweij2020randomized}). It is a priori not known for each drug what is the dosage amount that has the best expected effect for that drug among a continuous range of allowable dosage amount. Suppose that one can sequentially do $T$ experiments, where each experiment selects one of the $K$ drugs and a specific dosage of that drug. Suppose that for each experiment, a noisy observation can be obtained on the effect without much delay.
The goal is to select one drug with the best expected effect under the best dosage amount for each drug. In this experiment setting, we presume that for each drug, the expected effect as a function of the dosage amount is concave. This concavity assumption on one hand has been captured by empirical evidence for some drugs (e.g., \cite{verweij2020randomized} identifies a quadratic function form) and on the other hand captures the intuition that neither too small dosage nor too large dosage is desirable.

In the new experiments, we based on the results in \cite{verweij2020randomized}, which  examined the dose-response of aprocitentan. The effect is measured by the mean change from baseline in sitting diastolic blood pressure (SiSBP) and the dosage amount ranges from 0 to 50 mg. The small SiSBP is, the better. Since the data is not public, we fit \citet[Figure 3A]{verweij2020randomized}
as a quadratic function $a_* q^2 +b_* q +c_*$ with $a_*=9/1250, b_* =-23/50,c_* = -5 $ which is also plotted in Figure \ref{fig:dosage}. We call it the ``center" system.
\begin{figure}[!ht]
    \centering
    \includegraphics[width=1.91in]{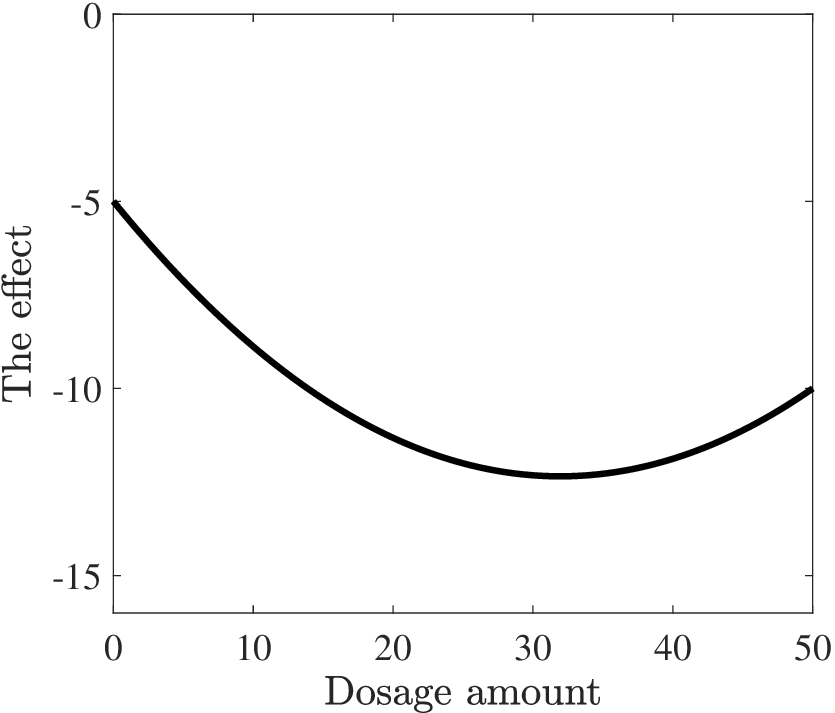}
    \caption{The effect curve with respect to the dosage amount \citep[Figure 3A]{verweij2020randomized}}
    \label{fig:dosage}
\end{figure}
We perturb  the ``center" system to generate $K$ possible systems. Specifically, we first generate $K$ uniform random numbers $u_1,u_2,\ldots,u_K$ supported in $[-0.1,0.1]$. Then, the $i$-th system is a quadratic function of the form $a_i q^2 +b_i q +c_i + \epsilon$ with $[a_i,b_i,c_i] = (1+u_i)\times[a_*, b_* ,c_* ]$ and $\epsilon \sim \mathcal{N}(0,1)$ for $i=1,2,\ldots,K$. For our algorithm (SEO) and  the uniform sampling algorithm, we pick the starting point $x_0=25$ and the step-size constant $\gamma_0=1$.  Similar with the simulation optimization discussed in Section \ref{sec:numerical:sim}, we use finite difference gradient estimator with $\Delta_x=0.5$. The difference is that we cannot use common random number to generate two samples with $x$ and $x-\Delta_x$. Therefore the variance of the gradient will be enlarged. For the OCBA algorithm, we discretize the dosage space as $[11,12,\ldots,40]$.


Figure \ref{fig:dosage_comp} shows the comparison between SEO, uniform sampling and OCBA algorithms in the optimal dosage problem, which is an analog of  \ref{fig:queueing} in Section \ref{sec:numerical:sim}. We see a clear advantage of our algorithm over the other two algorithms. Since the problem is inherently hard problem as we enforce different drugs have similar effects, the probability of correct selection is still high when $K\leq 40$. More plots are contained in \ref{ec:numerical:dose}.

\begin{figure}[htbh]
\centering
\subfigure[$K=16$]{
\label{K=16:dose} \includegraphics[width=1.91in]{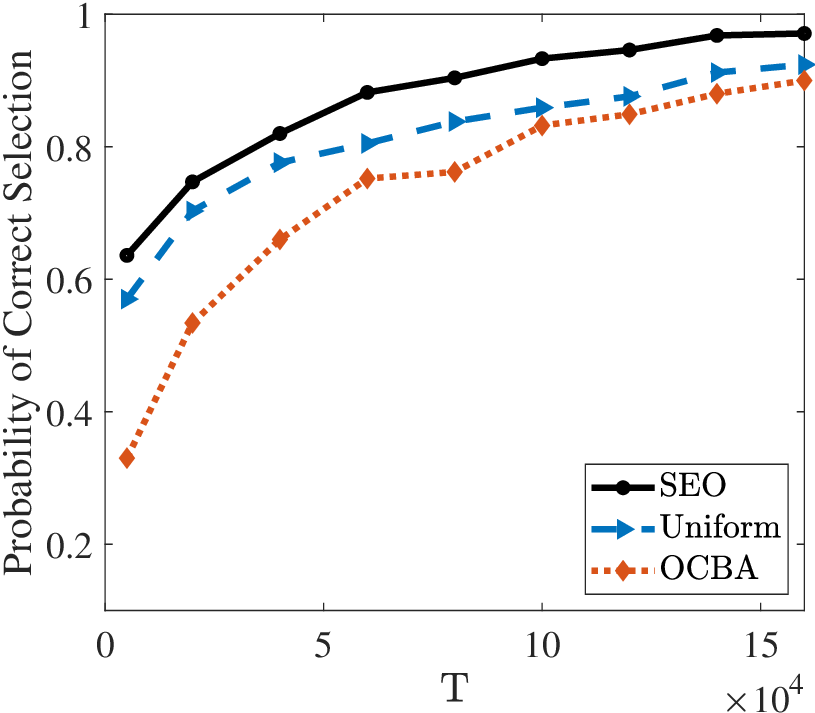}}
\subfigure[$K=40$]{
\label{K=40:dose} \includegraphics[width=1.91in]{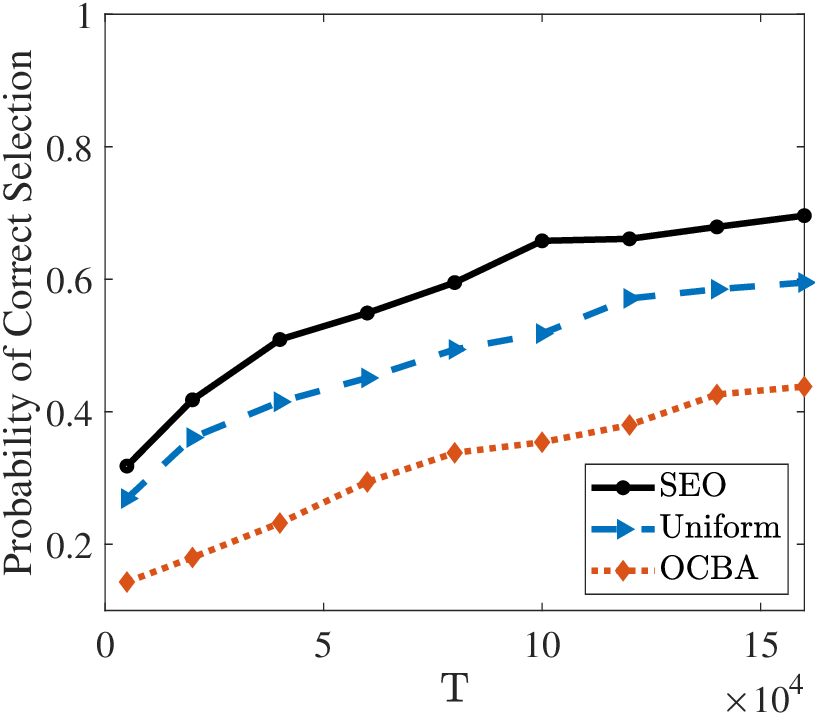}}
\subfigure[$K=128$]{
\label{K=128:dose} \includegraphics[width=1.91in]{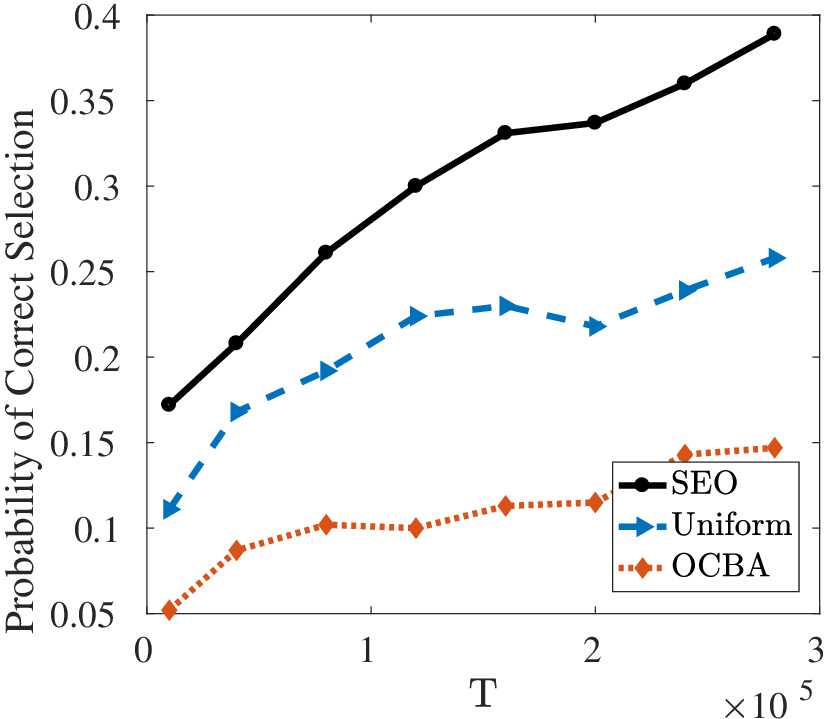}}
\caption{The comparison between SEO, uniform sampling and OCBA in the optimal dosage problem.}
\label{fig:dosage_comp}
\end{figure}
\label{sec:drug:app}

\begin{figure}[htbh]
\centering
\subfigure[$K=16$]{
\label{gK=16:dose} \includegraphics[width=1.91in]{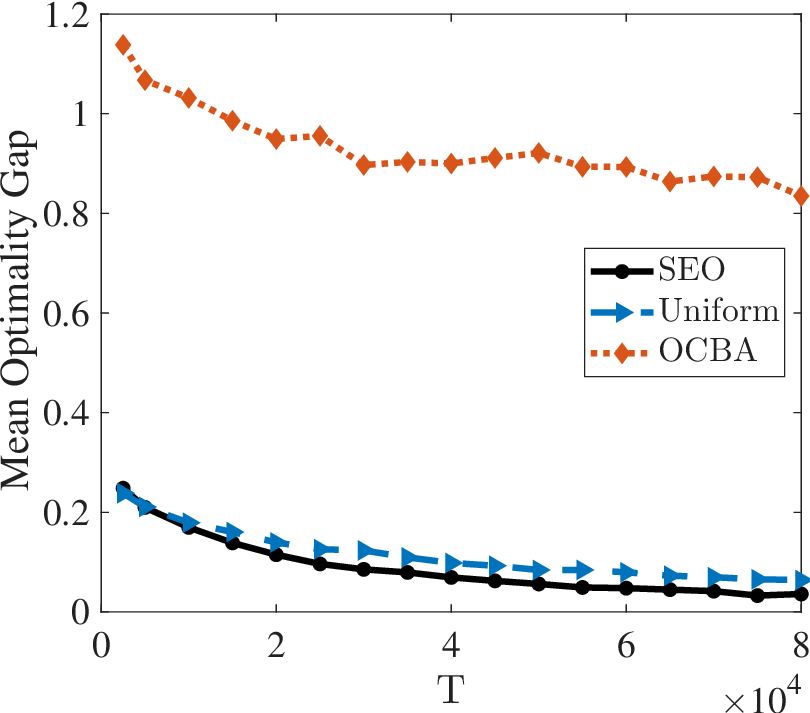}}
\subfigure[$K=40$]{
\label{gK=40:dose} \includegraphics[width=1.91in]{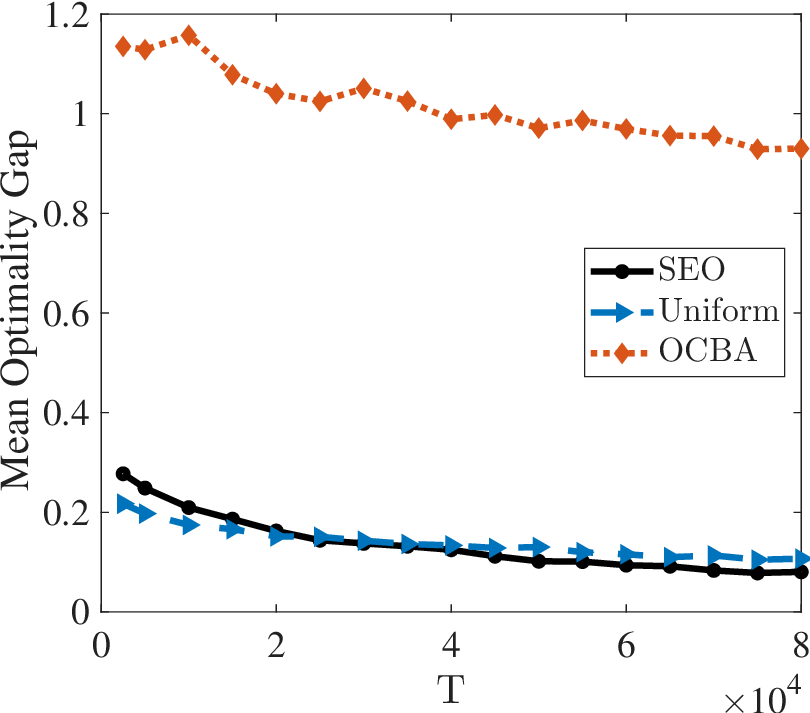}}
\subfigure[$K=128$]{
\label{gK=128:dose} \includegraphics[width=1.91in]{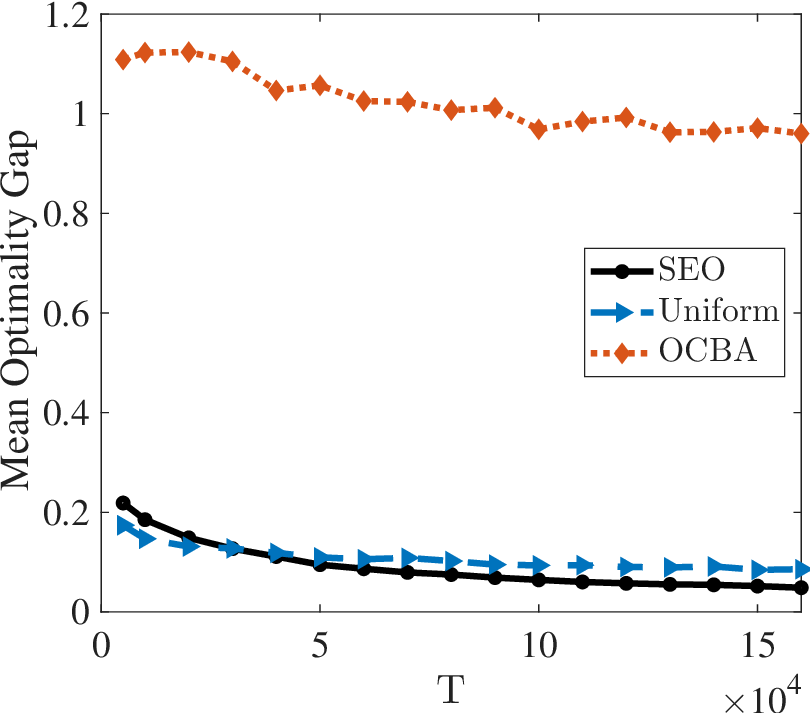}}
\caption{The comparison of optimality gaps between SEO, uniform sampling and OCBA in the optimal dosage problem.}
\label{fig:dosage_gap_comp}
\end{figure}

Since in this setting, we have an analytical optimal solution, we show the mean optimality gap as a function of $T$ in Figure \ref{fig:dosage_gap_comp}, where the mean optimality gap is defined as 
$$\text{Mean Optimality Gap}=\frac{1}{N} \sum_{j=1}^N \bigg[ (a_j x_j^2 + b_j x_j +c_j) - \frac{4a_*c_*-b_*^2}{4a_*} (1+\max_i u_i)\bigg],$$
where $[a_j,b_j,c_j]$ is the selection of the system of the algorithm in experiment $j$, $N$ is the total number of replications, and $x_j$ is the estimated optimal point.
{From Figure \ref{fig:dosage_gap_comp}, overall, SEO and Uniform sampling out perform the OCBA algorithm. 
The advantage of SEO over Uniform becomes more pronounced as the budget increases, suggesting its superior efficiency in directing simulation effort. In contrast, OCBA performs worse, with mean optimality gaps remaining substantially higher and showing only modest improvement with increasing budget.
Furthermore, in very limited budget cases, the SEO algorithm does not outperform uniform sampling. This is because uniform sampling ensures that every system is evaluated equally, whereas the inner gradient descent step in SEO may fail to converge and still exhibit substantial bias when the budget is limited.
}

\subsection{Newsvendor Problem in the selection of the best product}

Newsvendor problems \citet{petruzzi1999pricing,arrow1951optimal,chen2016novel} prevail for decades in revenue management, operations research, and management science, as they are tractable yet still can capture many important realistic characteristics in practice. In the big-data era, date-driven newsvendor problems \citep{huber2019data,ban2019big} gains even more attentions since we can utilize more data to get better estimation of uncertainties, therefore informing better business decisions. However, data collection (or data purchasing) can be costly in practice, yielding a need to intelligently collecting data to achieve high-quality decisions.

In this example, we assume there are $K$ products as contenders. Each product is a system, having their own price, cost structure and different demand function. Specifically, we consider the function classes
$\mathcal{F}_{p,c}=\left\{ f_{q}|f_{q}(x)=cq-p\min \left\{
q,x\right\},q\geq0 \right\},$
where $p$ is the price and $c$ denotes the cost. We consider the newsvendar problem $
v_i = \sup_{g \in \mathcal{F}_{p_i,c_i}} \mathbb{E}_{P^i}[g(X)]=\sup_{q\geq0}\mathbb{E}_{P^i}[p\min \left\{
q,x\right\} - cq],$
where $p_i,c_i,P_i$ stands for the price, cost and the demand distribution for the $i$-th product. Here, we assume $p_i$ and $c_i$ are known beforehand. However, $P_i$ is unknown and we have access to collect samples from the distribution $P_i$. The goal is to find the best product that gives us the most profits. 

For the experiment specifics, we assume that $p_i = \tfrac{1}{2}i + 5$ and $c_i= \tfrac{1}{5} i + 1$ and the $P_i$ follows $\mathrm{Poi}(\lambda_i)$, a Poisson distribution with rate $\lambda_i = 250 - 6\times i$. The problem (\ref{prob:saa}) can be solved in closed form since the empirical optimal solution is the $(p_i-c_i)/p_i$ quantile of the empirical distribution. We also report the mean optimality gaps of SEO and Uniform. The optimality gap is defined as 
$$\text{Mean Optimality Gap} = v^* - \frac{1}{N} \sum_{j=1}^N \mathbb E_{X \sim P^{k_j}}[p_{k_j} \min \{q_j,X\} - c_{k_j} q_j],$$
where $k_j$ is the selection of the algorithm in the $j$th experiment and $v^*$ is the overall optimal value.

Figures \ref{fig:newsvendor} shows the comparison between SEO and uniform sampling algorithms in the Newsvendor problem. It is clear that our algorithm has consistently higher probability of correct selection than one of the uniform sampling algorithm. Further, superiority is even more significant when the number of product $K$ is large. We do not compare our algorithm with OCBA since the focus here is the sample-efficiency. Once we collect samples that reflect the unknown underlying distributions, the optimization part is easy and straightforward. Therefore, it is not relevant to discretize the decision space inside each system and perform OCBA.  \ref{ec:numerical:newsvendor} contain additional plots. From Figure \ref{fig:newsvendor1} and  \ref{fig:newsvendor}, it is shown that SEO outperforms uniform sampling for all  values of system number $K$ and budget $T$.
\begin{figure}[htbh]
\centering
\subfigure[$K=16$]{
\label{K=16:newsvendor} \includegraphics[width=1.91in]{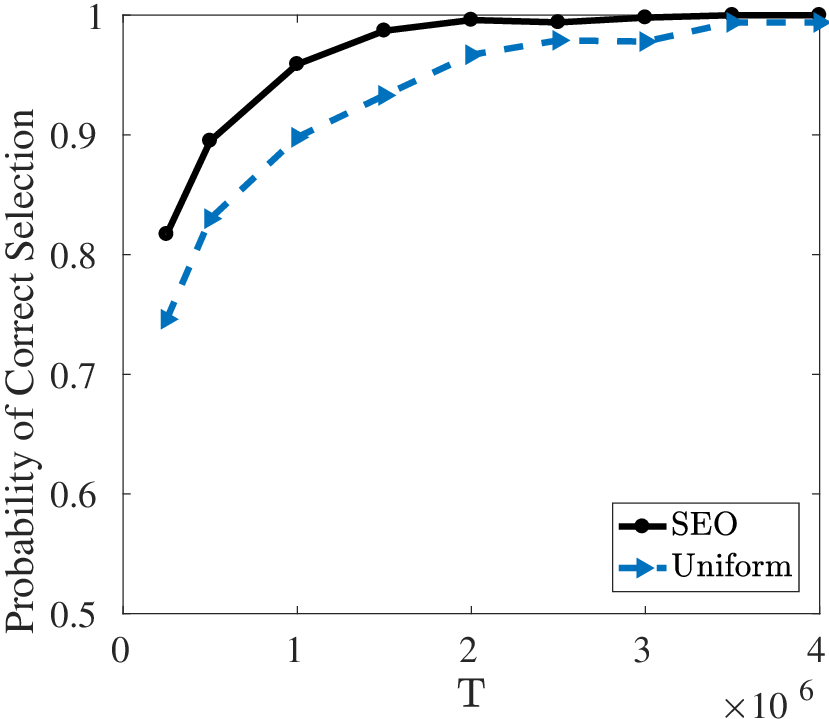}}
\subfigure[$K=40$]{
\label{K=40:newsvendor} \includegraphics[width=1.91in]{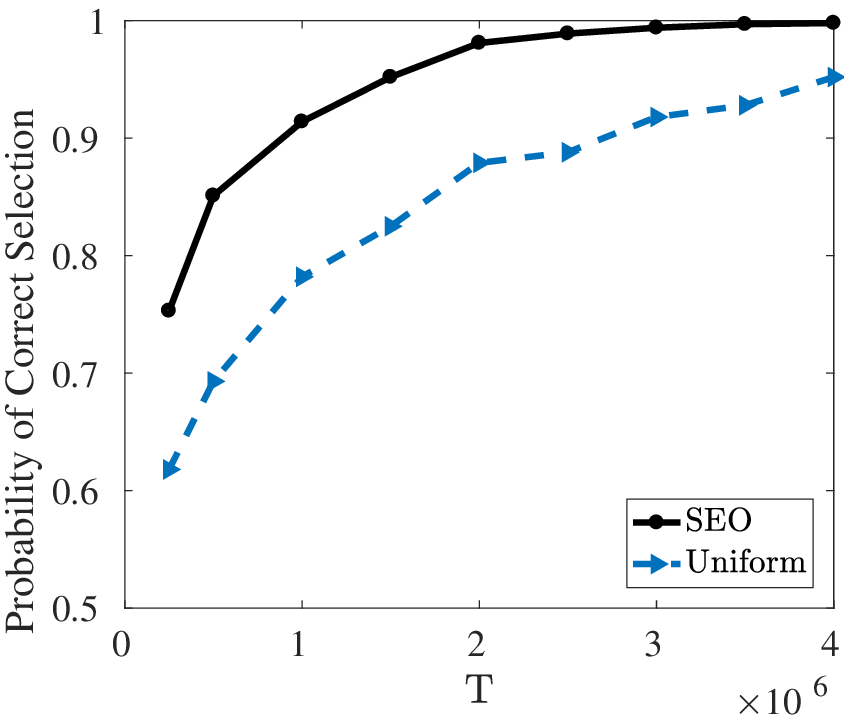}}
\subfigure[$K=128$]{
\label{K=128:newsvendor} \includegraphics[width=1.91in]{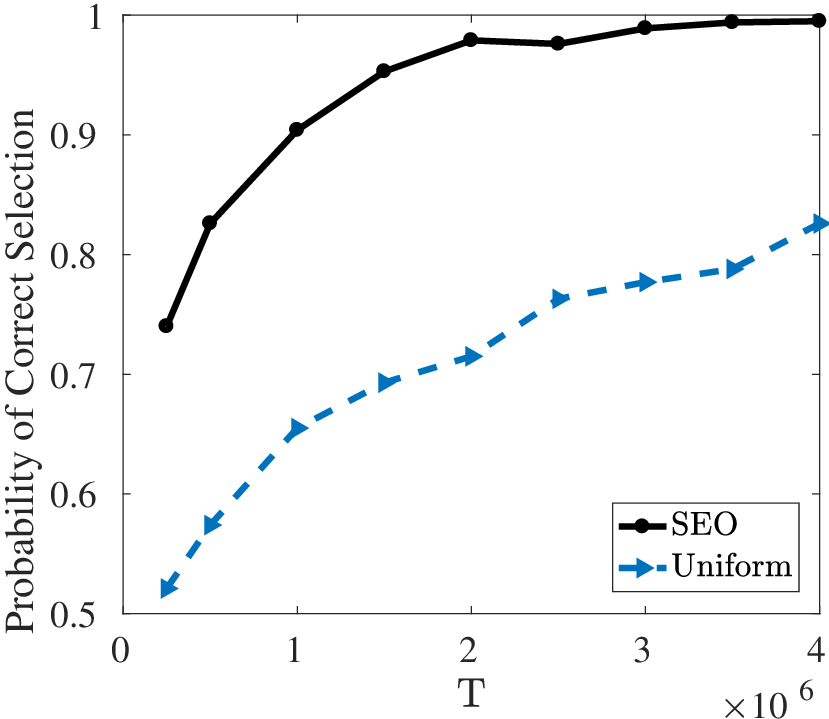}}
\caption{The comparison between SEO and uniform sampling in the newsvendor problem.}
\label{fig:newsvendor1}
\end{figure}

\begin{figure}[htbh]
\centering
\subfigure[$K=16$]{
\label{gK=16:newsvendor} \includegraphics[width=1.91in]{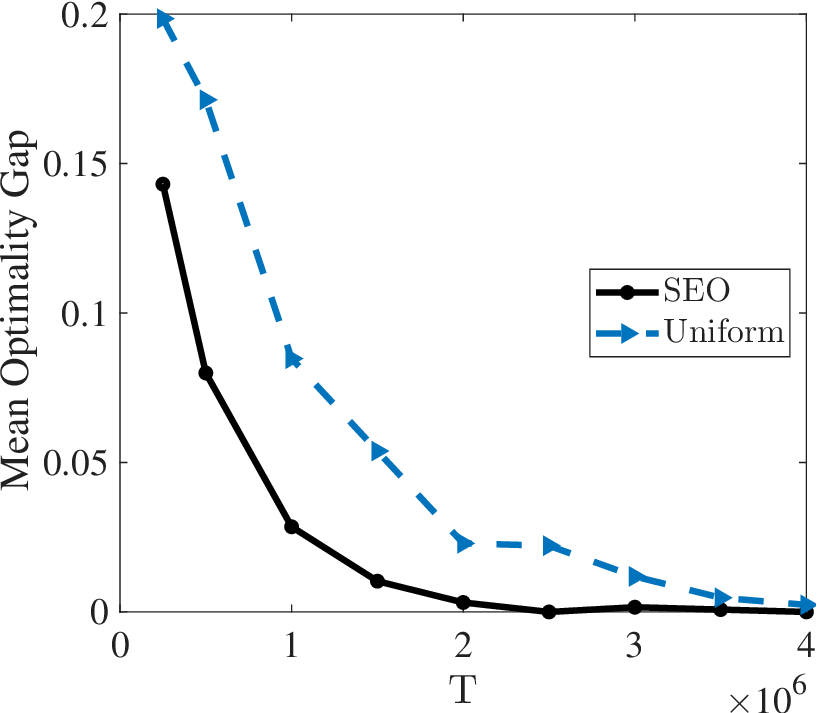}}
\subfigure[$K=40$]{
\label{gK=40:newsvendor} \includegraphics[width=1.91in]{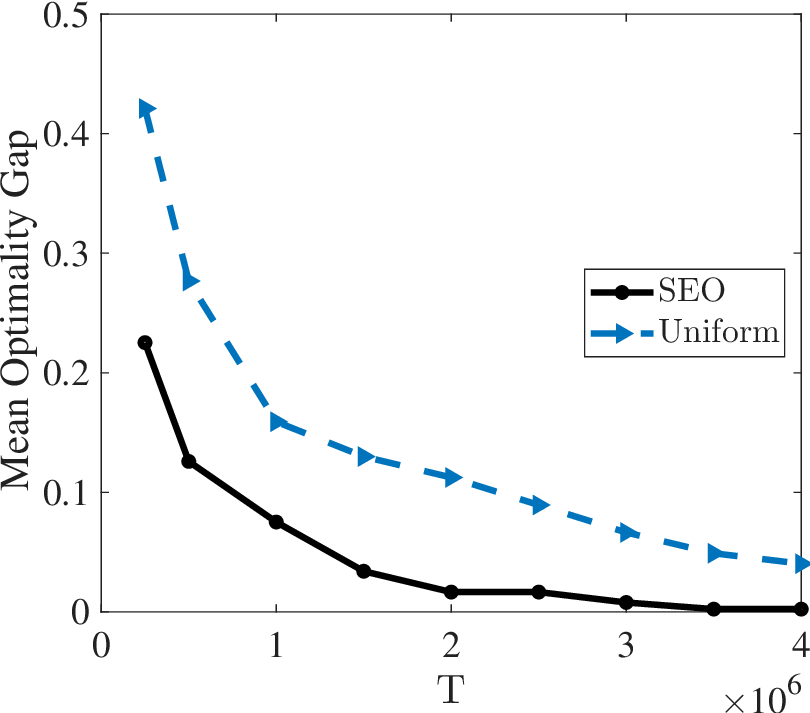}}
\subfigure[$K=128$]{
\label{gK=128:newsvendor} \includegraphics[width=1.91in]{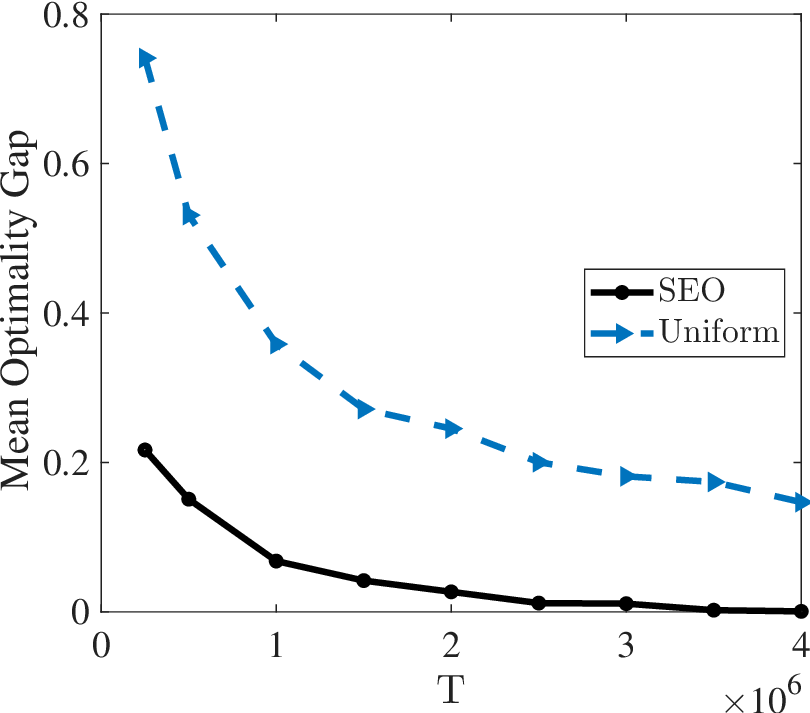}}
\caption{The comparison between the optimality gaps of SEO and uniform sampling in the newsvendor problem.}
\label{fig:newsvendor}
\end{figure}

\section{Conclusion}
In this work, we formulate and solve a class of problems termed selecting the best optimizing system (SBOS). We propose a simple algorithm that sequentially allocates samples across systems and identifies the best one under a fixed sampling budget. The algorithm combines sequential elimination with stochastic gradient descent, and we show that its probability of false selection decays exponentially with the budget. 
Future work includes two directions: (i) extending SBOS to the fixed-precision (fixed-confidence) framework, which poses different challenges for design and analysis, and (ii) developing distributionally robust formulations to address system non-stationarities, where models today may differ from those in the near future.

\bibliographystyle{plainnat} 
\bibliography{reference}

\setcounter{section}{0} \setcounter{subsection}{0} \setcounter{equation}{0}
\ \renewcommand
{\thesection}{Appendix \Alph{section}}
\renewcommand\thesubsection{\thesection.\arabic{subsection}}
\renewcommand{\theequation}{\Alph{section}.\arabic{equation}}
\renewcommand{\thelemma}{\Alph{section}\arabic{lemma}}
\renewcommand{\thetheorem}{\Alph{section}\arabic{theorem}}
\section{Optimal Computing Budget Allocation (OCBA) Algorithm}
We give the details of the Optimal Computing Budget Allocation (OCBA) Algorithm in this section.
\label{EC.1:algo}
\begin{breakablealgorithm}
	\caption{Optimal Computing Budget Allocation (OCBA) for Optimizing Systems}
	\label{alg:OCBA}
	\begin{algorithmic}[1]
		\State \textbf{Input: } $N$, $K$, the discretization $d$-element $\mathcal{X}^d = [x_1,\ldots,x_d]$ of the feasible set (can be a function of $i$),  and the initial phase proportion $\alpha_0$.
		\State \textbf{Initalization:} Set $N_0=\max\{2, \alpha_0 T/K/d\}$ and let $\ell = N_0  K  d$. Sample each system with each possible decision $x_j$ in the set $\mathcal{X}^d$, $N_0$ times and compute mean $\bar{X}_{ij}(\ell)$ and variance $S^2_{ij}(\ell)$ for $i \in\{1,2,\ldots,K\}$ and $j \in [d]$. Let $N_{ij}(\ell) =N_0$ for $i \in\{1,2,\ldots,K\}$ and $j \in [d]$.
		\While{$\ell < T $}
		
		\State Compute the best system with the best decision $\{\hat{b}^s,\hat{b}^d\} \gets \arg\max_{i\in\{1,2,\ldots,K\},j\in[d]} \bar{X}_{ij}$.
		\State Set $s\gets0$.
		\For{$\{i,j\} =\{1,1\}\ldots \{K,d\}$}
		\If{$\{i,j\} \neq \{\hat{b}^s,\hat{b}^d\}$}
		
		\State Compute $\hat{\beta}_{ij}\gets S^2_{ij}(\ell) / [\bar{X}_{\hat{b}^s\hat{b}^d}-\bar{X}_{ij}]^2$.
		\State Compute $s\gets s + \hat{\beta}_{ij}^2/S^2_{ij}(\ell) $;
		\EndIf
			\EndFor
		\State Compute $\hat{\beta}_{\hat{b}^s\hat{b}^d} \gets S_{\hat{b}^s\hat{b}^d}(\ell) \sqrt{s}.$
		\State Run one replication for the system with the decision $\{i^*,j^*\}\gets \arg\max_{i\in\{1,2,\ldots,K\},j \in[d]}\hat{\beta}_{ij}/N_{ij}(\ell)$.
		\State $N_{i^*j^*}(\ell + 1) \gets N_{i^*j^*}(\ell ) + 1$, and update $\bar{X}_{i^*j^*}(\ell+1)$ and $S^2_{i^*j^*}(\ell+1)$.
		\State $\ell \gets \ell +1$.
		\EndWhile
		\State \textbf{Output: } Compute $\{\hat{b}^s,\hat{b}^d\} \gets \arg\max_{i\in\{1,2,\ldots,K\},j\in[d]} \bar{X}_{ij}(T)$ and return $\hat{b}^s$.
	\end{algorithmic}
\end{breakablealgorithm}

\medskip

To ensure the probability of false selection converging to zero when $T\rightarrow +\infty$, we show that technically the cardinality $d$ of the discretization set needs to be very large in Lemma \ref{lma:ocba_d}. Lemma \ref{lma:ocba_d} shows that the cardinality $d$ of the discretization set $\mathcal X^d$ cannot be small for large $T$, or Algorithm 3 is not asymptotically optimal.
\begin{lemma}
\label{lma:ocba_d}
We consider a simulation optimization regime. Let $\mathcal{O}_{M,\mathcal{X},K,a}$ contain all SBOS instances with K systems, Lipschitz constant less than $M$, the complexity term $H$ less than $a$. and the inner-layer decision space $\mathcal{X}$. Then, if $d<\frac{1}{2}{MD_{\mathcal{X}}\sqrt{a}}$ -1, there exists an instance in $\mathcal{O}_{M,\mathcal{X},K,a}$ such that the probability of false selection of Algorithm \ref{alg:OCBA} is at least $1/2$.
\end{lemma}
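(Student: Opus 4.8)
The plan is to build an adversarial SBOS instance in $\mathcal{O}_{M,\mathcal{X},K,a}$ on which every point of the discretization underestimates the optimizing performance of the true best system, while the second‑best system is unaffected, so that OCBA‑with‑discretization is driven to a wrong recommendation. Fix any $d$‑point set $\mathcal{X}^d=\{x_1,\dots,x_d\}\subseteq\mathcal{X}$. Since $\mathcal{X}$ is compact and convex, the segment joining two diameter‑realizing points lies in $\mathcal{X}$; projecting the $x_j$'s onto that segment and applying a one‑dimensional pigeonhole argument produces a point $x^\ast\in\mathcal{X}$ whose distance to $\mathcal{X}^d$ is at least the covering radius $\delta_d\ge \operatorname{diam}(\mathcal{X})/(2d)=\sqrt{2D_{\mathcal{X}}}/(2d)$. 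Now set $f_1(x)=v_1-M\|x-x^\ast\|_2$ (a concave tent, $M$‑Lipschitz, with unconstrained maximizer $x^\ast$) and $f_i\equiv v_i$ for $i\ge2$, with $v_1>v_2\ge\cdots\ge v_K$ chosen so that $H(v)=\sum_{i\ge2}\Delta_i^{-2}\le a$; note this forces $\Delta_2\ge a^{-1/2}$, and for $K\ge3$ we take $\Delta_2$ just above $a^{-1/2}$ and $\Delta_3,\dots,\Delta_K$ large enough to keep $H(v)\le a$. Adding arbitrarily small additive Gaussian observation noise makes Assumption~\ref{assump:gd} hold trivially. On the grid, $\max_j f_1(x_j)=v_1-M\,\operatorname{dist}(x^\ast,\mathcal{X}^d)\le v_1-M\delta_d$, whereas $\max_j f_i(x_j)=v_i$ for $i\ge2$.

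The crux is the parameter balance: we need the coarse grid to make system~$1$ look strictly worse than system~$2$, i.e.\ $v_1-M\delta_d<v_2$, i.e.\ $M\delta_d>\Delta_2$. Combining $\delta_d\ge\sqrt{2D_{\mathcal{X}}}/(2d)$ with the forced lower bound $\Delta_2\ge a^{-1/2}$ (absorbing a constant of slack when $K\ge3$), this is achievable precisely for $d$ below the threshold displayed in the statement. For such $d$ we may therefore realize the instance with $v_1-M\delta_d<v_2$, so that among the $Kd$ ``arms'' $(i,x_j)$ the one with the largest mean, namely $(2,x_j)$ for any $j$, belongs to system~$2$, not to system~$1$.

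Finally I would verify that OCBA follows this bias. Because Algorithm~\ref{alg:OCBA} sets $N_0=\max\{2,\alpha_0 T/(Kd)\}$ linear in $T$ and samples every arm at least $N_0$ times, as $T\to\infty$ every arm is sampled infinitely often, so by the strong law $\bar X_{ij}(T)\to f_i(x_j)$ almost surely, simultaneously over the finitely many arms; hence $\arg\max_{i,j}\bar X_{ij}(T)$ eventually equals a grid‑optimal arm $(2,\cdot)$, and the output $\hat b^s$ equals $2\neq \arg\max_i v_i=1$. Thus on this instance $\mathrm{PFS}\to1$, so $\mathrm{PFS}\ge 1/2$ for all large $T$ (in particular it does not vanish), which is what the lemma asserts. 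The main obstacle is the parameter‑balancing step — squeezing $\Delta_2$ between $a^{-1/2}$ (from $H\le a$) and $M\delta_d$ (from the pigeonhole bound on the covering radius), which is exactly what pins down the threshold on $d$; the remaining pieces (the covering‑radius estimate, the OCBA consistency argument, the $K=2$ edge case, and checking the regularity assumptions for the added noise) are routine.
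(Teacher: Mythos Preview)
Your construction is valid and reaches the right conclusion, but it takes a different route from the paper's proof, and the difference is worth noting.

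The paper does not try to make system~$1$ look \emph{strictly worse} than system~$2$ on the grid. Instead it makes systems~$1$ and~$2$ \emph{identical} at every discretization point: both are the tent $-M|x-m|+\xi$ with $m=(x_j+x_{j+1})/2$, except that system~$2$ has its peak chopped flat on $(x_j,x_{j+1})$ while system~$1$ keeps the full tent. Thus $v_1=0>v_2=-M(x_{j+1}-x_j)/2$, yet at every $x\in\mathcal{X}^d$ the two systems have the same distribution. A symmetry (label--swapping) argument then gives $\mathbb{P}(\text{output}=1)=\mathbb{P}(\text{output}=2)\le 1/2$, hence $\mathrm{PFS}\ge 1/2$ for \emph{every} $T$, with no need to analyze the dynamics of OCBA at all. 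The pigeonhole step and the $H\le a$ bookkeeping are then essentially the same as yours.

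Your approach instead produces a strict grid misordering $\max_j f_1(x_j)<\max_j f_2(x_j)$ and then invokes consistency of OCBA (via $N_0\propto T$ and the strong law) to get $\mathrm{PFS}\to 1$. This buys you a stronger asymptotic conclusion, but at two costs: (i) you must argue about OCBA's long-run behavior, which the paper's symmetry trick sidesteps entirely; and (ii) your bound $\mathrm{PFS}\ge 1/2$ holds only for sufficiently large $T$, not uniformly in $T$ as the lemma literally reads. In context (the lemma is there to show discretized OCBA cannot drive $\mathrm{PFS}\to 0$) your conclusion is adequate, but the paper's indistinguishability device is the cleaner way to get the stated uniform bound. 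A minor point: your covering-radius estimate gives a threshold in terms of $\operatorname{diam}(\mathcal{X})=\sqrt{2D_{\mathcal{X}}}$ rather than $D_{\mathcal{X}}$; the paper's own proof effectively uses $D_{\mathcal{X}}$ to denote the diameter in this lemma, so the constants line up once that notational slip is accounted for.
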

The proof of Lemma \ref{lma:ocba_d} is in \ref{proof:lma:ocba}.
\section{Proofs of Statements}
\subsection{Proof of results in Section \ref{sec:theory:sim}} \label{ec:proof_sec_2}
\begin{proof}{Proof of Proposition \ref{prop:conv}}
Note that
$$
	\left\vert \hat{v}_{i}^{T}-v_{i}\right\vert =\left\vert \hat{v}_{i}^{T}-%
	\frac{1}{T}\sum_{t=1}^{T}f_{i}(x_{t})+\frac{1}{T}%
	\sum_{t=1}^{T}f_{i}(x_{t})-v_{i}\right\vert
	\leq \underbrace{\left\vert \hat{v}_{i}^{T}-\frac{1}{T}%
		\sum_{t=1}^{T}f_{i}(x_{t})\right\vert }_{(a)}+\underbrace{%
		v_{i}-\frac{1}{T}\sum_{t=1}^{T}f_{i}(x_{t})}_{(b)}.
$$
For part(a), we define a martingale $\{Z_{n}\}_{n=1}^{T}$ with $Z_{0}=0$ and
\begin{equation*}
	Z_{n}=Z_{n-1}+F_{i}(x_{t},\xi _{t})-\mathbb{E}[F_{i}(x_{t},\xi _{t})|\xi
	_{1:i-1}]=Z_{n-1}+F_{i}(x_{t},\xi _{t})-f_{i}(x_{t}).
\end{equation*}%
We have
\begin{equation*}
	\exp (\lambda \left( Z_{n}-Z_{n-1}\right) |\xi _{1:i-1})\leq \exp \left(
	\frac{\lambda ^{2}\sigma _{F,i}^{2}}{2}\right) .
\end{equation*}%
Then, by Azuma's inequality, we have, for any $\epsilon _1>0$, 
\begin{equation*}
	\mathbb{P}\left( \underbrace{\left\vert \hat{v}_{i}^{T}-\frac{1}{T}%
		\sum_{t=1}^{T}f_{i}(x_{t})\right\vert } \geq \epsilon_1 \right)=\mathbb{P}%
	\left( \left\vert \frac{Z_{T}}{T}\right\vert \geq \epsilon _{1}\right) \leq
	2\exp \left( -\frac{2T\epsilon _{1}^{2}}{\sigma _{F,i}^{2}}\right) .
\end{equation*}%

{For part (b), recall that $y_{t+1} = x_t+\gamma G(x_t,\xi_t)$, $x_{t+1} = \Pi_{\mathcal X}(y_{t+1})$. Then, by the contraction property of projection map, we have}
$$\begin{aligned}
    \|x_{t+1}-x^*\|^2 & = \| \Pi_{\mathcal X}(y_{t+1}) - \Pi_{\mathcal X}(x^*)\|^2 \\
    & \leq \|y_{t+1} - x^* \|^2 \\
    & = \| x_t + \gamma G(x_t,\xi_t) - x^*\|^2  \\
    & = \|x_t-x^* \|^2 + \gamma^2 \|G(x_t,\xi_t)\|^2 + 2\gamma \left \langle G(x_t,\xi_t),x_t-x^*\right\rangle.
\end{aligned}$$
Adding the above up from $t=1$ to $T$, we have
$$\| x_{T+1} - x^*\|^2 - \| x_1 - x^*\|^2 \leq \gamma^2 \sum_{t=1}^T \|G(x_t,\xi_t) \|^2 + 2\gamma \sum_{t=1}^T \left\langle G(x_t,\xi_t),x_t-x^*\right\rangle.$$
Then, note that (4.1.10) in \cite*{lan2020first} is also vaild for $\frac{1}{T}%
\sum_{t=1}^{T}f_{i}(x_{t})$ in the sense that
$$
\begin{aligned}
	&v_{i}-\frac{1}{T}\sum_{t=1}^{T}f_{i}(x_{t})\leq \frac{1}{T}\left[ \frac{1}{%
		2}\left( x_{1}-x^{\ast }\right) ^{2}-\frac{1}{2}\left( x_{T+1}-x^{\ast
	}\right) ^{2}\right. \\
	&+\left. 2\sum_{t=1}^{T}\gamma ^{2}(M^{2}+\left\Vert G_{i}(x_{t},\xi
	_{t})-f_{i}^{\prime }(x_{t})\right\Vert _{2}^{2}-\sum_{t=1}^{T}\gamma
	\left\langle G_{i}(x_{t},\xi _{t})-f_{i}^{\prime
	}(x_{t}),x_{t}-x^*\right\rangle \right] .
    \end{aligned}
$$
And Let $\delta _{t}=G_{i}(x_{t},\xi _{t})-f_{i}^{\prime }(x_{t}).$\ Since
the step-size is constant, after applying Markov inequality and Assumption \ref{assump:gd}.3, the (4.1.18) in \cite*{lan2020first} can be rewritten as
\begin{align}
	\mathbb{P}\left\{ \sum_{t=1}^{T}\left\Vert \delta _{t}\right\Vert
	_{2}^{2}/\sigma _{G,i}^{2}>(1+\lambda )T\right\} &\leq \exp (-(1+\lambda )T)%
	\mathbb{E}\left[ \exp \left( \sum_{t=1}^{T}\left\Vert \delta _{t}\right\Vert
	_{2}^{2}/\sigma _{G,i}^{2}\right) \right]  \label{4.1.18_reformulate2} \\
	&\leq \exp \left( -\lambda T\right) .  \notag
\end{align}

Then, by similar lines with \citet[Proposition 4.1]{lan2020first}, we have
\begin{equation*}
	\mathbb{P}\left( v_{i}-\frac{1}{T}\sum_{t=1}^{T}f_{i}(x_{t})\geq \frac{3D_{\mathcal{X%
		}}\sqrt{M^{2}+\sigma ^{2}}}{\sqrt{T}}+\epsilon _{2}\right) \leq \exp
	(-\epsilon _{2}T\sqrt{T}/\left( 3\sigma _{G,i}D_{\mathcal{X}_i}\right) )+\exp
	\left( -\epsilon _{2}^{2}T/\left( 27\sigma _{G,i}^{2}D_{\mathcal{X}%
	}^{2}\right) \right) .
\end{equation*}%
We let $\epsilon _{1}=\frac{\left( \sigma _{F,i}/\sqrt{2}\right) \epsilon }{3%
	\sqrt{3}\sigma _{G,i}D_{\mathcal{X}_i}+\sigma _{F,i}/\sqrt{2}}$ and $\epsilon
_{2}=\frac{3\sqrt{3}\sigma _{G,i}D_{\mathcal{X}_i}\epsilon }{3\sqrt{3}\sigma
	_{G,i}D_{\mathcal{X}_i}+\sigma _{F,i}/\sqrt{2}}.$ Then, we have
\begin{equation*}
	\frac{\epsilon _{2}^{2}}{27\sigma _{G,i}^{2}D_{\mathcal{X}_i}^{2}}=\frac{%
		2\epsilon _{1}^{2}}{\sigma _{F,i}^{2}}=\frac{\epsilon ^{2}}{\left( 3\sqrt{3}%
		\sigma _{G,i}D_{\mathcal{X}_i}+\sigma _{F,i}/\sqrt{2}\right) ^{2}}.
\end{equation*}
\end{proof}
\begin{proof}{Proof of Theorem \ref{thm:sim}.}
	
			We rely on the proof of Theorem 33.10 in \cite*{lattimore2020bandit}. It is easy to note that $|%
			\mathcal{A}_{\ell }|=\lfloor K2^{1-\ell } \rfloor$ and $T_{\ell }\geq T2^{\ell -1}/\lfloor\log
			_{2}\left( K\right) \rfloor/K.$ We first observe that
$$
				\mathbb{P}(\hat{v}_{1}^{T_\ell } \leq \hat{v}_{i}^{T_\ell }|\left\{
				i,1\right\} \subset \mathcal{A}_{\ell })
				\leq \mathbb{P}\left( \hat{v}_{1}^{T_\ell }\leq v_{1}+\Delta _{i}/2\right) +%
				\mathbb{P}\left( \hat{v}_{i}^{T_\ell }\geq v_{i}-\Delta _{i}/2\right) .
$$
		Since $
			T\geq \lfloor\log _{2}\left( K\right)\rfloor K\left( \max_{i\in \lbrack K]}\frac{24D_{\mathcal{X}_i}}{\Delta
				_{i}}\left( \sqrt{M^{2}+\sigma _{G,i}^{2}}\right) \right)
			^{2}
		$, we have
		\[
		\frac{3D_{\mathcal{X}_i}\sqrt{M^{2}+\sigma _{G,i}^{2}}}{\sqrt{T_l}} \leq \Delta_i/4.
		\]
		 By letting $%
			\epsilon =\Delta _{i}/4,$ and applying Proposition \ref{prop:conv} we have
			\begin{align}
				\mathbb{P}&\left( \left\vert \hat{v}_{i}^{T_\ell}-v_{i}\right\vert \geq \Delta_i/4 \right)				
				\\ &\leq 3\exp \left( -\frac{T_\ell\epsilon ^{2}}{16\left( 3\sqrt{3}\sigma _{G,i}D_{%
						\mathcal{X}_i}+\sigma _{F,i}/\sqrt{2}\right) ^{2}}\right) +\exp \left( -\frac{%
					\Delta_i T_\ell\sqrt{T_\ell}}{4(3\sigma _{G,i}D_{\mathcal{X}_i}+\sigma _{F,i}/\sqrt{6})}%
				\right) ,
			\end{align}
		which in turn gives us
		\begin{equation}
			\label{eq:single_wrong}
				\mathbb{P}(\hat{v}_{1}^{T_\ell } \leq \hat{v}_{i}^{T_\ell }|\left\{
				i,1\right\} \subset \mathcal{A}_{\ell })  \leq 6\exp \left( -\frac{T_\ell\Delta_i ^{2}}{48M_\sigma ^{2}}\right) +2\exp \left( -\frac{%
					\Delta_i T_\ell\sqrt{T_\ell}}{4M_\sigma} \right).%
			\end{equation}
			Next, we define a new set
			\begin{equation*}
				\mathcal{A}_{\ell }^{\prime }=\left\{ i\in \mathcal{A}_{\ell }\left\vert
				\sum_{j\leq i}\mathbb{I}\left\{ j\in \mathcal{A}_{\ell }\right\} >\lfloor \left\vert \mathcal{A}%
				_{\ell }\right\vert /4\rfloor \right. \right\} ,
			\end{equation*}%
			to be the bottom (ordered by true value) three-quarters of the systems in round $\ell .$ Then, if the
			optimal system is eliminated in this round, we must have
			\begin{equation}
				N_{\ell }=\sum_{i\in \mathcal{A}_{\ell }^{\prime }}\mathbb{I}\{\hat{v}%
				_{1}^{T_\ell }\geq \hat{v}_{i}^{T_\ell }\}\geq \left\lfloor\frac{1}{3}(|\mathcal{A}_{\ell
				}^{\prime }|+1)\right\rfloor.  \label{eqn:N_l2}
			\end{equation}%
			On the other hand, by applying the union bound to the bound (\ref{eq:single_wrong}),
			\begin{equation*}
				\mathbb{E}\left[ N_{\ell }\right] \leq |\mathcal{A}_{\ell }^{\prime
				}|\max_{i\in \mathcal{A}_{\ell }^{\prime }} \left\lbrace 6\exp \left( -\frac{T_\ell\Delta_i ^{2}}{48M_\sigma ^{2}}\right) +2\exp \left( -\frac{%
				\Delta_i T_\ell\sqrt{T_\ell}}{4M_\sigma} \right)\right\rbrace  .
			\end{equation*}%
			Let $i_{\ell }^{\prime }\triangleq \min \mathcal{A}_{\ell }^{\prime }$ $\geq\lfloor\left\vert
			\mathcal{A}_{\ell }\right\vert /4\rfloor +1 \geq K2^{-1-\ell }.$
			 Then, we have%
			\begin{align}
				\mathbb{E}\left[ N_{\ell }\right] &\leq |\mathcal{A}_{\ell }^{\prime
				}|\max_{i\in \mathcal{A}_{\ell }^{\prime }} \left\lbrace 6\exp \left( -\frac{ 2^{\ell+1}T\Delta_i ^{2}}{192K \lfloor\log_2(K)\rfloor M_\sigma ^{2}}\right) +2\exp \left( -\frac{%
					\Delta_i 2^{\ell+1}T\sqrt{T2^{\ell-1}/\lfloor\log_2(K)\rfloor / K}}{16 K \lfloor \log_2(K) \rfloor M_\sigma} \right)\right\rbrace   \\
				 &\leq|\mathcal{A}_{\ell }^{\prime
				 }| \max_{i\in \mathcal{A}_{\ell }^{\prime }} \left\lbrace 6\exp \left( -\frac{ T\Delta_{i_{\ell }^{\prime }} ^{2}}{192 \lfloor\log_2(K)\rfloor M_\sigma ^{2}i_{\ell }^{\prime }}\right) +2\exp \left( -\frac{%
				 	 T\sqrt{T}\Delta_{i_{\ell }^{\prime }}}{16  \sqrt{K \lfloor \log_2(K) \rfloor^2} M_\sigma i_{\ell }^{\prime }} \right)\right\rbrace .
				\label{ineq:N_L}
			\end{align}%
			By combining (\ref{eqn:N_l2}) and (\ref{ineq:N_L}), we have
			\begin{align*}
				\mathbb{P}(1 \notin \mathcal{A}_{\ell +1}|1\in \mathcal{A}_{\ell })&\leq
				\mathbb{P}\left( N_{\ell }\geq  \left\lfloor\frac{1}{3}(|\mathcal{A}_{\ell
				}^{\prime }|+1)\right\rfloor\right)
				\leq \frac{\mathbb{E}[N_{\ell }]}{ \left\lfloor(|\mathcal{A}_{\ell
					}^{\prime }|+1)/3\right\rfloor} \\
				&\leq4\max_{i\in \mathcal{A}_{\ell }^{\prime }} \left\lbrace 6\exp \left( -\frac{ T\Delta_{i_{\ell }^{\prime }} ^{2}}{192 \lfloor\log_2(K)\rfloor M_\sigma ^{2}i_{\ell }^{\prime }}\right) +2\exp \left( -\frac{%
					T\sqrt{T}\Delta_{i_{\ell }^{\prime }}}{16  \sqrt{K \lfloor \log_2(K) \rfloor^2} M_\sigma i_{\ell }^{\prime }} \right)\right\rbrace .
			\end{align*}%
			Finally, by adding the above inequality for all $\ell $ from $1$ to L, we
			have		
			\begin{align}
				\mathbb{P}(1 \notin &\mathcal{A}_{L+1})\leq \sum_{\ell =1}^{L}\mathbb{P}%
				(1\notin \mathcal{A}_{\ell +1}|1\in \mathcal{A}_{\ell }) \\
				&\leq \lfloor \log_2(K) \rfloor \left\lbrace 24\exp \left( -\frac{ T }{192 \lfloor\log_2(K)\rfloor M_\sigma ^{2}H_{2}(v)}\right) +8\exp \left( -\frac{%
					T\sqrt{T}}{16  \sqrt{K \lfloor \log_2(K) \rfloor^2} M_\sigma H'_{2}(v)} \right)\right\rbrace ,
			\end{align}%
			where $H_{2}(v)=\max_{i>1}\frac{i}{\Delta _{i}^{2}},$ and $H'_{2}(v)=\max_{i>1}\frac{i}{\Delta _{i}}$.
			
			$ \hfill \square$
		\end{proof}

\begin{proof}{Proof of Proposition \ref{thm:ld:queue}.}
We rely on the proof ideas from \cite*{carpentier2016tight}. We construct $K$ hard instances. Let $(p_k)_{2 \leq k \leq K}$ be $(K-1)$ real numbers in $[1/4,1/2)$. Let $p_1=1/2$. We denote $\mathrm{Ber}(p)$ to the Bernoulli distribution with probability $p$. Then, for the $k$-th system of the $i$-th instance, we assume $F_i(x,\xi)$ follows distribution $2 \bar{\sigma} \mathrm{Ber}(p_k)$ if $i\neq k$. Otherwise, if $i=k$, we assume $F_i(x,\xi)$ follows distribution $2 \bar{\sigma} \mathrm{Ber}(1-p_k)$. Then, all systems for all instances have the variance less or equal than $\mathcal{O}$. Following the notions in \cite*{carpentier2016tight}, let $d_k = 1/2-p_k$. Then, for the $i$-instance, the complexity is  $H(i)=1/4\bar{\sigma}^{-2}\sum _{1\leq k \leq K, k\neq i}(d_i+d_k)^{-2}$. This problem is essentially the same as the problem in \cite*{carpentier2016tight}. Then, Theorem 1 in \cite*{carpentier2016tight} gives the desired result. $\hfill \square$
\end{proof}
\subsection{Proof of results in Section \ref{sec:theory:SA}}
\label{ec:proof:SA}
We first collect some useful results.
\begin{definition}[Rademacher complexity]
	Let $\mathcal{F}$ be a family of real-valued functions $f:Z\rightarrow
	\mathbb{R}$ Then, the Rademacher complexity of $\mathcal{F}$ is defined as
	\begin{equation*}
		\mathcal{R}_{n}\left( \mathcal{F}\right) \triangleq \mathbb{E}_{z,\sigma }%
		\left[ \sup_{f\in \mathcal{F}}\left\vert \frac{1}{n}\sum_{i=1}^{n}\sigma
		_{i}f(z_{i})\right\vert \right] ,
	\end{equation*}%
	where $\sigma _{1},\sigma _{2},\ldots ,\sigma _{n}$ are i.i.d with the
	distribution $\mathbb{P}\left( \sigma _{i}=1\right) =\mathbb{P}\left( \sigma
	_{i}=-1\right) =1/2.$
	\label{def:rad}
\end{definition}

\begin{theorem}[Theorem 4.10 in \protect\cite*{wainwright2019high}]
	\label{thm:rademacher}
	
	If $f(z)\in [-B,B],$ we have with probability at least $1-\exp \left(- \frac{%
		n\epsilon^2}{2B^2} \right) $,
	\begin{equation*}
		\sup_{f\in \mathcal{F}}\left| \frac{1}{n}\sum_{i=1}^{n}f(z_{i})-\mathbb{E}
		f(z)\right| \leq 2\mathcal{R}_{n}\left( \mathcal{F}\right) +\epsilon.
	\end{equation*}
\end{theorem}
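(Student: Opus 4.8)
The plan is to prove this two-sided uniform deviation bound by combining a bounded-differences concentration step with a symmetrization step that converts the expected supremum into the Rademacher complexity. Write $Z = \sup_{f\in\mathcal{F}}\left|\frac{1}{n}\sum_{i=1}^n f(z_i) - \mathbb{E}f(z)\right|$ for the quantity of interest, viewed as a function $Z = h(z_1,\dots,z_n)$ of the i.i.d. sample $z_1,\dots,z_n$.

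First I would establish concentration of $Z$ around its mean via McDiarmid's bounded-differences inequality. Since every $f\in\mathcal{F}$ takes values in $[-B,B]$, replacing a single coordinate $z_i$ by any $z_i'$ changes each empirical average $\frac1n\sum_j f(z_j)$ by at most $2B/n$; because a supremum of maps that are each Lipschitz in the sup-norm is itself Lipschitz with the same constant, the functional $h$ satisfies the bounded-differences property with constants $c_i = 2B/n$. McDiarmid's inequality then gives $\mathbb{P}(Z \ge \mathbb{E}[Z] + \epsilon) \le \exp(-2\epsilon^2 / \sum_{i=1}^n c_i^2) = \exp(-n\epsilon^2/(2B^2))$, which is exactly the failure probability stated in the theorem.

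Second I would bound $\mathbb{E}[Z]$ by $2\mathcal{R}_n(\mathcal{F})$ through the classical symmetrization argument. Introduce an independent ghost sample $z_1',\dots,z_n'$ with the same law, write $\mathbb{E}f(z) = \mathbb{E}_{z'}[\frac1n\sum_i f(z_i')]$, and pull this expectation outside the supremum and absolute value using Jensen's inequality, obtaining $\mathbb{E}[Z] \le \mathbb{E}_{z,z'}\sup_f\left|\frac1n\sum_i (f(z_i)-f(z_i'))\right|$. Since each summand $f(z_i)-f(z_i')$ has a symmetric distribution, multiplying it by a Rademacher sign $\sigma_i$ leaves the joint law unchanged; a triangle-inequality split followed by the symmetry in $z$ and $z'$ then yields $\mathbb{E}[Z] \le 2\,\mathbb{E}_{z,\sigma}\sup_f\left|\frac1n\sum_i \sigma_i f(z_i)\right| = 2\mathcal{R}_n(\mathcal{F})$. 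Chaining the two steps, on the high-probability event from McDiarmid we have $Z \le \mathbb{E}[Z] + \epsilon \le 2\mathcal{R}_n(\mathcal{F}) + \epsilon$, as claimed.

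The most delicate step is the symmetrization: the Jensen move relies on recognizing that pulling $\mathbb{E}_{z'}$ out of the convex map $\sup_f|\cdot|$ can only increase the value, and the insertion of Rademacher variables rests on the exchangeability of each pair $(z_i,z_i')$ — one must verify that the joint distribution is invariant under independently swapping coordinates, which is precisely what licenses attaching $\sigma_i$ to each difference. Everything else reduces to the routine bounded-differences estimate and a triangle inequality, so I expect the write-up to be short once these two structural observations are in place.
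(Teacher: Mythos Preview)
Your proposal is correct and is the standard textbook argument: McDiarmid's bounded-differences inequality for the concentration step, followed by the ghost-sample symmetrization to bound $\mathbb{E}[Z]$ by $2\mathcal{R}_n(\mathcal{F})$. There is nothing to compare against in the paper itself, however: the paper does not prove this statement but simply cites it as Theorem~4.10 in \cite*{wainwright2019high}, listing it among ``useful results'' collected for the proof of Theorem~\ref{thm:SA}. Your write-up would in fact reproduce essentially the proof Wainwright gives in his book, so it is entirely appropriate as a self-contained justification of the cited result.
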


\begin{theorem}[Dudley's Theorem, (5.48) in \protect\cite*{wainwright2019high}%
	]
	If $f(z)\in \lbrack -B,B],$ we have a bound for the Rademacher complexity, %
	\label{Dudley}
	\begin{equation*}
		\mathcal{R}_{n}\left( \mathcal{F}\right) \leq  \frac{24\mathcal{J}(\mathcal{F},P) }{%
			\sqrt{n}} ,
	\end{equation*}%
	where $N(t,\mathcal{F}$,$\left\Vert \cdot \right\Vert _{P _{n}})$ is $t$%
	-covering number of set $\mathcal{F}$ and
\end{theorem}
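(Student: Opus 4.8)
The plan is to prove the bound by the \emph{chaining} (Dudley entropy integral) argument applied to the empirical Rademacher process. Condition on a fixed sample $z_1,\dots,z_n$ and write $X_f=\tfrac1n\sum_{i=1}^n\sigma_i f(z_i)$; the whole argument is carried out for the conditional expectation $\mathbb{E}_\sigma$, and at the very end one takes the outer expectation over $z_1,\dots,z_n$, which turns the resulting entropy integral into $\mathcal{J}(\mathcal{F},P)$ by its definition. Two preliminary observations: (i) the increments of $f\mapsto X_f$ are sub-Gaussian in the pseudometric $\|\cdot\|_{P_n}$, since $X_f-X_g=\tfrac1n\sum_i\sigma_i\big(f(z_i)-g(z_i)\big)$ is a sum of independent bounded terms and hence sub-Gaussian with parameter $\|f-g\|_{P_n}^2/n$ by Hoeffding's lemma; (ii) using symmetry of the Rademacher signs, anchoring at a fixed $f_0\in\mathcal{F}$, and the uniform bound $f\in[-B,B]$, the two-sided quantity $\mathbb{E}_\sigma[\sup_f|X_f|]$ is controlled, up to absolute constants, by the anchored supremum $\mathbb{E}_\sigma[\sup_f(X_f-X_{f_0})]$, so it suffices to bound the latter.

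Next I would run the chaining on the anchored supremum. Put $D=\operatorname{diam}_{\|\cdot\|_{P_n}}(\mathcal{F})\le 2B$ and dyadic radii $\delta_j=2^{-j}D$ for $j\ge 0$; for each $j$ fix a minimal $\delta_j$-cover $\mathcal{F}_j$, so $|\mathcal{F}_j|=N(\delta_j,\mathcal{F},\|\cdot\|_{P_n})$, together with a nearest-point map $\pi_j:\mathcal{F}\to\mathcal{F}_j$, taking $\mathcal{F}_0=\{f_0\}$. Telescoping along the chain, $X_f-X_{f_0}=\sum_{j\ge1}\big(X_{\pi_j(f)}-X_{\pi_{j-1}(f)}\big)$, the series converging because $\|\pi_j(f)-f\|_{P_n}\le\delta_j\to0$. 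Taking $\sup_f$ and then $\mathbb{E}_\sigma$ termwise, the $j$-th ``link'' ranges over at most $|\mathcal{F}_j||\mathcal{F}_{j-1}|\le N(\delta_j)^2$ distinct pairs, each difference has $\|\cdot\|_{P_n}$-norm at most $\delta_j+\delta_{j-1}=3\delta_j$ and hence sub-Gaussian parameter at most $(3\delta_j)^2/n$; the finite-maximum sub-Gaussian inequality $\mathbb{E}[\max_{k\le N}|Y_k|]\le\sqrt{2\sigma^2\log(2N)}$ then bounds the $j$-th link by an absolute constant times $\tfrac{\delta_j}{\sqrt n}\sqrt{\log N(\delta_j)}$. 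Summing over $j$ and comparing with the integral by monotonicity of covering numbers — on $[\delta_{j+1},\delta_j]$ the integrand is at least $\sqrt{\log N(\delta_j)}$ and the interval has length $\delta_j/2$ — gives $\sum_{j\ge1}\delta_j\sqrt{\log N(\delta_j)}\le 2\int_0^D\sqrt{\log N(t,\mathcal{F},\|\cdot\|_{P_n})}\,dt$, so $\mathbb{E}_\sigma[\sup_f(X_f-X_{f_0})]\le\frac{C}{\sqrt n}\int_0^D\sqrt{\log N(t,\mathcal{F},\|\cdot\|_{P_n})}\,dt$ for an absolute constant $C$. Finally I would assemble the pieces: extend the upper limit of integration from $D$ to $+\infty$ (the integrand vanishes for $t>D$, since a single point is then a $t$-cover), take $\mathbb{E}$ over $z_1,\dots,z_n$ to replace the integral by $\mathcal{J}(\mathcal{F},P)$, and fold in the constant from step (ii) to reach $\mathcal{R}_n(\mathcal{F})\le\frac{C'}{\sqrt n}\,\mathcal{J}(\mathcal{F},P)$.

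The main obstacle is not conceptual but is pinning the absolute constant down to the stated value $24$: this requires careful bookkeeping of the $3\delta_j$ factor, the $\sqrt2$ and $\log(2N)$ constants in the maximal inequality, the geometric series $\sum_j2^{-j}$, and the symmetrization/anchoring constant from step (ii), together with matching Wainwright's precise normalization of the entropy integral. Many presentations instead truncate the chain at a finite depth $J$, bound the residual term $\mathbb{E}_\sigma[\sup_f(X_f-X_{\pi_J(f)})]$ by an $O(\delta_J)$ quantity, and then let $J\to\infty$ (valid since the entropy integral converges for the function classes considered here) or optimize over the truncation scale; either route works since only the final constant is affected. A secondary, routine point is measurability of the suprema, handled by separability of $\mathcal{F}$ under $\|\cdot\|_{P_n}$ or by passing through finite subnets and a monotone limit.
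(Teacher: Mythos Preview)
The paper does not supply its own proof of this statement: it is quoted verbatim as ``(5.48) in \cite*{wainwright2019high}'' and used as a black-box input to the proof of Theorem~\ref{thm:SA}. So there is no paper proof to compare against, only the textbook reference.

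Your proposal is the standard chaining (Dudley entropy integral) argument and is correct in outline; it is essentially the same route Wainwright takes in Chapter~5. The ingredients --- conditioning on the sample, sub-Gaussian increments of the empirical Rademacher process via Hoeffding, telescoping along dyadic covers, the finite-maximum sub-Gaussian bound, and comparison of the resulting sum to the entropy integral --- are exactly the ones needed, and your handling of the two-sided supremum via symmetry and anchoring is fine (the leftover $\mathbb{E}_\sigma|X_{f_0}|$ term is $O(B/\sqrt{n})$ and is absorbed by the first chaining link). Your caveat about the constant $24$ is apt: different presentations of the same argument yield constants anywhere from $12$ to $32$ depending on normalization choices, and matching Wainwright's $24$ exactly requires tracking his specific conventions for the entropy integral and the maximal inequality. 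For the purposes of this paper that constant is immaterial, since only the $1/\sqrt{n}$ rate and finiteness of $\mathcal{J}(\mathcal{F},P)$ enter the downstream argument.
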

\begin{proof}{Proof of Theorem \ref{thm:SA}}
	Let $g_{\ast }^{i}=\arg \max_{g\in \mathcal{F}_{i}}\mathbb{E}_{P^{i}}\left[
	g(X)\right] .$ Then,%
	\begin{align}
		v_{i}-\hat{v}_{i}^{n_i}&=\mathbb{E}_{P^{i}}%
		\left[ g_*^{i}(X)\right]  -\mathbb{E}_{P_{n_{i}}^{i}}\left[ g_{\ast }^{i}(X)\right] +\mathbb{E}_{P_{n_{i}}^{i}}\left[ g_{\ast }^{i}(X)\right] -\sup_{g\in \mathcal{F}_{i}}\mathbb{E}_{P_{n_{i}}^{i}}%
		\left[ g(X)\right]
		\\
		&\leq \sup_{g\in \mathcal{F}_{i}}\left\vert \mathbb{E}_{P_{n_{i}}^{i}}\left[
		g(X)\right] -\mathbb{E}_{P^{i}}\left[ g(X)\right] \right\vert .
	\end{align}%
	Similarly, we have
	\begin{equation*}
		\hat{v}_{i}^{n_i}-v_{i}\leq \sup_{g\in \mathcal{F}_{i}}\left\vert \mathbb{E}%
		_{P_{n_{i}}^{i}}\left[ g(X)\right] -\mathbb{E}_{P^{i}}\left[ g(X)\right]
		\right\vert .
	\end{equation*}
	
	We define $\mathcal{R}_{n}$ as the Rademacher complexity (Definition \ref{def:rad}). By Theorems \ref{thm:rademacher} and \ref{Dudley}, we have with probability at least $1-\exp
	\left( -\frac{n_i\epsilon ^{2}}{2B^{2}}\right) $,
	\begin{equation*}
		\left\vert v_{i}-\hat{v}_{i}^{n_i}\right\vert \leq \sup_{f\in \mathcal{F}%
			_{i}}\left\vert \frac{1}{n_i}\sum_{i=1}^{n_i}g(z_{i})-\mathbb{E}g(z)\right\vert
		\leq 2\mathcal{R}_{n_i}\left( \mathcal{F}_{i}\right) +\epsilon \leq \frac{48\mathcal{J}(\mathcal{F},P) }{%
			\sqrt{n_i}} +\epsilon,
	\end{equation*}%
	which means
	\begin{equation*}
		\mathbb{P}\left( \hat{v}_{i}^{n_i}\notin \left( v_{i}-\frac{48\mathcal{J}(\mathcal{F},P) }{%
			\sqrt{n_i}}-\epsilon ,v_{i}+\frac{48\mathcal{J}(\mathcal{F},P) }{%
			\sqrt{n_i}} +\epsilon \right) \right) \leq \exp \left( -\frac{n_i\epsilon ^{2}%
		}{2B^{2}}\right) .
	\end{equation*}
	Note that $|%
	\mathcal{A}_{\ell }|=\lfloor K2^{1-\ell } \rfloor$ and $T_{\ell }\geq T2^{\ell -1}/\lfloor\log
	_{2}\left( K\right) \rfloor/K.$ Similar with the proof of Theorem \ref{thm:sim}, we observe
	$$
		\mathbb{P}(\hat{v}_{1}^{T_\ell } \leq \hat{v}_{i}^{T_\ell }|\left\{
		i,1\right\} \subset \mathcal{A}_{\ell })
		\leq \mathbb{P}\left( \hat{v}_{1}^{T_\ell }\leq v_{1}+\Delta _{i}/2\right) +%
		\mathbb{P}\left( \hat{v}_{i}^{T_\ell }\geq v_{i}-\Delta _{i}/2\right) .
$$
	Since $	T\geq \lfloor\log _{2}\left( K\right)\rfloor K\left(  \max_{i\in [ K]}\frac{192}{\Delta
		_{i}}\left(\mathcal{J}(\mathcal{F}_i,P^i)\right) \right)
	^{2},$
	we have
	\[
	\frac{48\mathcal{J}(\mathcal{F},P) }{%
		\sqrt{n_i}} \leq \Delta_i/4.
	\]
	
	 By letting $%
	\epsilon =\Delta _{i}/4,$ we have
	\begin{equation}
		\mathbb{P}(\hat{v}_{1}^{T_\ell }\geq \hat{v}_{i}^{T_\ell }|\left\{ i,1\right\}
		\subset \mathcal{A}_{\ell })\leq 2\exp \left( -\frac{T_{\ell }\Delta _{i}^{2}%
		}{32B^{2}}\right) .
	\label{eq:single_wrong_sa}
	\end{equation}%
		Next, we define a new set
	\begin{equation*}
		\mathcal{A}_{\ell }^{\prime }=\left\{ i\in \mathcal{A}_{\ell }\left\vert
		\sum_{j\leq i}\mathbb{I}\left\{ j\in A\right\} >\lfloor \left\vert \mathcal{A}%
		_{\ell }\right\vert /4\rfloor \right. \right\} ,
	\end{equation*}%
	to be the bottom (ordered by true value) three-quarters of the systems in round $\ell .$ Then, if the
	optimal system is eliminated in this round, we must have
	\begin{equation}
		N_{\ell }=\sum_{i\in \mathcal{A}_{\ell }^{\prime }}\mathbb{I}\{\hat{v}%
		_{1}^{T_\ell }\geq \hat{v}_{i}^{T_\ell }\}\geq \left\lfloor\frac{1}{3}(|\mathcal{A}_{\ell
		}^{\prime }|+1)\right\rfloor.  \label{eqn:N_lSA}
	\end{equation}%
	On the other hand, by applying the union bound to the bound (\ref{eq:single_wrong_sa}),
	\begin{equation*}
		\mathbb{E}\left[ N_{\ell }\right] \leq 2|\mathcal{A}_{\ell }^{\prime
		}|\max_{i\in \mathcal{A}_{\ell }^{\prime }}\exp \left( -\frac{T2^{\ell
				-1}\Delta _{i}^{2}}{32B^{2}
			K}\right) .
	\end{equation*}%
		Let $i_{\ell }^{\prime }\triangleq \min \mathcal{A}_{\ell }^{\prime }$ $\geq\lfloor\left\vert
	\mathcal{A}_{\ell }\right\vert /4\rfloor +1 \geq K2^{-1-\ell }.$
	Then, we have%
	\begin{equation}
		\mathbb{E}\left[ N_{\ell }\right] \leq 2|\mathcal{A}_{\ell }^{\prime
		}|\max_{i\in \mathcal{A}_{\ell }^{\prime }}\exp \left( -\frac{T2^{\ell
				+1}\Delta _{i}^{2}}{128B^{2}\lfloor\log _{2}\left( K\right)\rfloor K}\right) \leq 2|%
		\mathcal{A}_{\ell }^{\prime }|\exp \left( -\frac{T\Delta _{i_{\ell }^{\prime
			}}^{2}}{128B^{2}\lfloor\log _{2}\left( K\right)\rfloor i_{\ell }^{\prime }}\right) .
		\label{ineq:N_LSA}
	\end{equation}%
	By combining (\ref{eqn:N_lSA}) and (\ref{ineq:N_LSA}), we have
	$$
		\mathbb{P}(1 \notin \mathcal{A}_{\ell +1}|1\in \mathcal{A}_{\ell })\leq
		\mathbb{P}\left (N_{\ell }\geq \frac{1}{3}|\mathcal{A}_{\ell }^{\prime }|\right ) \leq
		8\exp
		\left( -\frac{T\Delta _{i_{\ell }^{\prime }}^{2}}{128B^{2}\lfloor \log _{2}\left(
			K\right)\rfloor i_{\ell }^{\prime }}\right) .
	$$
	Finally, by adding the above inequality for all $\ell $ from $1$ to L, we
	have
$$
		\mathbb{P}(1 \notin \mathcal{A}_{L+1})\leq \sum_{\ell =1}^{L}\mathbb{P}%
		(1\notin \mathcal{A}_{\ell +1}|1\in \mathcal{A}_{\ell })
		\leq8\lfloor\log _{2}(K)\rfloor\exp \left( -\frac{N}{128B^{2}\lfloor \log _{2}\left( K\right)\rfloor
			H_{2}(G)}\right) ,
$$
	where $H_{2}(G)=\max_{i>1}\frac{i}{\Delta _{i}^{2}}.$

	$\square$
\end{proof}
\subsection{Proof of results in Section \ref{sec:app}}
\label{proof:lma:ocba}
\begin{proof}
{Proof of Lemma \ref{lma:ocba_d}} : We consider $\mathcal{X} \subset \mathbb{R}$. Without loss of generality, we assume  $x_1 <x_2<\ldots<x_d$. By the pigeonhole principle, there exists a pair of adjacent discretization points whose distance is no less than $D_\mathcal{X}/(d+1)$  Then,  we assume $x_{j+1}-x_j \geq D_\mathcal{X}/(d+1)$ Then, we construct $K$ systems, where the first and the second systems are identical on the region $((-\infty,x_j]\cup [x_{j+1},+\infty))\cap \mathcal{X}$. Therefore, for $x \in ((-\infty,x_j]\cup [x_{j+1},+\infty))\cap \mathcal{X}$, we assume 
\[
F_i(x,\xi) = -M|x - (x_j + x_{j+1})/2| +\xi, \text{ for } i =1,2,
\]
where $\xi$ is a common random variable with mean zero and variance $\sigma_F^2$. And for $x\in (x_{j-1},x_j)$, we assume 
\[
F_1(x,\xi) = -M|x - (x_j + x_{j+1})/2| +\xi \text{ and } F_2(x,\xi) = -M| (x_j - x_{j+1})/2|.
\]
In this construction, $F_1$ $F_2$ are concave. $v_3,v_4,\ldots,v_K$ could be arbitrary small such that for any $\varepsilon >0$, the complexity $$H <  (M| (x_j - x_{j+1})/2|)^{-2} + \varepsilon \leq \frac{1}{(MD_\mathcal{X}/(d+1) /2)^2}+ \varepsilon.$$
If $d + 1 <\frac{MD_{\mathcal{X}}\sqrt{a}}{2}$, we have $H \leq a$ by taking $\varepsilon$ sufficiently small, which means the constructed instance is in $\mathcal{O}_{M,\mathcal{X},K,a}$. Since the first and the second systems are identical at the discretization set, it is impossible for Algorithm \ref{alg:OCBA} to correctly select the best system.
\end{proof}

\section{Numerical Results}
In this section, we provide additional plots of correct selection regarding different $K$'s, as well as the comparison of optimality gaps.
\newpage
 \subsection{Optimal staffing and pricing in Queueing simulation optimization}
\label{ec:numerical:queue}
This subsection provides further numerical results for $K=16, 24, 32, 40, 48, 128$.
\begin{figure}[htbh]
	\centering
	\subfigure[$K=16$]{
		\label{K=16:q2} \includegraphics[width=1.81in]{PTS_K=16_Queueing.eps}}
	\subfigure[$K=24$]{
		\label{K=24:q2} \includegraphics[width=1.81in]{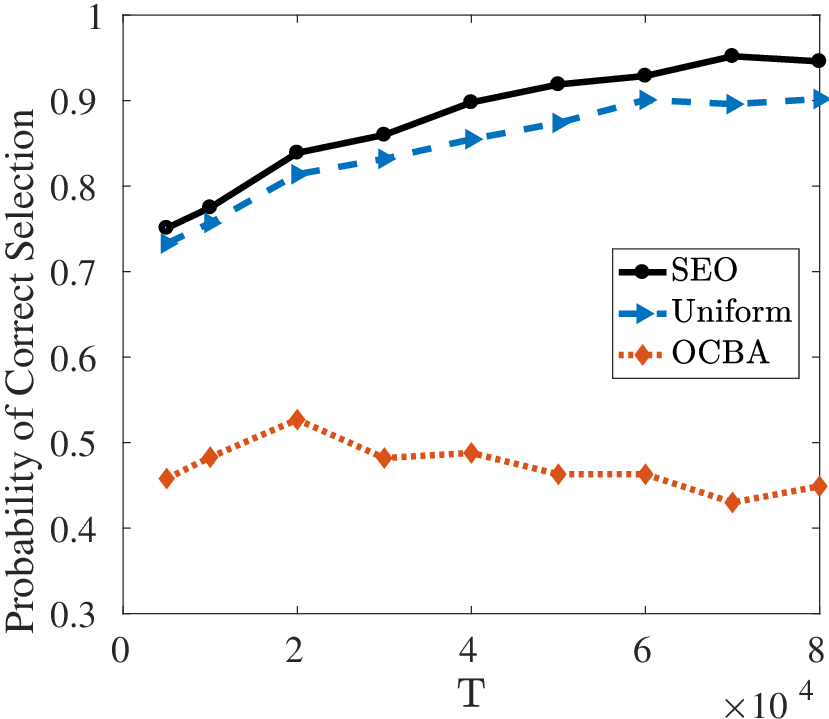}}
	\subfigure[$K=32$]{
		\label{K=32:q2} \includegraphics[width=1.81in]{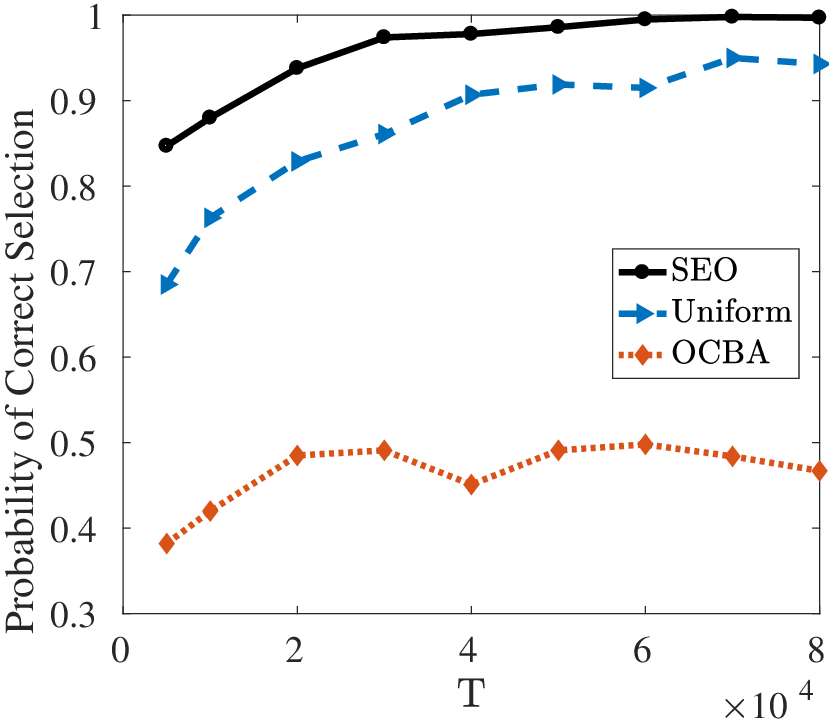}}
	\subfigure[$K=40$]{
		\label{K=40:q2} \includegraphics[width=1.81in]{PTS_K=40_Queueing.eps}}
\subfigure[$K=48$]{
		\label{K=48:q2} \includegraphics[width=1.81in]{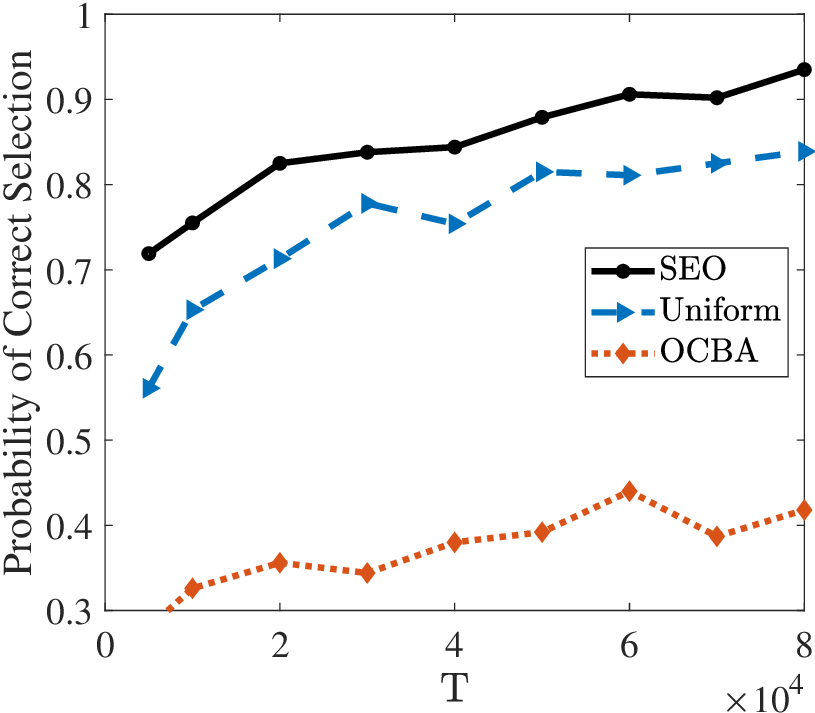}}
	\subfigure[$K=128$]{
		\label{K=128:q2} \includegraphics[width=1.81in]{PTS_K=128_Queueing.eps}}
	\caption{The comparison between SEO, uniform sampling and OCBA in the optimal staffing and pricing problem.}
	\label{fig:queueing2}
\end{figure}

\newpage
 \subsection{Optimal Dosage in the selection of the best drug}
\label{ec:numerical:dose}

We provide the comparison of probability of correct selection and the mean optimality gaps in this subsection. The experiment results show the same pattern we discussed in Section 5. 
\begin{figure}[htbh]
	\centering
	\subfigure[$K=16$]{
		\label{K=16:d2} \includegraphics[width=1.71in]{PTS_K=16_dose.eps}}
	\subfigure[$K=24$]{
		\label{K=24:d2} \includegraphics[width=1.71in]{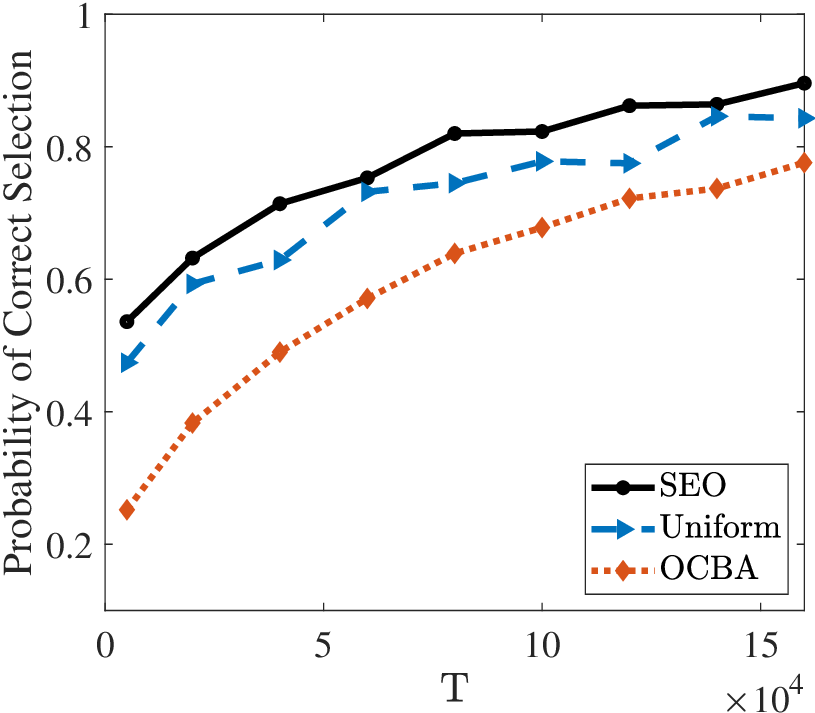}}
	\subfigure[$K=32$]{
		\label{K=32:d2} \includegraphics[width=1.71in]{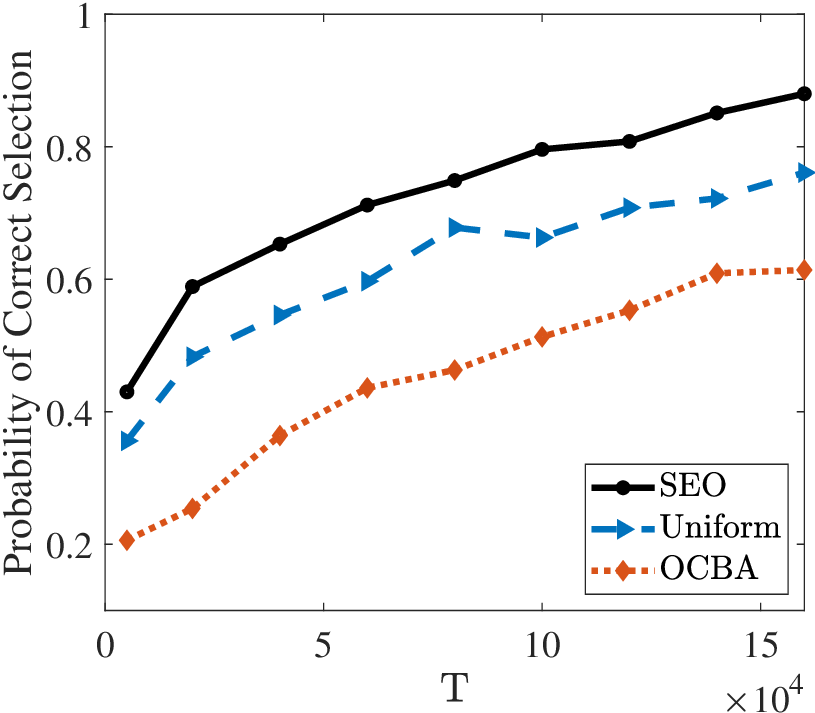}}
	\subfigure[$K=40$]{
		\label{K=40:d2} \includegraphics[width=1.71in]{PTS_K=40_dose.eps}}
\subfigure[$K=48$]{
		\label{K=48:d2} \includegraphics[width=1.71in]{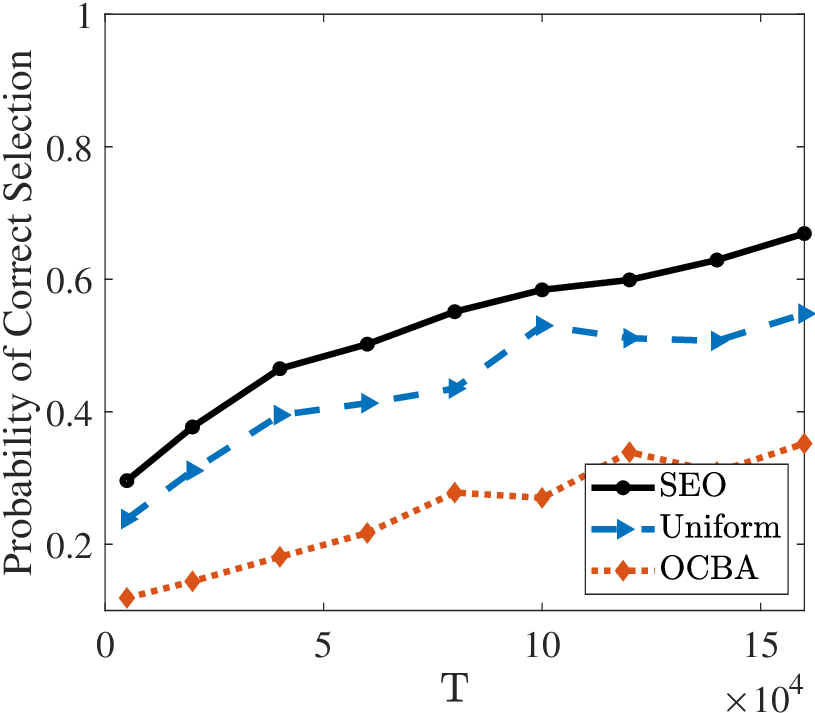}}
	\subfigure[$K=128$]{
		\label{K=128:d2} \includegraphics[width=1.71in]{PTS_K=128_dose.eps}}
	\caption{The comparison between SEO, uniform sampling and OCBA in the optimal dosage problem.}
	\label{fig:dose2}
\end{figure}

\begin{figure}[htbh]
	\centering
	\subfigure[$K=16$]{
		\label{K=16:d2} \includegraphics[width=1.71in]{ pic/g_PTS_K=16_dosage.eps}}
	\subfigure[$K=24$]{
		\label{K=24:d2} \includegraphics[width=1.71in]{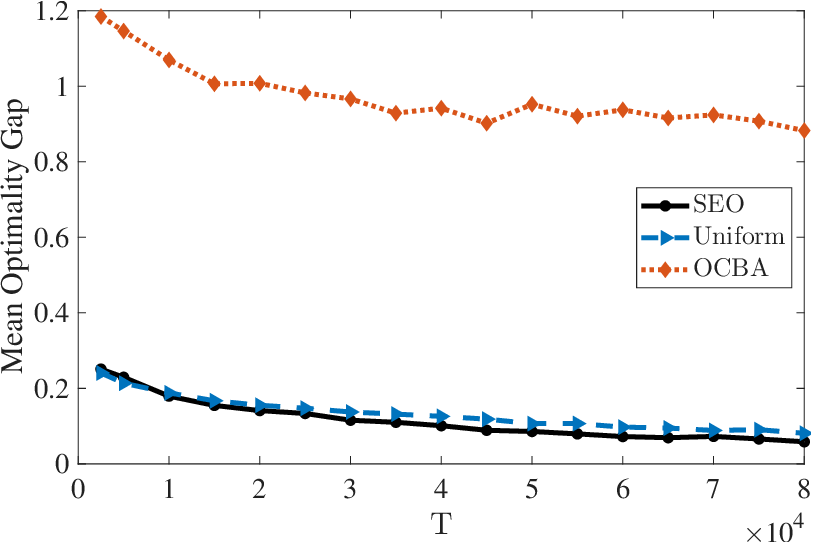}}
	\subfigure[$K=32$]{
		\label{K=32:d2} \includegraphics[width=1.71in]{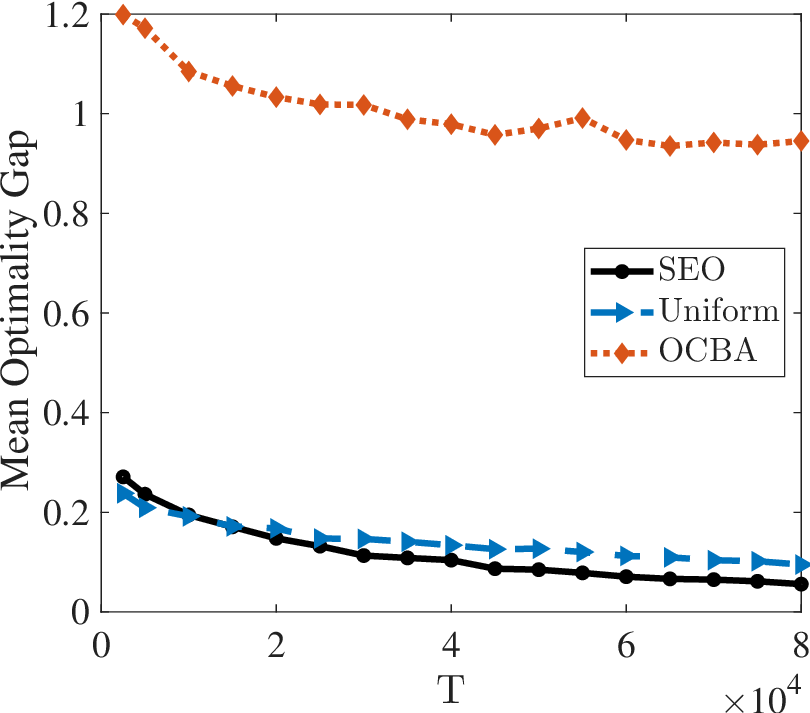}}
	\subfigure[$K=40$]{
		\label{K=40:d2} \includegraphics[width=1.71in]{g_PTS_K=40_dosage.eps}}
\subfigure[$K=48$]{
		\label{K=48:d2} \includegraphics[width=1.71in]{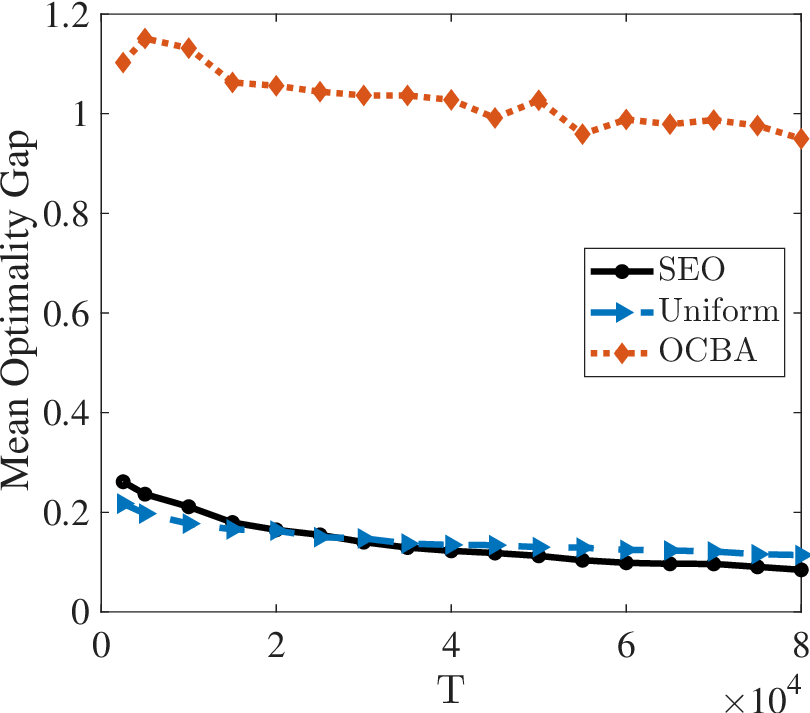}}
	\subfigure[$K=128$]{
		\label{K=128:d2} \includegraphics[width=1.71in]{g_PTS_K=128_dosage.eps}}
	\caption{The comparison between optimality gaps of SEO, uniform sampling, and OCBA in the optimal dosage problem.}
	\label{fig:dose3}
\end{figure}
\newpage
 \subsection{Newsvendor Problem in the selection of the best product}
\label{ec:numerical:newsvendor}
\begin{figure}[htbh]
	\centering
	\subfigure[$K=16$]{
		\label{K=16:n2} \includegraphics[width=1.71in]{PTS_K=16_newsvendor.eps}}
	\subfigure[$K=24$]{
		\label{K=24:n2} \includegraphics[width=1.71in]{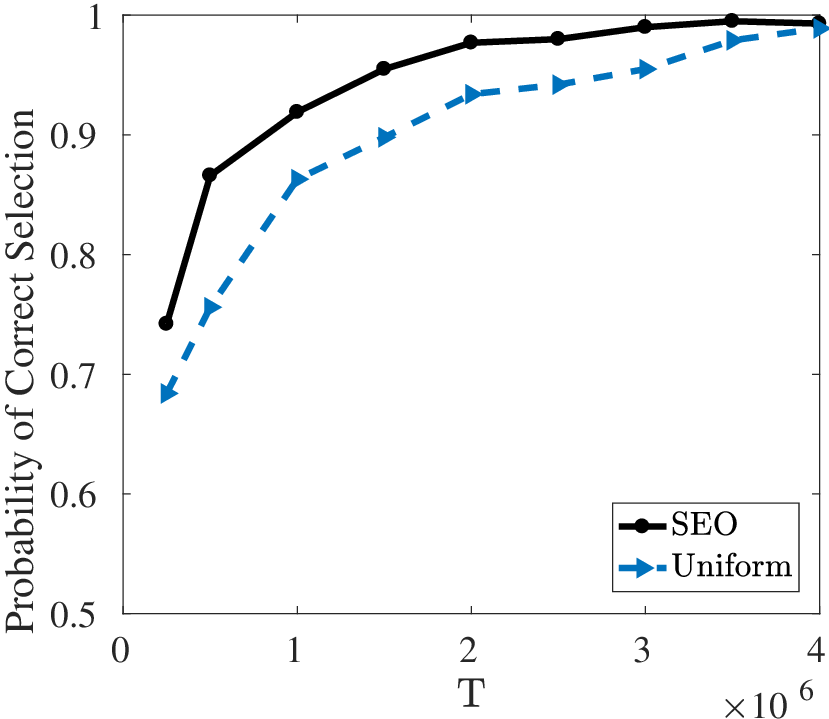}}
	\subfigure[$K=32$]{
		\label{K=32:n2} \includegraphics[width=1.71in]{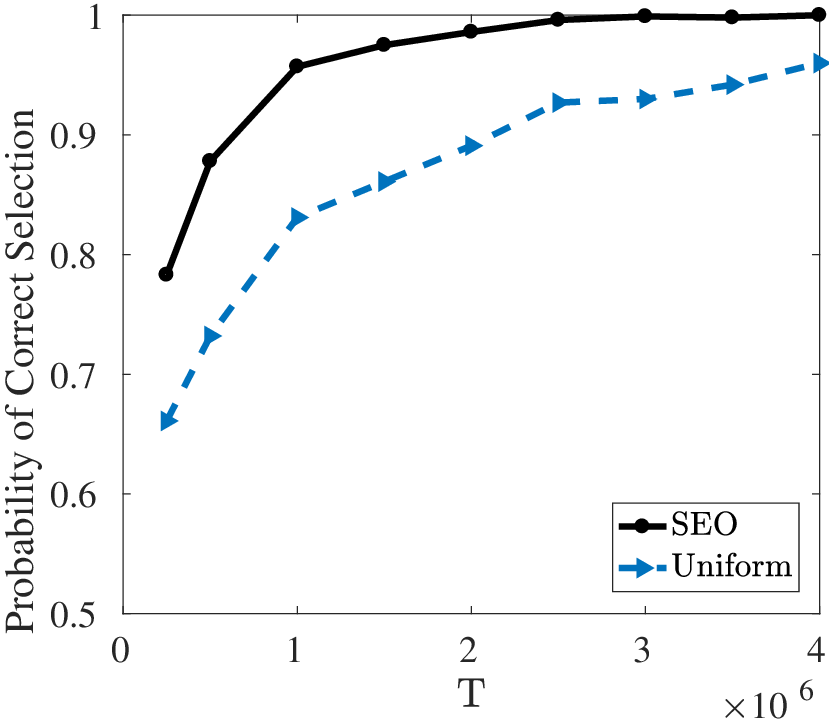}}
	\subfigure[$K=40$]{
		\label{K=40:n2} \includegraphics[width=1.71in]{PTS_K=40_newsvendor.eps}}
	\subfigure[$K=64$]{
		\label{K=64:n2} \includegraphics[width=1.71in]{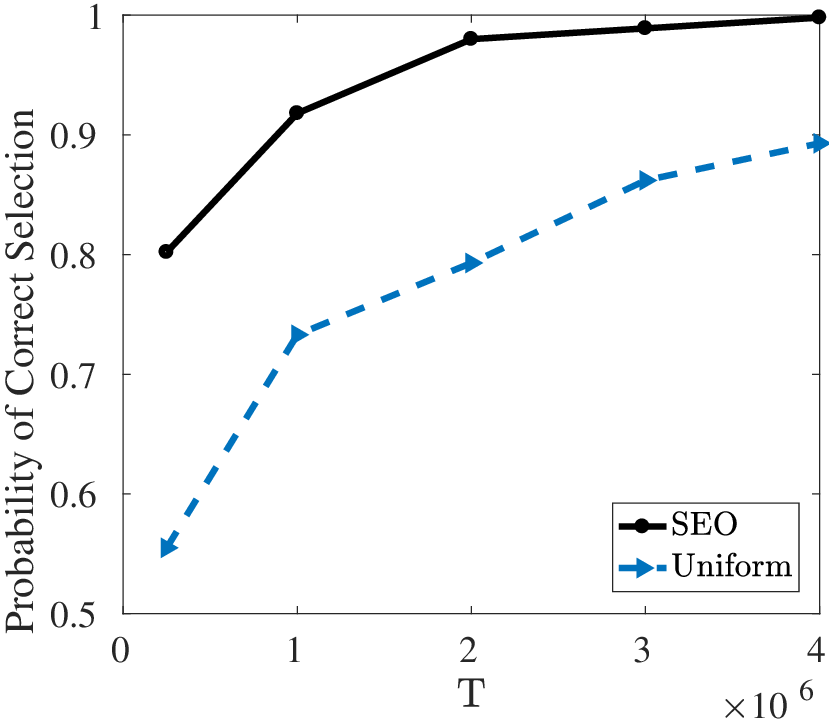}}
	\subfigure[$K=128$]{
		\label{K=128:n2} \includegraphics[width=1.71in]{PTS_K=128_newsvendor.eps}}
	\caption{The comparison between SEO and uniform sampling in the newsvendor problem.}
	\label{fig:news2}
\end{figure}

\begin{figure}[htbh]
	\centering
	\subfigure[$K=16$]{
		\label{K=16:n2} \includegraphics[width=1.81in]{g_PTS_K=16_newsvendor.eps}}
	\subfigure[$K=24$]{
		\label{K=24:n2} \includegraphics[width=1.81in]{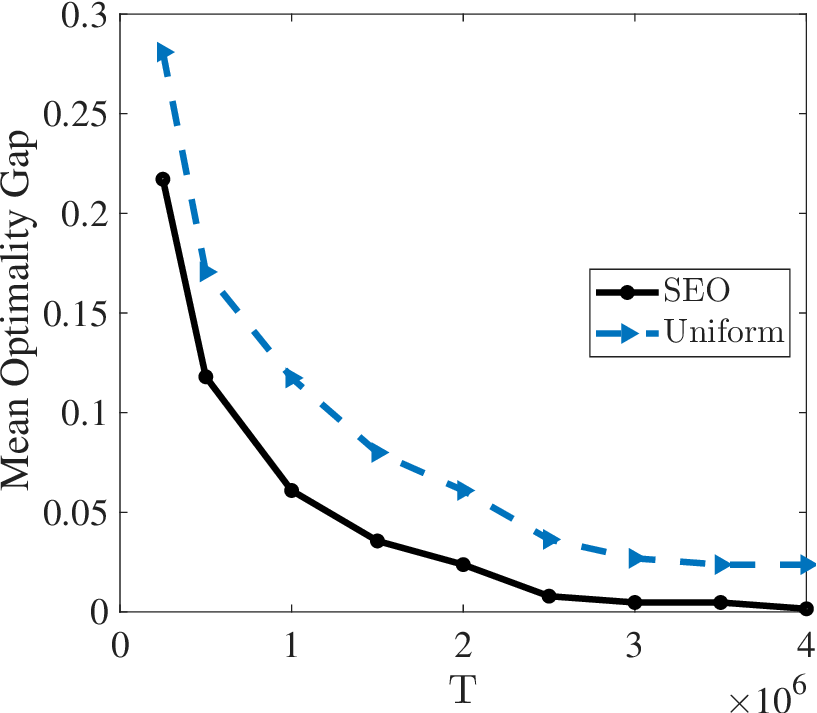}}
	\subfigure[$K=32$]{
		\label{K=32:n2} \includegraphics[width=1.81in]{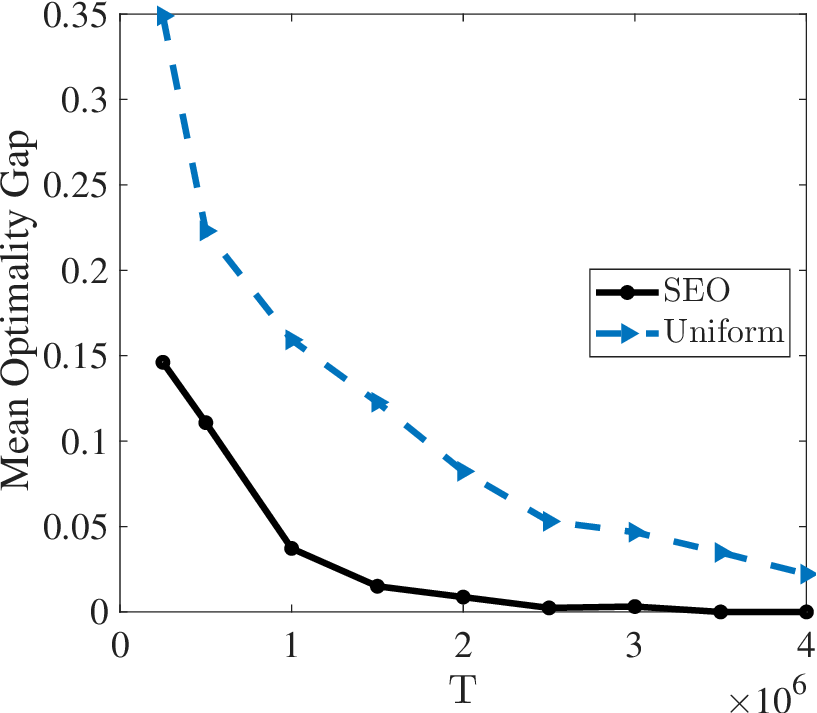}}
	\subfigure[$K=40$]{
		\label{K=40:n2} \includegraphics[width=1.91in]{g_PTS_K=40_newsvendor.eps}}
	\subfigure[$K=48$]{
		\label{K=48:n2} \includegraphics[width=1.91in]{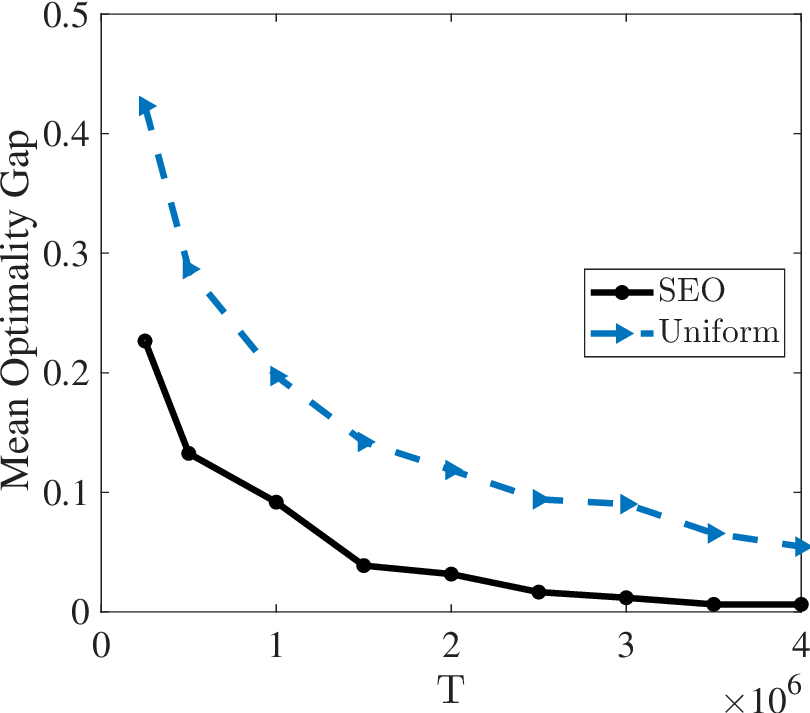}}
	\subfigure[$K=128$]{
		\label{K=128:n2} \includegraphics[width=1.81in]{g_PTS_K=128_newsvendor.eps}}
	\caption{The comparison between optimality gaps of SEO and uniform sampling in the newsvendor problem.}
	\label{fig:news3}
\end{figure}

\end{document}